%% file: main.tex
\documentclass[11pt,letterpaper]{article}

\input{packages.tex}
\input{environments.tex}

\input{macros.tex}

\newif\ifanonymous
\anonymousfalse

\newcommand{\papertitle}{Distributed Lower Bounds via Automatic Self-Reduction}

\begin{document}

% Cover
\begin{flushleft}
    \huge\bfseries
    \papertitle
\end{flushleft}
\smallskip

\newcommand{\myaff}[1]{\,$\cdot$\, {\small #1}\par\medskip}

\newenvironment{myabstract}
{\list{}{\listparindent 1.5em
        \itemindent    \listparindent
        \leftmargin    0cm
        \rightmargin   0cm
        \parsep        0pt}%
    \item\relax}
{\endlist}

\newenvironment{mycover}
{\list{}{\listparindent 0pt
        \itemindent    \listparindent
        \leftmargin    0cm
        \rightmargin   1.5cm
        \parsep        0pt}%
    \raggedright
    \item\relax}
{\endlist}

\begin{mycover}
\ifanonymous
    \textbf{Anonymous Authors}\par
\else
    \textbf{Alkida Balliu}
    \myaff{Gran Sasso Science Institute}

     \textbf{Francesco d'Amore}
    \myaff{Gran Sasso Science Institute}

     \textbf{Dennis Olivetti}
    \myaff{Gran Sasso Science Institute}
\fi
\bigskip
\end{mycover}

\begin{myabstract}
\noindent\textbf{Abstract.}
The development of round elimination into a general-purpose technique [PODC 2019] marked a turning point in our understanding of the hardness of many graph problems in the distributed setting and led to several breakthrough results.
However, the round elimination technique seems unable to yield randomized lower bounds of $\omega(\log \log n)$ rounds as a function of the number $n$ of nodes.

Very recently, Khoury and Schild [FOCS 2025] introduced a new technique called round elimination via \emph{self-reduction}, which bypasses the limitations of classical round elimination. Using this approach, the authors show that any randomized algorithm for maximal matching requires $\Omega(\sqrt{\log n})$ rounds in the LOCAL model.
Their elegant technique is, in some respects, similar to classical round elimination while being fundamentally different in others. However, it is tailored specifically to maximal matching rather than being applicable to a broad class of problems.

In this paper, we show that \emph{self-reduction} is, in fact, a \emph{special case} of classical round elimination, thereby turning it into a general-purpose approach.
In particular, we introduce a new way to measure the error of an algorithm and show that, under this new measure, classical round elimination can indeed yield $\omega(\log \log n)$ randomized lower bounds. More specifically, we identify a large class of problems for which this improvement is entirely black-box: once a problem is shown to belong to the class, stronger randomized lower bounds follow automatically from the classical round-elimination framework.
As an application, we prove $\Omega(\sqrt{\log n})$ randomized lower bounds for a range of graph problems, namely, maximal matching on regular \emph{$2$-colored} graphs, $\frac{1}{k}$-integral matching, and maximal $H$-packing.

\end{myabstract}

\thispagestyle{empty}
\setcounter{page}{0}
\newpage

\input{trunk/notes.tex}

\section*{AI Methodology}

All the main ideas in the paper (the new measure of error, the new definition of the $(T-1)$-round algorithm, the considered class of problems, and the applications) were discovered by humans.

A preliminary version of the proof of \Cref{lem:onestep-singlenode-simplified} was discovered by humans, but LLMs (ChatGPT 5.6) were able to provide a simpler proof, which we then further simplified and used.
LLMs then helped to extend the idea of \Cref{lem:onestep-singlenode-simplified} to the $b$-dominant case, although a similar idea had already appeared in \cite[Lemma 3.8]{balliu-casagrande-etal-2026-new-hardness-results-for}.

The paper was entirely written by humans; LLMs were used to fix typos, grammatical errors, small mathematical flaws, and to create figures.

\ifanonymous
\else
\section*{Acknowledgments}
This work was partially supported by MUR, Fondo Italiano per la Scienza (FIS3), Award number: D53C25002470001: FIS-2024-03606 - A Robust Theory of Distributed Computation - CUP: D53C25002470001.
Francesco d'Amore was supported by the project Decreto MUR n. 47/2025,
CUP: D13C25000750001.
\fi

\printbibliography

\appendix

\section{Existence of high-girth graphs with the desired properties}

In this section, we prove the following lemmas.

\begin{lemma}\label{lem:app:high-girth-independent}
    There exist positive constants \(\alpha, \Delta_0, \varepsilon\) for which the following statement holds.
    For every sufficiently large \(n\), and every \(\Delta\) satisfying that \(\Delta_0 \le \Delta \le n^{1/4}\) and that \(n\Delta\) is even, there exists a \(\Delta\)-regular graph \(G\) of \(n\) nodes that satisfies the following properties:
    \begin{enumerate}
        \item The girth of \(G\) is at least \(\varepsilon \log_\Delta n\).
        \item The independence number of \(G\) is at most \(\alpha (n/\Delta) \ln \Delta\).
    \end{enumerate}
\end{lemma}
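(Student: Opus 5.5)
We plan to use the random $\Delta$-regular graph in the configuration model, bound the expected number of short cycles and the probability of a large independent set, and then delete the few short cycles by local surgery. Throughout, let $g = \varepsilon \log_\Delta n$ and $k = \alpha (n/\Delta)\ln \Delta$.

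\textbf{Step 1: short cycles.} Sample a uniform perfect matching on $n\Delta$ half-edges, which gives a random $\Delta$-regular multigraph. The expected number of cycles of length $\ell$, including loops and double edges, is at most $(\Delta-1)^{\ell}/(2\ell)\cdot(1+o(1))$. This remains valid for $\ell \le g$ because $\ell^2 \ll n\Delta$. Summing over $\ell<g$ gives at most $O(\Delta^{g}) = O(n^{\varepsilon})$ short cycles in expectation. By Markov's inequality, with probability at least $1/2$ there are at most $C n^{\varepsilon}$ of them. Choosing $\varepsilon$ small, say $\varepsilon \le 1/4$, makes this $o(n/\Delta)$, since $\Delta \le n^{1/4}$.

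\textbf{Step 2: independence number.} This is the standard first-moment bound (Bollobás; Frieze--Łuczak). Fix a set $S$ of size $k$ and count configurations in which no pairing joins two half-edges of $S$. The $k\Delta$ half-edges of $S$ must all be matched outside $S$, so
\[
\Pr[S \text{ independent}] \le \prod_{i=0}^{k\Delta-1}\frac{(n-k)\Delta - i}{n\Delta - 2i - 1} \le \exp\!\bigl(-c\, k^2\Delta/n\bigr)
\]
for $k \le n/2$. A union bound over $\binom{n}{k} \le (en/k)^k$ sets gives
\[
\exp\!\Bigl(k\bigl(\ln(e\Delta/(\alpha\ln\Delta)) - c\,\alpha\ln\Delta\bigr)\Bigr) = o(1)
\]
once $\alpha$ is a large constant and $\Delta \ge \Delta_0$. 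The constant bookkeeping in the product bound is routine.

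\textbf{Step 3: surgery.} Fix an outcome with few short cycles and $\alpha(G)<k$. Delete one edge from each short cycle. This touches $m = O(n^{\varepsilon})$ edges and leaves at most $2m$ vertices with degree deficit.

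The main obstacle is restoring regularity without creating new short cycles. The deficient endpoints must be re-paired by new edges, or by swaps with far-away edges, whose endpoints are at distance at least $g$ from each other in the current graph. A ball of radius $g$ contains at most $\Delta^{g} = n^{\varepsilon}$ vertices, and only $O(n^{\varepsilon})$ repairs are needed. So I would perform each repair greedily by a double swap. Pick an edge $xy$ whose endpoints are far from both deficient endpoints $u$ and $v$ and from all previously modified vertices. Replace it with the edges $ux$ and $vy$. Such an edge exists because the excluded region has size $O(n^{2\varepsilon}\Delta) = o(n)$ when $\varepsilon$ is small. Any new cycle must use a new edge and would then connect two far-apart vertices, so it has length at least $g$.

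Each swap changes the independence number by at most $O(1)$, because adding edges only decreases it and removing $O(n^{\varepsilon})$ edges increases it by at most that many. Hence the final $\alpha(G) \le k + O(n^{\varepsilon}) \le 2k$, and we absorb the factor 2 into $\alpha$.

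The parity condition on $n\Delta$ is needed for the configuration model. Simplicity also holds after the surgery, because loops and multi-edges are short cycles and were removed.
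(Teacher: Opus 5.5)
There is a genuine gap: the final claim that the graph is simple fails in part of the lemma's range. The lemma allows \(\Delta\) up to \(n^{1/4}\), so \(\log_\Delta n\) can be as small as \(4\). With your choice \(\varepsilon\le 1/4\) you then have \(g=\varepsilon\log_\Delta n\le 1\). No cycle, not even a loop, counts as short, so the surgery does nothing. You output the raw configuration multigraph, which has in expectation about \(\Delta/2\) loops and \(\Delta^2/4\) pairs of parallel edges. More generally, whenever \(g\le 2\), double edges are not cycles of length \(<g\). So the sentence ``loops and multi-edges are short cycles and were removed'' is false there. The paper avoids this by sampling a uniformly random \emph{simple} regular graph and treating the regime \(\gamma\log_\Delta n<4\) separately, since any simple graph already has girth at least \(3\).

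The natural fix is to call a cycle short if its length is less than \(g'=\max\{3,\lceil\varepsilon\log_\Delta n\rceil\}\), as the paper does in its bipartite lemma. Step 1 then gives \(O(\Delta^2+n^{\varepsilon})\) short cycles. This is still \(o((n/\Delta)\ln\Delta)\) because \(\Delta^2\le n^{1/2}\).

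Step 3 must then change as well: drop the requirement that \(xy\) be far from \emph{all previously modified vertices}. At \(g'=3\) and \(\Delta=n^{1/4}\) there are \(\Theta(\Delta^2)\) repairs. Each one forbids radius-\(2\) balls of \(\Theta(\Delta^2)\) vertices, which is of order \(\Delta^4=n\) vertices in total. Your counting then no longer guarantees that a suitable edge exists.

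That requirement is unnecessary anyway. Keep the invariant that the current graph \(F\) has no cycle shorter than \(g'\), and require only that \(\{x,y\}\) be at distance at least \(g'\) from \(\{u,v\}\). Then only \(O(\Delta^{g'})=O(\Delta^3+\Delta n^{\varepsilon})=o(n\Delta)\) edges are excluded per repair.

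Your justification ``connects two far-apart vertices'' also misses one case. A new cycle might use both \(ux\) and \(vy\) and split into a \(u\)--\(v\) path and an \(x\)--\(y\) path. Here \(u\) and \(v\) are close, since they lay on a short cycle, so farness from \(\{u,v\}\) does not help. What you need is the invariant: it forces every \(x\)--\(y\) path in \(F-xy\) to have length at least \(g'-1\).

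With these fixes your proof goes through. It follows a different and more elementary route than the paper's: you use the configuration model with self-contained first-moment bounds, where the paper cites Frieze--{\L}uczak and McKay--Wormald--Wysocka, and then you apply the same Alon-type switching.
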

\begin{proof}
    Let \(\gamma > 0\) be a sufficiently-small auxiliary constant, and define \(g = \floor{\gamma \log_\Delta n}\).
    We employ the probabilistic method on random regular simple graphs, and prove that with positive probability a graph has both claimed properties, for \(\varepsilon = \gamma/2\) and \(\alpha = 2C\) for a suitable constant \(C > 0\).

    We sample uniformly at random a graph \(G'\) of \(n\) nodes that is simple and \(\Delta\)-regular.
    By \cite[main theorem]{frieze-luczak-1992}, there exist positive constants \(C,\Delta_0,N\) such that, for every \(n \ge N\) and every \(\Delta_0 \le \Delta \le n^{1/4}\) with \(n\Delta\) even, it holds that, with probability \(1 - o(1)\), the independence number of \(G'\) is at most \(C (n/\Delta) \ln \Delta\).
    By increasing \(\Delta_0\) if necessary, assume that \(\Delta_0\ge3\).

    We have two cases.
    First, suppose that \(\gamma \log_\Delta n < 4\).
    Since the bound on the independence number holds with positive probability, there exists a realization \(G''\) of \(G'\) that satisfies the bound.
    Since the graph is simple, \(G''\) has girth at least \(3\), while \((\gamma/2)\log_\Delta n < 2\).
    Thus, \(G''\) already has girth at least \((\gamma/2)\log_\Delta n\), and the claim follows.
    In the second case, \(\gamma \log_\Delta n \ge 4\).
    In particular, \(g \ge \max\{3, (\gamma/2)\log_\Delta n\}\).

    We now count the number of small cycles in \(G'\).
    Let \(X_r\) denote the number of cycles of length exactly \(r\), and define \(X_{<g} = \sum_{r = 3}^{g-1}X_r\).
    Observe that \(X_{<g}\) counts the number of cycles of length strictly less than \(g\).

    Let us define the quantity
    \[
        R_r = \floor*{\max\left\{\frac{(\Delta-1)^r}{r}, \log n\right\}}.
    \]
    By \cite[Lemma~1]{mckay-wormald-wysocka-2004}, applied with the set of cycle lengths \(\{3,\ldots,g-1\}\), under the hypothesis \((\Delta-1)^{2g-1} = o(n)\) stated in \cite[Eq.~(1.1)]{mckay-wormald-wysocka-2004}, it holds with probability \(1-o(1)\) that \(X_r \le R_r\) for every \(3 \le r < g\).
    In our case, we have
    \[
        (\Delta - 1)^{2g - 1} \le \Delta^{2g} \le n^{2\gamma} = o(n).
    \]
    Moreover,
    \[
        \sum_{r = 3}^{g-1} \frac{(\Delta-1)^r}{r} \le \sum_{r = 3}^{g-1}(\Delta-1)^r \le (\Delta-1)^g \le n^\gamma,
    \]
    where in the second inequality we used \(\Delta \ge 3\).
    Consequently, with probability \(1-o(1)\), for sufficiently large \(n\), it holds that
    \[
        X_{<g} \le \sum_{r = 3}^{g-1}R_r = O\left(g \log n + \sum_{r = 3}^{g-1}\frac{(\Delta-1)^r}{r}\right) = O\left(g\log n + \Delta^g\right) = O\left(g \log n + n^\gamma\right) \le n^{4\gamma}.
    \]
    Since the bound on the number of cycles and the bound on the independence number both hold with probability \(1 - o(1)\), there is positive probability that \(G'\) realizes both and, hence, a graph \(G''\) satisfying both bounds exists.

    We now want to remove these short cycles by keeping the graph \(\Delta\)-regular.

    \paragraph{The final graph \(G\).}
    To construct the final graph \(G\), we follow the degree-preserving edge-switching procedure used by Alon~\cite[Lemma~2.1]{alon-2010-on-constant-time-approximation-of-parameters}.
    This procedure is iterative.
    The initial graph is \(F = G''\).
    As long as \(F\) contains a cycle \(K\) of length strictly less than \(g\), we pick an edge \(xy\) in \(K\).
    We then choose another edge \(uv\) in \(F\) such that
    \begin{enumerate}
        \item[(a)] The distance (in \(F\)) between \(\{u,v\}\) and \(K\) is at least \(g\).
        \item[(b)] The edge \(uv\) does not belong to any cycle of \(F\) of length strictly less than \(g\).
    \end{enumerate}
    We define the new graph \(F'\) by deleting edges \(xy\) and \(uv\), while we add edges \(xv\) and \(yu\).
    This operation preserves \(\Delta\)-regularity and simplicity. Indeed, property (a) implies that \(u,v\notin V(K)\) and that neither \(xv\) nor \(yu\) is already an edge of \(F\).
    We now show that such an edge always exists and that this operation strictly decreases the number of cycles of length less than \(g\).
    Then we set \(F = F'\) and we repeat the procedure until no short cycles exist.

    \paragraph{Existence of \(uv\).}
    We prove this fact inductively.
    The current graph contains at most \(n^{4\gamma}\) cycles of length less than \(g\) (the base case is proved by construction of \(G''\)).
    Since \(K\) has at most \(g-1\) nodes and the maximum degree is \(\Delta\), the number of nodes at distance less than \(g\) from \(K\) is at most
    \[
        g(1 + \Delta + \ldots + \Delta^{g-1}) \le g \Delta^g.
    \]
    By \(\Delta\)-regularity, the number of edges incident to these nodes is at most \(g \Delta^{g+1}\).
    Furthermore, the number of edges that belong to cycles of length less than \(g\) is at most \(gn^{4\gamma}\).
    Therefore, the number of edges that fail at least one of properties (a) and (b) is at most \(gn^{4\gamma} + g\Delta^{g+1}\).
    However, the graph has \(n\Delta/2\) edges, and
    \[
        \frac{n\Delta}{2} - gn^{4\gamma} - g\Delta^{g+1} \ge n^{1-5\gamma} \Delta
    \]
    for sufficiently small \(\gamma\) and sufficiently large \(n\).
    In particular, at every iteration of the edge-switching procedure there exists an edge \(uv\) satisfying both properties (a) and (b).

    \paragraph{Switching edges strictly decreases the number of short cycles.}
    Suppose \(F'\) contains a new cycle \(D\) of length less than \(g\).
    Then, \(D\) must contain at least one of the new edges \(xv\) and \(yu\).
    Suppose it contains \(xv\) but not \(yu\).
    Removing \(xv\) from \(D\) gives an \(x\)-to-\(v\) path in \(F\) of length at most \(g-2\).
    Since \(x \in K\), this contradicts property (a).
    The same argument holds if \(D\) contains \(yu\) but not \(xv\).
    Suppose now that \(D\) contains both \(xv\) and \(yu\).
    Removing these two edges splits \(D\) into two paths, all of whose edges belong to \(F\).
    If these paths connect \(x\) to \(y\) and \(u\) to \(v\), then property (b)
    implies that the \(u\)-to-\(v\) path has length at least \(g-1\), contradicting \(\abs{D} < g\).
    If, instead, the paths connect \(x\) to \(u\) and \(y\) to \(v\), property (a) implies that both paths have length at least \(g\), again contradicting \(\abs{D} < g\).

    Thus, the edge-switching creates no new short cycle and deletes \(K\), decreasing the number of short cycles.
    The procedure terminates after at most \(n^{4\gamma}\) iterations; let \(G\) be the resulting graph.
    By construction, \(G\) contains no cycle of length strictly less than \(g\), and hence
    \[
        \operatorname{girth}(G)\ge g
        \ge \frac{\gamma}{2}\log_\Delta n.
    \]
    Thus, \(G\) satisfies the first property of the lemma with \(\varepsilon = \gamma/2\).

    \paragraph{Small independence number.}
    Let \(I\) be an arbitrary independent set of \(G\).
    Every edge of \(G''\) with both endpoints in \(I\) must be one of the at most \(2n^{4\gamma}\) edges removed during the edge-switching procedure.
    By deleting from \(I\) one endpoint of every such edge, we obtain an independent set \(I'\) of \(G''\) satisfying \(\abs{I'} \ge \abs{I} - 2n^{4\gamma}\).
    Therefore,
    \[
        \abs{I} \le C\frac{n \ln \Delta}{\Delta} + 2n^{4\gamma} \le 2C \frac{n \ln \Delta}{\Delta},
    \]
    where the last inequality holds by choosing \(\gamma\) sufficiently small and using \(\Delta \le n^{1/4}\).
    Thus, the claim follows by setting \(\alpha = 2C\).
\end{proof}

\begin{lemma}\label{lem:app:bipartite-high-girth}
    There exist a sufficiently small constant \(c > 0\) and a sufficiently large constant \(C > 0\) such that, for all sufficiently large even \(n > 0\), and for all \(3 \le \Delta \le n^{1/4}\), there exists a bipartite \(\Delta\)-regular graph \(G = (V, E)\) on \(n\) nodes with the following properties:
    \begin{enumerate}
        \item The girth of \(G\) is at least \( c \log_\Delta n\).
        \item Every maximal matching leaves at most \(C (n / \Delta) \ln \Delta\) unmatched nodes.
    \end{enumerate}
\end{lemma}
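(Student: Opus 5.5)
We derive this from \Cref{lem:app:high-girth-independent} via the bipartite double cover. The key observation is that in a bipartite graph, the unmatched nodes of a maximal matching that lie on the same side form an independent set. Moreover, by maximality, no unmatched node on one side is adjacent to an unmatched node on the other side.

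\textbf{Step 1 (construction).} Set $m = n/2$. Apply \Cref{lem:app:high-girth-independent} with $m$ nodes and degree $\Delta$. This needs $m\Delta$ even and $\Delta \le m^{1/4}$; the latter is fine up to adjusting constants, since $n^{1/4}$ versus $(n/2)^{1/4}$ only matters at the extreme. The result is a $\Delta$-regular graph $H$ on $m$ nodes with girth at least $\varepsilon\log_\Delta m$ and independence number at most $\alpha(m/\Delta)\ln\Delta$. Let $G = H \times K_2$ be the bipartite double cover, with nodes $(v,0),(v,1)$ and edges $(u,0)(v,1)$ for each $uv \in E(H)$. Then $G$ is bipartite, $\Delta$-regular, and has $n$ nodes.

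\textbf{Step 2 (girth).} A cycle in $G$ projects to a closed walk in $H$ that is non-backtracking, because consecutive edges in $G$ project to distinct edges of $H$. A non-backtracking closed walk contains a cycle of length at most its own length. Hence $\operatorname{girth}(G) \ge \operatorname{girth}(H) \ge \varepsilon\log_\Delta(n/2) \ge c\log_\Delta n$ for small $c$ and large $n$.

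\textbf{Step 3 (unmatched nodes).} Let $M$ be a maximal matching of $G$, and let $U_0, U_1$ be the unmatched nodes on the two sides, with projections $A_0, A_1 \subseteq V(H)$. By maximality, no edge of $G$ joins $U_0$ to $U_1$, so no edge of $H$ joins $A_0$ to $A_1$. Since $M$ is perfect between the matched parts, $|U_0| = |U_1|$.

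The gap is that $A_0$ itself need not be independent in $H$. It is instead one side of a "cross-independent" pair: $A_0$ and $A_1$ are sets with no edges between them. These two sets may overlap, but every $v \in A_0 \cap A_1$ is isolated from $A_0 \cup A_1$ in the cross sense.

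So the obstacle is to bound the size of such pairs, i.e., to show that $H$ has no two sets of size $s = \Theta((m/\Delta)\ln\Delta)$ with no edges between them. This is a bipartite-expansion property, not just an independence-number bound. I would handle it by strengthening \Cref{lem:app:high-girth-independent}, rerunning the random-regular-graph argument with this property in place of independence:
\begin{itemize}
\item The first-moment count over pairs $(A,B)$ of size $s$ with no edges between them behaves like the independence computation. It gives $\binom{m}{s}^2(1-s/m)^{\Theta(s\Delta)} = o(1)$ for $s = C(m/\Delta)\ln\Delta$, at least in the dense regime; sparse $\Delta$ would need the Frieze--Łuczak-type refinement.
\item The edge switching in the proof of \Cref{lem:app:high-girth-independent} changes only $O(n^{4\gamma})$ edges. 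This perturbs the bound additively, exactly as in the independent-set paragraph there.
\end{itemize}
Alternatively, one can sample a random bipartite $\Delta$-regular graph directly and apply the same cycle-removal switching restricted to cross edges, which preserves bipartiteness. With this property, $|U_0| = |U_1| \le s$, so at most $2C(n/\Delta)\ln\Delta$ nodes are unmatched.

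The main risk is justifying the pair-expansion bound for all $\Delta$ down to $3$ (constant $\Delta$). There the bound is trivially at most $n$ by choosing $C$ large, since $(n/\Delta)\ln\Delta = \Theta(n)$. So only $\Delta \ge \Delta_0$ genuinely needs the probabilistic estimate.
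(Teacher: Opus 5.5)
Your workable route, the ``alternative'' at the end, is the paper's proof. Sample a random bipartite $\Delta$-regular multigraph on $L\cup R$ via the bipartite configuration model, and show by a union bound that every $A\subseteq L$, $B\subseteq R$ with $|A|=|B|=s=\lceil C(n/(2\Delta))\ln\Delta\rceil$ span an edge. Then remove short cycles by switchings $xy,uv\mapsto xv,yu$ with $x,u\in L$, which preserve bipartiteness and regularity, and absorb the at most $2t$ deleted edges additively into the bound on $|U_L|=|U_R|$. Your diagnosis that an independence-number bound cannot control the unmatched set, and that one needs a bound on cross-independent pairs, is the right key observation.

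Drop the double-cover detour. Beyond not giving the matching bound, it cannot produce $G$ at all when $n/2$ and $\Delta$ are both odd, since no $\Delta$-regular graph on $n/2$ nodes exists. It also inherits the restriction $\Delta\ge\Delta_0$ from \Cref{lem:app:high-girth-independent}.

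Two points in what remains need fixing.

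\emph{The sparse-$\Delta$ worry is unfounded.} The probability that no half-edge of $A$ is matched into $B$ is at most $(1-2s/n)^{s\Delta}\le\exp[-2s^2\Delta/n]$. The union bound over $\binom{n/2}{s}^2$ pairs then gives $\exp[-s(C\ln\Delta-2\ln\Delta-2)]$. This is tiny for every $3\le\Delta\le n^{1/4}$ once $C$ is a large constant. The sharp constant of Frieze--\L{}uczak plays no role, and your special treatment of small $\Delta$ is unnecessary.

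\emph{The short-cycle count is missing.} ``Apply the same cycle-removal switching'' silently requires an initial bound on the number of cycles of length less than $g$. The McKay--Wormald--Wysocka input used in \Cref{lem:app:high-girth-independent} concerns uniformly random simple regular graphs, not the bipartite configuration multigraph, so you must supply this count yourself. The paper enumerates rooted oriented representations of $2k$-cycles together with the choice of half-edges, which gives $\mathbb{E}[X_{2k}]\le (n/2)^{2k}\Delta^{4k}\bigl(4/(n\Delta)\bigr)^{2k}=(2\Delta)^{2k}$. Hence $\mathbb{E}[X_{<g}]=O(n^{1/2})$, and by Markov there are at most $n^{2/3}$ short cycles with probability $1-o(1)$. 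Parallel edges are counted here as $2$-cycles, so the final graph is simple.

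Your plan depends on this bound in two places. It guarantees that a valid switching edge $uv$ (far from the cycle and on no short cycle) exists at every step. It also gives $t\le n^{2/3}=o\bigl((n/\Delta)\ln\Delta\bigr)$, which your additive-perturbation step needs in order to keep the final bound at $O\bigl((n/\Delta)\ln\Delta\bigr)$.
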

\begin{proof}
    Let \(L\) and \(R\) be two disjoint sets of \(n/2\) nodes each.
    Fix an auxiliary constant \(0<\gamma<1/12\), and let
    \[
        g=\max\left\{3,\floor{\gamma\log_\Delta n}\right\}.
    \]
    We first construct a random bipartite \(\Delta\)-regular multigraph \(G'\) on the node set \(L \cup R\) such that, with positive probability, \(G'\) satisfies the following two properties:
    \begin{enumerate}
        \item If \(A \subseteq L\) and \(B \subseteq R\) are such that \(\abs{A} = \abs{B} \ge C (n/(2\Delta)) \ln \Delta\), then there exists at least one edge connecting \(A\) to \(B\).
        \item The number of cycles of length strictly less than \(g\) is at most \(n^{2/3}\).
    \end{enumerate}
    The multigraph \(G'\) is constructed as follows: 
    Attach \(\Delta\) half-edges to every node of \(L \cup R\).
    Let \(H_L\) (\(H_R\)) be the set of half-edges attached to \(L\) (\(R\)).
    It holds that \(\abs{H_L} = \abs{H_R} = n\Delta / 2\).
    Then, pick a perfect matching between \(H_L\) and \(H_R\) uniformly at random.
    Each pair of matched half-edges produces an edge in the multigraph \(G'\), which is trivially \(\Delta\)-regular.

    We now prove property 1.
    Let \(s=\ceil{C(n/(2\Delta))\ln\Delta}\).
    If \(s>n/2\), property 1 is vacuous, so assume that \(s\le n/2\).
    Fix \(A \subseteq L\) and \(B \subseteq R\) such that \(\abs{A}=\abs{B}=s\).
    Let \(H_L[A]\) (\(H_R[B]\)) be the subset of \(H_L\) (\(H_R\)) of half-edges of \(A\) (\(B\)).
    Note that \(\abs{H_L[A]} = \abs{H_R[B]} = s \Delta\).
    Now, order the elements of \(H_L[A]\) arbitrarily, and reveal them one by one, together with their matched half-edges in \(H_R\).
    The random number of elements of \(H_L[A]\) that are matched with an element of \(H_R[B]\) is described by a hypergeometric distribution \(\textrm{Hypergeometric}(n \Delta / 2, s \Delta, s\Delta)\).
    The probability that no element of \(H_L[A]\) is matched with an element of \(H_R[B]\) is therefore 
    \[
        \prod_{i = 0}^{s\Delta - 1} \frac{\frac{n\Delta}{2} - s\Delta - i}{\frac{n\Delta}{2} - i} \le \left(1 - \frac{s}{\frac{n}{2}}\right)^{s\Delta} \le \exp\left[- 2 s^2 \frac{\Delta}{n} \right],
    \]
    where the latter inequality holds by the known inequality $1-x\le e^{-x}$ for all $0\le x\le 1$.
    There are \(\binom{n/2}{s}\) choices of \(A\), and \(\binom{n/2}{s}\) choices of \(B\).
    By the union bound, the probability that there exist subsets \(A \subseteq L, B \subseteq R\) of size \(s\) such that no element of \(H_L[A]\) is matched to an element of \(H_R[B]\) is at most
    \begin{align*}
        \binom{\frac{n}{2}}{s}^2 \exp\left[- 2 s^2 \frac{\Delta}{n} \right] & \le \left(\frac{e n}{2s}\right)^{2s} \exp\left[- 2 s^2 \frac{\Delta}{n} \right] \\
        & = \exp\left[-s\left(2s \frac{\Delta}{n} - 2\ln \frac{en}{2s}\right)\right].
    \end{align*}
    By the definition of \(s\), and since \(C\) is sufficiently large, the latter probability is at most
    \begin{align*}
        \exp\left[-s\left(C \ln \Delta - 2\ln\Delta - 2\right)\right] & \le \exp\left[- n^{3/4}\right],
    \end{align*}
    where we exploited the fact that \(C\) is sufficiently large, and that \(3 \le \Delta \le n^{1/4}\).
    Moreover, if one could find \(A\subseteq L,B\subseteq R\) of size strictly larger than \(s\) with no edge between \(A\) and \(B\), then any two subsets \(A' \subseteq A\), \(B' \subseteq B\) of size exactly \(s\) are not adjacent.
    Therefore, with probability at least \(1 - \exp[- n^{3/4}]\), property 1 holds.

    Now we prove property 2.
    Let \(X_{<g}\) count the number of cycles of length strictly less than \(g\).
    If we define \(X_r\) to be the number of cycles of length exactly \(r\), then \(X_{<g}={\sum_{r = 2}^{g-1} X_r}\).
    Here, a cycle of length \(2\) consists of a pair of parallel edges.
    Since \(G'\) is bipartite, \(X_r = 0\) for all odd values of \(r\).
    To bound \(\expect{X_r}\), we consider rooted oriented representations of cycles of length \(r\), for any \(r = 2k\).
    A cycle can be represented as a sequence \(u_1 v_1 u_2 v_2 u_3 \ldots u_k v_k u_{k+ 1}\), where \(u_{k+1} = u_1\), \(u_i \in L\) and \(v_i \in R\) for all \(i\).
    In such case, we say that the root is \(u_1\), and the cycle is oriented from \(u_i\) to \(v_i\) and from \(v_i\) to \(u_{i+1}\).
    There are at most \((n/2)^{2k}\) possible such sequences, and, for each node in the sequence, we have at most \(\Delta^2\) choices for the incoming and outgoing half-edge.
    Overall, among all possible choices of such sequences and of half-edges, we have \((n/2)^{2k}\Delta^{4k}\) candidates.
    Fix one such candidate. For every \(i \in \{1,\ldots,k\}\), let \(a_i^+,a_i^- \in H_L\) be the half-edges of \(u_i\) used by the edges \(u_iv_i\) and \(u_iv_{i-1}\), respectively, where indices are taken modulo \(k\). Similarly, let \(b_i^-,b_i^+ \in H_R\) be the half-edges of \(v_i\) used by the edges \(u_iv_i\) and \(u_{i+1}v_i\), respectively.
    For a candidate representing a cycle, the \(2k\) half-edges \(a_1^+,a_1^-,\ldots,a_k^+,a_k^-\) must be pairwise distinct, and the same holds for the \(2k\) half-edges \(b_1^-,b_1^+,\ldots,b_k^-,b_k^+\).
    The candidate occurs precisely when the \(2k=r\) prescribed pairs
    \[
        \{a_i^+,b_i^-\}\quad\text{and}\quad \{a_{i+1}^-,b_i^+\}, \qquad i \in \{1,\ldots,k\},
    \]
    belong to the random perfect matching, where again the indices are taken modulo \(k\).
    The probability that this happens is exactly
    \begin{align*}
        \frac{1}{\frac{n\Delta}{2}\left(\frac{n\Delta}{2}-1\right)\left(\frac{n\Delta}{2}-2\right)\ldots\left(\frac{n\Delta}{2}-r + 1\right)}.
    \end{align*}  
    Hence, 
    \begin{align*}
        \expect{X_r} \le \frac{\left(\frac{n}{2}\right)^{2k} \Delta^{4k}}{\frac{n\Delta}{2}\left(\frac{n\Delta}{2}-1\right)\left(\frac{n\Delta}{2}-2\right)\ldots\left(\frac{n\Delta}{2}-r + 1\right)}.
    \end{align*}
    Note that \(r<g=O(\log n)\) and, hence, \(r \le n\Delta/4\) for sufficiently large \(n\).
    Hence, for all \(0 \le j \le r - 1\), we have
    \(n\Delta/2-j \ge n\Delta/4\), which we can replace in the inequality above, obtaining
    \begin{align*}
        \expect{X_r} & \le \left(\frac{n}{2}\right)^{2k} \Delta^{4k}\left(\frac{4}{n\Delta}\right)^{2k} \\
        & = (2\Delta)^{2k}.
    \end{align*}
    We distinguish two cases.
    If \(g=3\), then only cycles of length \(2\) contribute to \(X_{<g}\), and
    \[
        \expect{X_{<g}}=\expect{X_2}\le(2\Delta)^2=4\Delta^2\le4n^{1/2}.
    \]
    If \(g\ge4\), then \(g=\floor{\gamma\log_\Delta n}\), and the estimates above imply
    \[
        \expect{X_{<g}}\le g(2\Delta)^g.
    \]
    Moreover,
    \[
        (2\Delta)^g
        \le \Delta^{g\log_\Delta(2\Delta)}
        \le \Delta^{2g}
        \le n^{2\gamma},
    \]
    and hence \(\expect{X_{<g}}\le n^{3\gamma}\le n^{1/2}\) for all sufficiently large \(n\).
    Thus, in both cases, \(\expect{X_{<g}}=O(n^{1/2})\).
    By Markov's inequality,
    \begin{align*}
        \pr{X_{<g} \ge n^{2/3}} = O(n^{-1/6})=o(1),
    \end{align*}
    proving property 2.

    Since both properties 1 and 2 hold with probability \(1-o(1)\), a graph \(G''\) satisfying both properties must exist.
    In order to conclude the lemma, we must remove all short cycles by keeping the graph \(\Delta\)-regular.

    \paragraph{The final graph \(G\).}
    We proceed by applying an iterative edge-switching procedure.
    Initially, let \(F=G''\). As long as \(F\) contains a cycle of length strictly less than \(g\), take any such cycle \(C\) and an edge \(xy\) in it, where \(x \in L\) and \(y \in R\).
    We choose another edge \(uv\) of the current graph \(F\), with \(u \in L\) and \(v \in R\), such that
    \begin{enumerate}
        \item[(a)] The distance in \(F\) between \(\{u,v\}\) and \(C\) is at least \(g\).
        \item[(b)] The edge \(uv\) does not belong to any cycle of \(F\) of length strictly less than \(g\).
    \end{enumerate}
    We obtain the next current graph \(F'\) by deleting \(xy\) and \(uv\), and adding \(xv\) and \(yu\).
    This operation preserves bipartiteness and \(\Delta\)-regularity.
    We now show that such an edge \(uv\) always exists and that this operation strictly decreases the number of cycles of length less than \(g\).

    \paragraph{Existence of \(uv\).}
    Inductively, the current graph \(F\) contains at most \(n^{2/3}\) cycles of length less than \(g\): this holds initially by property 2, and the argument below shows that every switching strictly decreases their number.
    Since \(C\) has at most \(g-1\) nodes and the maximum degree is \(\Delta\), the number of nodes at distance less than \(g\) from \(C\) is at most
    \[
        g(1 + \Delta + \ldots + \Delta^{g-1}) \le g \Delta^g.
    \]
    The number of edges incident to these is at most \(g\Delta^{g + 1}\).
    Furthermore, the number of edges that belong to cycles of length less than \(g\) is at most \(g n^{2/3}\).
    Therefore, the number of edges that fail at least one of properties (a) and (b) is at most \(gn^{2/3}+g\Delta^{g+1}\).
    By the definition of \(g\),
    \[
        \Delta^g\le\max\{\Delta^3,n^\gamma\}
        \le\max\{n^{3/4},n^\gamma\}.
    \]
    Since \(g=O(\log n)\), it follows that
    \[
        gn^{2/3}+g\Delta^{g+1}=o(n\Delta).
    \]
    As the graph has \(n\Delta/2\) edges, for all sufficiently large \(n\) at least one edge remains after excluding all edges that fail (a) or (b).
    In particular, at every iteration there is at least one edge \(uv\) satisfying properties (a) and (b).

    \paragraph{Switching edges strictly decreases the number of short cycles.}
    Suppose that \(F'\) contains a new cycle \(D\) of length strictly less than \(g\), that is, a cycle that was not already present in \(F\).
    Then, \(D\) must contain at least one of the new edges \(xv\) and \(yu\).
    Suppose it contains \(xv\) but \emph{not} \(yu\).
    Removing \(xv\) from \(D\) gives an \(x\)-to-\(v\) path in \(F\) of length at most \(g-2\).
    Since \(x \in C\), this implies that the distance in \(F\) between \(v\) and \(C\) is less than \(g\), contradicting property (a).
    The same argument holds for the case where \(D\) contains only \(yu\) and not \(xv\).

    Suppose now that \(D\) contains both \(xv\) and \(yu\).
    Removing these two edges splits \(D\) into two paths, all of whose edges belong to \(F\).
    If these paths connect \(x\) to \(y\) and \(u\) to \(v\), then property (b) implies that the \(u\)-to-\(v\) path has length at least \(g-1\): otherwise, this path together with \(uv\) would be a cycle of \(F\) of length less than \(g\).
    This gives \(\abs{D} \ge g+1\), a contradiction.
    In the other case, the paths connect \(x\) to \(u\) and \(y\) to \(v\).
    Since \(x,y \in C\), property (a) implies that both paths have length at least \(g\), again contradicting \(\abs{D}<g\).

    Thus, the switching creates no new cycle of length less than \(g\).
    It deletes at least the selected cycle \(C\), because \(xy\) is removed, so the number of short cycles strictly decreases.
    The procedure therefore terminates after at most \(n^{2/3}\) iterations; let \(G\) be its final graph.
    If \(\gamma\log_\Delta n<4\), then
    \[
        g\ge3>\frac{\gamma}{2}\log_\Delta n.
    \]
    Otherwise,
    \[
        g\ge\floor{\gamma\log_\Delta n}
        \ge\frac{\gamma}{2}\log_\Delta n.
    \]
    Thus, \(G\) has the girth required in the statement after setting \(c=\gamma/2\).
    Moreover, the construction produces no loops, and the switching procedure removes every cycle of length \(2\). Therefore, the final graph \(G\) is simple.

    \paragraph{Every maximal matching in \(G\) is large.}
    Let \(t \le n^{2/3}\) be the number of switchings that we perform to obtain \(G\) from \(G''\).
    Since every switching removes two edges,
    there are at most \(2t\) edges that are present in \(G''\) that do not belong to \(G\).
    Let \(M\) be an arbitrary maximal matching in \(G\), and let \(U_L \subseteq L\) and \(U_R \subseteq R\) be the sets of unmatched nodes.
    We trivially have \(\abs{U_L} = \abs{U_R}\), and there is no edge between these two sets (otherwise we could increase the size of the maximal matching).
    This implies that if, in \(G''\), there were edges between \(U_L\) and \(U_R\), these must have been removed during edge-switching.
    For all such removed edges, we consider the subset \(U_L'\) where we remove the endpoints of these edges from \(U_L\).
    We have \(\abs{U_L'} \ge \abs{U_L}- 2t\).
    Trivially, there is no edge between \(U_L'\) and \(U_R\) in \(G''\).
    Consider a subset \(U_R'\subseteq U_R\) such that \(\abs{U_L'} = \abs{U_R'}\).
    Then, by property 1 in \(G''\), we have that \(\abs{U_L'} = \abs{U_R'} < C (n/(2\Delta)) \ln \Delta\).
    Hence,
    \begin{align*}
        \abs{U_L} - 2t \le C \frac{n \ln \Delta}{2\Delta},
    \end{align*}
    which implies that
    \begin{align*}
        \abs{U_L} \le C \frac{n \ln \Delta}{2\Delta} + 2t.
    \end{align*}
    Therefore, the total number of unmatched nodes is at most
    \begin{align*}
        \abs{U_L} + \abs{U_R} & \le 2C \frac{n \ln \Delta}{2\Delta} + 4t \\
        & \le 4C\frac{n \ln \Delta}{2\Delta},
    \end{align*}
    where the latter holds because \(t\le n^{2/3}=o(n\ln\Delta/\Delta)\) uniformly over \(3\le\Delta\le n^{1/4}\).
    After increasing the constant \(C\) by a factor of two, this is the bound claimed in the statement.
\end{proof}
\end{document}

%% file: packages.tex
\usepackage{amsmath,amsthm,thmtools,amsfonts,amssymb}
\usepackage{dsfont}
\usepackage{mathtools}
\usepackage{mathrsfs}
\usepackage{physics}
\usepackage{array}
\allowdisplaybreaks

\usepackage{csquotes}
\usepackage[
    backend=biber,
    style=alphabetic,
    sorting=nyt,
    minalphanames=3,
    maxbibnames=99
]{biblatex}
\usepackage[margin=1in]{geometry}
\usepackage{xspace}
\usepackage{xcolor}
\usepackage{enumitem}
\usepackage{microtype}

\usepackage{graphicx}
\usepackage{subcaption}
\usepackage{tikz}
\usetikzlibrary{calc}
\input{figures/tikz-styles.tex}

\usepackage{algorithm}
\usepackage{algpseudocode}

\usepackage{hyperref}

\hypersetup{
    colorlinks=true,
    linkcolor=black,
    citecolor=black,
    filecolor=black,
    urlcolor=black,
}
\usepackage[capitalize,noabbrev]{cleveref}

%% file: figures/tikz-styles.tex
\definecolor{figureblue}{RGB}{32,91,135}
\tikzset{
  formalism edge/.style={draw=black!65,line width=.65pt},
  formalism matched/.style={draw=figureblue,line width=1.5pt},
  formalism white/.style={circle,draw=black,line width=.7pt,
    fill=white,inner sep=0pt,minimum size=5.5pt},
  formalism black/.style={formalism white,fill=black},
  formalism label/.style={font=\footnotesize,fill=white,inner sep=1pt},
  formalism root/.style={draw=figureblue,line width=.6pt},
}

%% file: environments.tex
\newtheorem{theorem}{Theorem}[section]
\newtheorem{lemma}[theorem]{Lemma}

\newtheorem{observation}[theorem]{Observation}

\theoremstyle{definition}
\newtheorem{definition}[theorem]{Definition}

\theoremstyle{remark}

%% file: macros.tex
\DeclarePairedDelimiterXPP{\myO}[1]{O}{(}{)}{}{#1}
\DeclarePairedDelimiterXPP{\myTheta}[1]{\Theta}{(}{)}{}{#1}
\DeclarePairedDelimiterXPP{\myOmega}[1]{\Omega}{(}{)}{}{#1}
\DeclarePairedDelimiterXPP{\mylittleo}[1]{o}{(}{)}{}{#1}
\DeclarePairedDelimiterXPP{\mylittleomega}[1]{\omega}{(}{)}{}{#1}

\DeclarePairedDelimiterXPP{\pr}[1]{\Pr}{[}{]}{}{#1}
\DeclarePairedDelimiterXPP{\expect}[1]{\mathbb{E}}{[}{]}{}{#1}
\DeclarePairedDelimiterXPP{\variance}[1]{\operatorname{Var}}{[}{]}{}{#1}
\DeclarePairedDelimiter{\ceil}{\lceil}{\rceil}
\DeclarePairedDelimiter{\floor}{\lfloor}{\rfloor}

\makeatletter
\newcommand{\defineuppermathalphabet}[2]{%
    \@tfor\mathletter:=ABCDEFGHIJKLMNOPQRSTUVWXYZ\do{%
        \expandafter\edef\csname #1\mathletter\endcsname{%
            \noexpand#2{\mathletter}}}}
\newcommand{\definelowermathalphabet}[2]{%
    \@tfor\mathletter:=abcdefghijklmnopqrstuvwxyz\do{%
        \expandafter\edef\csname #1\mathletter\endcsname{%
            \noexpand#2{\mathletter}}}}

\defineuppermathalphabet{cal}{\mathcal}
\definelowermathalphabet{cal}{\mathcal}

\defineuppermathalphabet{bb}{\mathbb}

\defineuppermathalphabet{bf}{\mathbf}
\definelowermathalphabet{bf}{\mathbf}

\defineuppermathalphabet{scr}{\mathscr}

\defineuppermathalphabet{frak}{\mathfrak}
\definelowermathalphabet{frak}{\mathfrak}

\defineuppermathalphabet{sf}{\mathsf}
\definelowermathalphabet{sf}{\mathsf}

\defineuppermathalphabet{tt}{\mathtt}
\definelowermathalphabet{tt}{\mathtt}
\makeatother

\DeclareMathOperator{\re}{\mathcal R}
\newcommand{\constr}[1]{\mathcal{#1}}
\newcommand{\discr}{\mathcal{D}}
\newcommand{\dom}{D}

\newtheorem{openquestion}{Open Question}

%% file: trunk/notes.tex
\section{Introduction}
 
In the distributed setting, the network is represented by a graph in which nodes model computing entities and edges represent communication links. One of the most studied models of distributed computation is the LOCAL model. This is a synchronous message-passing model; computation proceeds in synchronous rounds, where in each round nodes exchange messages with their neighbors and then perform local computation. Neither the message size nor the computational power of individual nodes is restricted. Each node has a unique identifier and, when randomness is allowed, each node has access to an unbounded string of random bits.

When studying problems in this model, the focus is on communication, and the goal is to understand how many rounds of communication are sufficient, or necessary, in order to solve a graph problem.
The LOCAL model captures the \emph{locality} of distributed graph problems: how far must information travel through a network in order to solve a given problem? Understanding the locality of fundamental graph problems in the distributed setting has been at the center of attention since the field's beginnings. Over the years, researchers have studied the locality of graph problems as a function of both the number $n$ of nodes and the maximum degree $\Delta$ of the graph. 

Collective efforts have focused both on developing algorithms with tighter upper bounds and on proving stronger lower bounds. For example, the best upper bound for maximal matching,\footnote{The maximal matching problem asks for a subset of edges that share no endpoints (that is, an independent set of edges) such that adding any other edge to the set would violate this requirement. By \emph{best} here we mean the upper bound that minimizes the dependence on $n$, at the cost of a larger dependence on $\Delta$.} expressed as a function of $n$ and $\Delta$, has been known since 2001 \cite{panconesi-rizzi-2001-some-simple-distributed-algorithms}. However, the optimality of this algorithm was established only 18 years later \cite{balliu-brandt-etal-2021-lower-bounds-for-maximal}.
Part of the reason why this result took so long to establish is the fact that the LOCAL model is very powerful, making it quite challenging to prove lower bounds. Yet, it is also highly rewarding, as lower bounds for the LOCAL model directly carry over to weaker models of distributed computing. Another key reason was the lack of general-purpose lower bound techniques: developing broadly applicable techniques can indeed be difficult, but it is sometimes the best way to gain a more comprehensive and deeper understanding of problems of interest. 

For several natural graph problems, the first polylogarithmic randomized LOCAL lower bounds were established back in 2004 using the celebrated KMW construction \cite{kuhn-moscibroda-wattenhofer-2016-local-computation}. At the core of this technique are arguments based on the notion of \emph{indistinguishability}.
At a very high level, the KMW lower bounds are obtained by showing that, if the running time of the nodes is too small, then there are many nodes in the construction that have identical information about the graph and hence have the same output distribution. 
Such behavior is problematic for problems that require breaking symmetry between nodes or edges, such as maximal matching or maximal independent set, and thus leads to lower bounds for such problems.
%Indistinguishability-based lower bounds are very powerful because they hold also in stronger models such as quantum-LOCAL or the non-signaling model \cite{gavoille-kosowski-markiewicz-2009-what-can-be-observed,coiteux-roy-d-amore-etal-2024-no-distributed-quantum}.

As time has shown, obtaining lower bounds via indistinguishability arguments appears to be quite challenging for \emph{certain} symmetry-breaking problems of interest. In fact, apart from the KMW construction and its closely related variants \cite{coupette-lenzen-2021-a-breezing-proof-of-the-kmw-bound,balliu-ghaffari-etal-2022-node-and-edge-averaged}, there are no genuinely new constructions yielding lower bounds via indistinguishability. 
An important turning point in the development of lower bounds for symmetry-breaking problems in the LOCAL model (and, consequently, in our understanding of the distributed complexity of such problems) came when an old technique called \emph{round elimination} \cite{linial-1992-locality-in-distributed-graph-algorithms}  was extended into a \emph{general-purpose} technique.

\paragraph{Round elimination.} The successful development of round elimination into a general-purpose technique dates back to 2019 \cite{brandt-2019-an-automatic-speedup-theorem-for}.
Broadly speaking, this technique is based on the following idea. Assume that a problem $\Pi_0$ of interest can be solved in $T$ rounds. The goal is to construct a lower bound sequence of problems $\Pi_1, \Pi_2,\ldots, \Pi_T$ such that each problem $\Pi_i$ can be solved in $T-i$ rounds. In particular, $\Pi_T$ can be solved in $0$ rounds. If we then show that $\Pi_T$ \emph{cannot} be solved in $0$ rounds, we obtain a chain of contradictions and conclude that $\Pi_0$ cannot be solved in $T$ rounds.
Given a problem $\Pi_0$, the sequence of problems used to prove the lower bound can be constructed in a \emph{mechanical} way. That is, there exists a function $\re$ that takes as input a problem $\Pi$ and outputs a problem $\Pi'$ that, under some conditions, is exactly one round easier than $\Pi$. Unfortunately, often the description of $\Pi'$ is exponentially larger than that of $\Pi$, making it practically impossible to obtain a useful lower bound.
In some cases, however, the problems remain the same, i.e., $\re(\Pi)=\Pi$. Clearly, $\Pi$ cannot be exactly one round easier than itself, so this implies some kind of contradiction. The contradiction is on the conditions required to conclude that $\re(\Pi)$ is one round easier than $\Pi$, and we know that, if such conditions do not apply, then $\Pi$ is in some sense \emph{hard}.
If $\re(\Pi)=\Pi$, we say that $\Pi$ is a \emph{fixed point} under the round elimination framework. For example, the perfect matching problem (i.e., finding a matching in which every node is matched) is a fixed point, whereas \emph{maximal matching is not a fixed point}. 
The general-purpose aspect of the round elimination technique has been key for obtaining many breakthrough lower bound results and has substantially advanced our understanding of the complexity of several important graph problems \cite{balliu-boudier-etal-2024-tight-lower-bounds-in-the,balliu-brandt-etal-2020-classification-of-distributed,balliu-brandt-etal-2021-improved-distributed-lower,balliu-brandt-etal-2021-lower-bounds-for-maximal,balliu-brandt-etal-2023-distributed-maximal-matching,balliu-brandt-etal-2025-distributed-quantum-advantage,balliu-brandt-etal-2025-towards-fully-automatic,balliu-brandt-etal-2026-on-the-universality-of-round,balliu-brandt-olivetti-2022-distributed-lower-bounds,balliu-hirvonen-etal-2019-hardness-of-minimal-symmetry,brandt-olivetti-2020-truly-tight-in-delta-bounds-for,
brandt-fischer-etal-2016-a-lower-bound-for-the,
balliu-brandt-etal-2026-distributed-delta-coloring}.

The lower bounds in the LOCAL model obtained using the broadly applicable round elimination technique are tight for some problems. For other problems, however, these bounds are not tight.
In fact, if we consider randomized algorithms and focus on expressing the complexity of graph problems solely as a function of $n$, the KMW construction establishes, for many problems of interest, a stronger lower bound than the one obtainable via the current round elimination technique. For example, if we focus on randomized algorithms that solve the maximal matching problem in the LOCAL model, the two techniques give the following bounds.

\begin{itemize}
    \item KMW proves a lower bound of $\Omega\left(\sqrt{\frac{\log n}{\log \log n}}\right)$ rounds;
    \item The current round elimination technique gives a lower bound of $\Omega\left(\frac{\log\log n}{\log\log\log n}\right)$ rounds.
\end{itemize}

\paragraph{A new lower bound technique: self-reduction.} In 2025, Khoury and Schild introduced a new technique that they called round elimination via \emph{self-reduction} \cite{khoury-schild-2025-round-elimination-via-self-reduction}. Using this technique, they improved the KMW lower bound for maximal matching, showing that any randomized algorithm that solves the problem in the LOCAL model requires $\Omega\left(\sqrt{\log n}\right)$ rounds. It took 21 years to improve the lower bound for such a fundamental problem.
At a high level, the lower bound is proved as follows. Assume that we can produce a sufficiently large matching, which may not be maximal, in $T$ rounds in the LOCAL model. It is possible to show that, in $T-1$ rounds, we can produce a matching that still matches many nodes (in other words, the matching produced in one round less is not much smaller). Iterating this argument, starting from a sufficiently small value of $T$, would yield a $0$-round algorithm that produces a matching larger than is possible without communication. This gives a contradiction, establishing a lower bound of $T+1$ rounds.

Like the round elimination technique, self-reduction examines what happens when we eliminate one round. However, while maximal matching is not a fixed point under the round elimination framework, self-reduction does not change the problem studied when moving to one round less: it remains a matching problem. Thus, unlike classical round elimination, self-reduction keeps the problem itself unchanged, hence the name ``self-reduction''.

The self-reduction technique immediately attracted lots of attention because new techniques for proving lower bounds in the LOCAL model can pave the way for understanding the locality of many graph problems, especially if they can be developed into general-purpose techniques. Round elimination is a striking example of this phenomenon. The original formulation and analysis of the self-reduction technique were quite complex. Shortly after its publication, the technique was significantly simplified and then extended to establish a lower bound of $\Omega\left(\sqrt{\log_{1+b} n}\right)$ rounds for the more general maximal $b$-matching problem \cite{balliu-casagrande-etal-2026-new-hardness-results-for}.
While this was a valuable step towards better understanding self-reduction, the goal of obtaining a general-purpose technique remained far from being achieved.
Since then, several researchers (including the authors of this paper) have put considerable effort into understanding the power of this technique and tried to apply it to other symmetry-breaking problems, but these attempts have consistently encountered technical obstacles.

\paragraph{Limitations of current round elimination.}
Being able to obtain randomized lower bounds of $\omega(\log \log n)$ rounds directly via classical round elimination would represent a major advance. Indeed, since round elimination is a general-purpose technique, such a result could pave the way for proving randomized lower bounds of $\omega(\log \log n)$ rounds for many different problems.
However, efforts to achieve this goal have been unsuccessful so far, suggesting that this may be a limitation of the technique itself. In fact, Khoury and Schild explicitly express this intuition in their paper \cite{khoury-schild-2025-round-elimination-via-self-reduction}.

\begin{quotation}
\noindent``While this approach [round elimination] has been very successful for several problems, including leading to the state-of-the-art lower bounds for deterministic algorithms, its performance is limited as a function of $n$ for randomized ones.''
\end{quotation}
The fundamental question of whether it is possible at all to prove randomized lower bounds of $\omega(\log \log n)$ rounds via classical round elimination has remained open since 2019 and was explicitly posed in 2022 as Open Problem 6 in \cite{balliu-brandt-etal-2022-distributed-delta-coloring,balliu-brandt-etal-2026-distributed-delta-coloring}.

\subsection{Our Contribution in a Nutshell}

In this paper, we provide a deep understanding of self-reduction and we turn it into a general-purpose technique. More precisely, we show that the self-reduction technique is a special case of the classical round elimination technique. To achieve this result, we first characterize a vast class of problems for which the self-reduction technique works. We then improve the round elimination technique to prove, for this entire class of problems, a stronger lower bound than what was previously obtainable via round elimination. 

To establish our results, we advance our understanding of round elimination in two directions. First, we introduce a new way to measure an algorithm's error (that is, a new way to count the incorrect outputs produced by the algorithm), enabling a more fine-grained analysis of its failure probability. This alone, however, is not sufficient. We also introduce a new way to construct a $(T-1)$-round algorithm from a $T$-round one, significantly different from the usual way.

However, the class of problems for which we obtain the above-mentioned results includes only rigid ``global'' problems, such as perfect matching and 2-vertex coloring. It is already known that these problems, if solvable at all, require $\Omega(D)$ rounds in the LOCAL model, where $D$ is the diameter of the graph. We also know that any solvable problem in the LOCAL model can be solved in $O(D)$ rounds.
Hence, it is natural to ask:
What is the value of obtaining a stronger lower bound via round elimination for these problems when we already know an asymptotically tight lower bound?
What is the relationship between rigid global problems and local problems? More specifically, how can a stronger lower bound obtained via round elimination for perfect matching imply a stronger lower bound for the much easier problem of maximal matching? 

The answers to the above questions lie in a fundamental difference: an $\Omega(D)$ lower bound applies only when the algorithm is required to fail with very small probability; in contrast, our technique yields lower bounds \emph{as a function of the target error} of the algorithm, with a dependence that \emph{scales well} with the error parameter.
Note that standard round elimination also gives lower bounds that depend on the target failure probability, but this dependence is substantially weaker than the one we obtain in this paper.

Let us compare the lower bounds that we get using our technique with those obtained via standard round elimination. Let $p$ be the target local failure probability. With standard round elimination, we obtain lower bounds of 
\[
\Omega\left(\min\left\{\log_\Delta \log \frac{1}{p},\log_\Delta n\right\}\right) \mbox{ rounds,}
\]
whereas our new error measure and algorithm definitions yield a lower bound of

\[
\Omega\left(\min\left\{\log \frac{1}{p},\log_\Delta n\right\}\right) \mbox{ rounds.}
\]
For the moment, let us ignore the $\log_\Delta n$ term, which becomes relevant only when $p$ is very small.
Our lower bounds represent more than an exponential improvement over the standard ones. Indeed, an exponential improvement alone would improve the first term only to $\log_\Delta \frac{1}{p}$, whereas our lower bound additionally removes the dependence on $\Delta$ from the base of the logarithm.
It is precisely the \emph{absence} of this dependence on $\Delta$, together with the exponential improvement, that enables us to prove lower bounds for problems \emph{outside} the class of rigid problems mentioned above.

As an example, let us focus on maximal matching (a problem outside the class) and perfect matching (a problem in the class). First, we observe that there exist families of graphs in which \emph{any} maximal matching leaves at most an $O(\log\Delta/\Delta)$ fraction of the nodes unmatched. On these graphs, we can transform any algorithm that solves maximal matching into an algorithm that solves perfect matching with sufficiently small error (i.e., an $O(\log\Delta/\Delta)$ fraction of nodes produce an incorrect output),  without increasing the running time. Our refined analysis, together with our new definitions, immediately implies that any randomized algorithm that solves perfect matching with sufficiently small error requires $\Omega\left(\sqrt{\log n}\right)$ rounds in the LOCAL model. This implies the same lower bound for maximal matching.

As a consequence of our results, all previously known lower bounds obtained via self-reduction follow as simple corollaries. Moreover, we prove new lower bounds for a range of graph problems, namely maximal matching on \emph{bipartite} regular $2$-colored graphs, $1/k$-integral matching, and maximal $H$-packing. In the latter problem, given a constant-size graph $H$, the goal is to find a maximal collection of vertex-disjoint copies of $H$ in the input graph. As with randomized lower bounds obtained via round elimination in general, our lower bounds hold even when nodes have access to \emph{shared randomness}.

In summary, we establish \emph{the first} $\Omega\left(\sqrt{\log n}\right)$ lower bounds directly via standard round elimination, showing that classical round elimination can yield stronger lower bounds than those obtained via KMW, even when the complexity is expressed solely as a function of $n$. This demonstrates that previous attempts to prove stronger lower bounds via round elimination failed because of insufficiently precise analyses and definitions, rather than because of an inherent limitation of the technique itself. This also resolves Open Problem 6 in \cite{balliu-brandt-etal-2026-distributed-delta-coloring} in the affirmative.

To fully describe our results and the new technical insights behind them, we must first explain round elimination (and fixed points) in more detail, define the class of problems for which we obtain stronger lower bounds via round elimination, and give some useful definitions. We would like to point out that the results obtained in this paper would not have been possible without the beautiful ideas and techniques of Khoury and Schild \cite{khoury-schild-2025-round-elimination-via-self-reduction}.

\section{High-Level Ideas, Results, and Technical Insights}
In this paper, we show that the result of Khoury and Schild \cite{khoury-schild-2025-round-elimination-via-self-reduction} can be obtained via standard round elimination, and we use the resulting insights to characterize the class of problems to which this approach applies. We start by explaining the round elimination technique in more detail.

\subsection{A Summary of Round Elimination}

The round elimination technique does not work directly in the LOCAL model, but rather in the deterministic version of a weaker model called the \emph{port-numbering} model. In this setting, the computational power of each node and the message size remain unrestricted. However, nodes have neither unique identifiers nor access to randomness. At each node $v$, the incident edges are assigned distinct port numbers from $1$ to $\deg(v)$, in some arbitrary order, where $\deg(v)$ is the degree that $v$ has in the graph. Round elimination works in this setting as follows. 

There exists a function $\re$ that takes as input a problem $\Pi$ defined in the so-called black-white formalism, and outputs a problem $\re(\Pi)$ that is still defined in the black-white formalism (the details of this formalism are not important at this point; we will however define it shortly in \cref{def:white-black}).
This function guarantees that, if $\Pi$ has complexity $T$ in the deterministic port-numbering model, and $T$ is sufficiently small compared to the girth of the graph, then $\re(\Pi)$ has complexity $\max\{T-1,0\}$.
Some problems satisfy an interesting property: $\re(\Pi) = \Pi$.
These problems are called \emph{fixed points}.
Note that, if $\Pi$ cannot be solved in $0$ rounds, this is a clear contradiction, and we can conclude that $T$ cannot be sufficiently small compared to the girth of the graph. By picking suitable graph families, we obtain a lower bound of $\Omega(\log_\Delta n)$ rounds (recall that $n$ denotes the number of nodes and $\Delta$ the maximum degree of the graph). 

In summary, if a nontrivial problem $\Pi$ satisfies $\re(\Pi) = \Pi$, we immediately obtain a lower bound of $\Omega(\log_\Delta n)$ rounds in the deterministic port-numbering model. However, what we really want are lower bounds in the much stronger randomized LOCAL model. For this purpose, we can use existing black-box \emph{lifting} theorems that lift such lower bounds to the randomized LOCAL model.
In fact, the following is known. 
\begin{theorem}[\cite{balliu-brandt-etal-2026-distributed-delta-coloring,balliu-brandt-etal-2026-on-the-universality-of-round}]
    Let $\Pi$ be a problem described in the black-white formalism using $L$ labels. Assume that $\re(\Pi) = \Pi$ and that $\Pi$ cannot be solved in $0$ rounds in the deterministic port-numbering model. Then, $\Pi$ requires $\Omega(\log_\Delta n - \log_\Delta \log L)$ rounds in the deterministic LOCAL model, and $\Omega(\log_\Delta \log n - \log_\Delta \log L)$ rounds in the randomized LOCAL model, for algorithms with failure probability at most $1/n$.
\end{theorem}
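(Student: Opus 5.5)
The plan is to first prove the bound in the randomized port-numbering model with shared randomness, and then derive the deterministic LOCAL bound from it. The argument runs on high-girth $\Delta$-regular graphs, as provided by \Cref{lem:app:high-girth-independent}, and tracks the \emph{local failure probability}: the maximum, over nodes or edges, of the probability that the constraint there is violated. On such graphs, a randomized LOCAL algorithm with $T$ rounds, $T$ smaller than half the girth, can be simulated by a randomized port-numbering algorithm. Each node draws a random identifier from a large space, so that all identifiers in a radius-$T$ view are distinct with probability $1-1/\mathrm{poly}(n)$. The collision probability is simply added to the failure probability.

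The core is a one-step lemma. Suppose a $T$-round randomized algorithm $\calA$ solves $\Pi$ with local failure probability $p$. Then there is a $(T-1)$-round algorithm $\calA'$ that solves $\re(\Pi)$ with local failure probability at most $f(p)=O\bigl((L\,p)^{1/(\Delta+1)}\bigr)$, or a similar expression polynomial in $p^{1/\Delta}$ and $L$. The construction is the standard one. In $T-1$ rounds, a white node $v$ sees its radius-$(T-1)$ view. For each incident edge $e$, it outputs the set of labels that $\calA$ assigns to $e$ with conditional probability above a threshold $q$, conditioned on $v$'s view and over the unseen randomness of the black neighbour's extended view. Because the girth is large, the extensions beyond $v$'s view through different incident edges are independent. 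Hence, if the sets chosen by $v$ violated the white constraint of $\re(\Pi)$, some tuple of high-probability labels would jointly violate the original constraint with probability at least $q^{\Delta}$. Markov's inequality then bounds the probability of this event by $p/q^{\Delta}$. There are two failure modes: a black node's true label lands outside the recorded sets, which costs about $L q$ via a union bound over the $L$ labels, and the joint-violation event just described. Balancing $q$ gives $f(p)$.

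Since $\re(\Pi)=\Pi$, iterating the lemma $T$ times yields a $0$-round algorithm for $\Pi$ with failure probability roughly $p^{1/\Delta^{O(T)}}\cdot L^{O(1)}$. A $0$-round port-numbering algorithm that does not solve $\Pi$ deterministically must fail with probability at least some $\varepsilon(L,\Delta)>0$. The reason is that its output is a distribution over labels, and nontriviality forces some fixed port configuration to be violated with constant probability; this is the quantitative step, and the bound obtained there is $1/\mathrm{poly}(L)$ to a power of $\Delta$. Setting $p=1/n$ and comparing with this lower bound gives $\Delta^{O(T)}\log L \ge \log n$, that is, $T=\Omega(\log_\Delta\log n-\log_\Delta\log L)$, as long as $T$ stays below the girth $\Omega(\log_\Delta n)$.

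For the deterministic LOCAL bound, I would use the standard speedup argument in the style of Chang, Kopelowitz and Pettie. A deterministic algorithm with runtime $T(n)$ applied with IDs from a space of size $n$, on graphs with $N\approx\log n$ nodes and random IDs, gives a randomized algorithm with failure probability at most $1/n$. The reason is that correctness fails only when IDs collide, and collisions within a view have probability $1/\mathrm{poly}(n)$ by a union bound over views. The randomized bound with $n$ in the failure probability then gives $T=\Omega(\log_\Delta n-\log_\Delta\log L)$, while the girth constraint is still met because the girth is $\Omega(\log_\Delta n)$. Alternatively, I would invoke the ID-graph technique.

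I expect the main obstacle to be the one-step lemma. Care is needed to keep the label-set blowup accounted for by $L$ only, which is possible since $\re(\Pi)=\Pi$ keeps the alphabet at size $L$ after relabeling. Care is also needed with the conditioning argument, which relies on independence across edges given $v$'s view.
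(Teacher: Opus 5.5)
Your randomized half follows the route the paper alludes to. The paper does not prove this theorem; it cites it and remarks that such lifting theorems come from iterating the one-step lemma $p\mapsto O\bigl(p^{1/(\Delta+1)}\bigr)$ on high-girth graphs and comparing with a $0$-round failure bound. Your version of this is sound up to bookkeeping. Balancing $p/q^{\Delta}$ against $\delta L q$ gives a prefactor $(\delta L)^{\Delta/(\Delta+1)}$, not $L^{1/(\Delta+1)}$. This costs only an additive $O(\log(\delta L))$ in $\log(1/p)$ per step, so it still yields $\Omega(\log_\Delta\log n-\log_\Delta\log L)$.

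The deterministic half has a genuine gap.

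\emph{The random-ID simulation cannot reach $\log_\Delta n$.} Your randomized bound is doubly logarithmic in the failure probability: $\Delta^{O(T)}\log L\ge\log(1/p)$. Take a deterministic algorithm designed for $M$-node graphs, with identifiers from a space of size $\mathrm{poly}(M)$. Simulating it with random identifiers only guarantees a local failure probability of order $(\text{view size})^2/\mathrm{poly}(M)$, i.e.\ $p\ge M^{-O(1)}$. Plugging this in gives $T=\Omega(\log_\Delta\log M-\log_\Delta\log L)$ in the algorithm's own size parameter, no matter how you relate the simulated graph size $N$ to $n$. Reaching $\log_\Delta n$ would need $\log(1/p)\ge n^{\Omega(1)}$, which no such simulation provides. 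So the sentence ``the randomized bound with $n$ in the failure probability then gives $T=\Omega(\log_\Delta n-\log_\Delta\log L)$'' is false; it gives $\log_\Delta\log n$.

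\emph{The setup is internally inconsistent.} By the Moore bound, graphs on $N\approx\log n$ nodes have girth $O(\log_\Delta\log n)$, not $\Omega(\log_\Delta n)$. Round elimination on them therefore cannot run for more than $O(\log_\Delta\log n)$ steps.

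\emph{The Chang--Kopelowitz--Pettie relation runs the other way.} It states $R(n)\ge D(\sqrt{\log n})$, so it transfers \emph{deterministic} lower bounds to randomized ones, not the reverse. Applied to a deterministic bound $\Omega(\log_\Delta n-\log_\Delta\log L)$, it gives exactly the randomized bound in the statement.

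The missing idea is an argument that uses the fact that a deterministic algorithm must be correct for \emph{every} identifier assignment, not just a random one. The ID-graph technique behind the cited lifting theorem does this. Identifiers are vertices of an auxiliary high-girth graph with small independence number, and neighbours receive adjacent identifiers. Round elimination is then carried out deterministically over all consistent extensions. The final $0$-round contradiction comes from the independence number of the ID graph rather than from a probability.

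Naming this technique as an alternative does not prove that half. As written, your proposal establishes only the randomized bound, together with its trivial deterministic consequence $\Omega(\log_\Delta\log n-\log_\Delta\log L)$.
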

For problems where $L$ is at most $2^{O(\Delta)}$, we directly get a randomized LOCAL lower bound of $\Omega(\log_\Delta \log n)$ rounds. As a side note, one can make such black-box lifting theorems parametric in the desired local failure probability.
Indeed, the following is also known.
\begin{theorem}[\cite{MausRSS2026}]
    Let $\Pi$ be a problem that satisfies $\re(\Pi) = \Pi$ and that cannot be solved in $0$ rounds in the deterministic port-numbering model. Also, assume that the number of labels of $\Pi$ is in $2^{O(\Delta)}$. Then, $\Pi$ requires $\Omega\left(\min\left\{\log_\Delta n,\log_\Delta \log \frac{1}{p}\right\}\right) - O(1)$ rounds in the randomized LOCAL model, for algorithms with local failure probability at most $p$.
\end{theorem}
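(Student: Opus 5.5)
The plan is to run randomized round elimination on a high-girth Δ-regular graph. We track how the local failure probability degrades at each step, and stop when we reach a $0$-round algorithm whose failure probability is too small to be consistent with $\Pi$ not being $0$-round solvable in the port-numbering model.

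First, the reduction to a convenient setting. Fix $\Delta$ and a Δ-regular graph of girth at least $\varepsilon\log_\Delta n$, as provided by \Cref{lem:app:high-girth-independent}. Suppose $T \le c\log_\Delta n$ for a small constant $c$, so that every radius-$(T+1)$ neighborhood is a tree. Given a randomized LOCAL algorithm with local failure $p$, I would first remove identifiers: each node draws a random identifier from a space of size $\mathrm{poly}(n)$. The probability that two identifiers inside a radius-$T$ ball collide is at most $\Delta^{2T}/\mathrm{poly}(n)$. We may assume $p \ge 1/\mathrm{poly}(n)$, since otherwise the $\log_\Delta n$ term of the minimum already applies. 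Under this assumption the collision probability is absorbed, and we obtain a randomized port-numbering algorithm on trees with local failure $O(p)$. Shared randomness can be fixed by averaging, and the argument is otherwise unchanged.

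Second, the one-step lemma. Given a $t$-round randomized algorithm for $\Pi$ with local failure $q$, I would build a $(t-1)$-round algorithm for $\re(\Pi)$ in the standard way. Each node, or each half-step, outputs the set of labels that appear with sufficiently high conditional probability, conditioned on its $(t-1)$-view, where the threshold is some $\tau$. A union bound then controls two events. The first is that some label set is empty or misses the true output; this has probability at most $L\tau$ per slot. The second is that every choice from the sets produces a violated constraint; since the neighbors' extensions of the view are independent on a tree, this has probability at most $q/\tau^{\Delta}$. Optimizing $\tau$ gives failure $q' \le (L^{O(\Delta)} q)^{1/(\Delta+1)}$. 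Because $\re(\Pi)=\Pi$, applying the step twice (for the white and black halves) keeps the problem unchanged. We thus obtain a recurrence
\[
q_{i+1} \le \bigl(2^{O(\Delta^2)} q_i\bigr)^{1/\Delta^{O(1)}},
\]
using $L = 2^{O(\Delta)}$. Iterating $T$ times yields
\[
q_T \le 2^{O(\Delta^2)}\, p^{\Delta^{-O(T)}}.
\]

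Third, the base case and the conclusion. Since $\Pi$ is not $0$-round solvable in the deterministic port-numbering model, every deterministic $0$-round labelling violates some constraint at every node. A randomized $0$-round algorithm is a distribution over such labellings. Consider a single node and its neighbors, which behave independently, and pick the most likely output of each. This argument shows the failure probability is at least $L^{-O(\Delta)} = 2^{-O(\Delta^2)}$. Comparing with $q_T$ forces $p^{\Delta^{-O(T)}} \ge 2^{-O(\Delta^2)}$. Taking logarithms twice gives $T \ge \Omega(\log_\Delta\log(1/p)) - O(1)$, where the $O(1)$ absorbs the $\log_\Delta\Delta^2$ terms. Combined with the girth restriction, this yields the minimum with $\log_\Delta n$.

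The main obstacle is the one-step lemma, and in particular making the independence argument rigorous. The argument needs the neighbors' extra-radius views to be independent given the common view, and it needs the polynomial loss in $L$ to be controlled at the level $2^{O(\Delta^2)}$ rather than blowing up, since $\re$ produces label sets. Without that control the constants would not collapse into a single $-O(1)$. I would handle this by bounding the labels of $\re(\Pi)$ by $2^{L}$ only formally, and using that the problem is a fixed point, so the label count stays at $L$ after relabelling.
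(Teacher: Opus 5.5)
Your proposal is correct in outline and follows the route the paper indicates for this cited result: apply the standard one-step lemma recursively (thresholded label sets, independence of the neighbors' extensions on high-girth graphs, local failure $q\mapsto (L^{O(\Delta)}q)^{1/(\Delta+1)}$), use $\re(\Pi)=\Pi$ to keep the label count at $L=2^{O(\Delta)}$, and contradict the $2^{-\mathrm{poly}(\Delta)}$ failure that every $0$-round algorithm must incur. Two details need fixing. First, drop the identifier-removal step entirely, since the paper's randomized LOCAL model has no IDs; your justification for it is also wrong, because the case $p<1/\mathrm{poly}(n)$ is not covered by the $\log_\Delta n$ term (e.g.\ for $p=2^{-\sqrt n}$ the required bound is $\Omega(\log_\Delta n)$, while random IDs from a polynomial range already cost failure $n^{-O(1)}\gg p$). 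Second, on the universally quantified side, the failure event that independence lets you bound is that \emph{some} choice from the sets gives an invalid configuration, not that every choice does.
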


The above black-box lifting theorems are proved by showing the following statement, and applying it recursively.
\begin{lemma}[\cite{balliu-brandt-etal-2020-classification-of-distributed,grunau-rozhon-brandt-2022-the-landscape-of-distributed}, informal]
    Suppose $\Pi$ can be solved in $T$ rounds with local failure probability $p$. Then, $\re(\Pi)$ can be solved in $T-1$ rounds with local failure probability at most $O\left(p^{\frac{1}{\Delta+1}}\right)$.
\end{lemma}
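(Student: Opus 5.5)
The plan is the standard one-round speedup argument in the black-white formalism on high-girth graphs, with the failure probability tracked through a union bound and a "good/bad" set argument. Let $\calA$ be a $T$-round algorithm for $\Pi$ with local failure probability $p$. Assume $\Pi$ has active (white) nodes and passive (black) nodes, and that $T$ is small compared to the girth, so that $T$-neighborhoods are trees.

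\textbf{Step 1: the $(T-1)$-round algorithm.} Fix a white node $v$ with its $(T-1)$-radius view, and consider each incident edge $e=\{v,u\}$ with $u$ black. The $T$-round view of $u$ extends $v$'s view by the subtrees hanging off $u$'s other neighbors. Call a label $\ell$ \emph{likely} for $e$ if, conditioned on $v$'s view, the probability over the unknown extensions that some white neighbor of $u$ outputs $\ell$ on its edge to $u$ is at least a threshold $\tau$.

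The algorithm lets $v$ output on $e$ the set of likely labels. This set is a label of $\re(\Pi)$; we may need to pass through the intermediate problem $\re$ applied in two half-steps, first to black nodes and then to white nodes. Because the extensions seen by different neighbors are disjoint on a tree, the events on different edges are independent given $v$'s view.

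\textbf{Step 2: error accounting.} A white node $v$ fails in the new algorithm in one of two ways.
\begin{itemize}
\item Some incident set is empty, or its elements cannot be combined into a configuration. If every tuple of likely labels, one per edge, were bad for $\Pi$, then by independence across the $\Delta$ disjoint extensions (at most $\Delta$ edges) the original algorithm fails at $v$ with probability at least $\tau^{\Delta}$.
\item Some black neighbor's constraint in $\re(\Pi)$ is violated. This forces an original failure, or an unlikely label chosen with probability below $\tau$, which by a union bound over labels and edges gives at most $\Delta L\tau$.
\end{itemize}
Markov's inequality over views then gives new local failure probability roughly $p/\tau^{\Delta}+\Delta L\tau$. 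Choosing $\tau=p^{1/(\Delta+1)}$ balances the two terms and yields $O(p^{1/(\Delta+1)})$, with the label count absorbed into the constant for the informal statement.

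\textbf{Main obstacle.} The delicate step is the first failure bullet. We must show that if the chosen sets admit no valid combination, then the original algorithm fails with probability at least $\tau^{\Delta}$ in a correctly conditioned sense. This requires the conditional independence of the subtrees, and hence girth greater than $2T$. It also requires care that randomness and IDs are sampled independently per subtree; for the LOCAL lift, IDs are handled by random IDs with a small collision probability. I would first prove the statement in the randomized port-numbering model with independent private randomness, and then transfer it.
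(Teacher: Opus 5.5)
The paper does not prove this lemma; it only cites it. Its side note on the classical lifting theorem sketches the standard construction: threshold the conditional label probabilities, charge active-side failures to $p/\tau^{\deg}$ via Markov and passive-side failures to $p$ plus a union bound over unlikely labels, then balance $\tau$. Your template is that one. However, you assign the roles to the wrong sides, and the key product bound fails. In $\re(\Pi)$ the formerly passive black nodes become active with the universal constraint $\constr{C}'_B$, and the white nodes become passive with the existential constraint $\constr{C}'_W$. So the new algorithm must run at a black node $v$. On port $i$, towards the white neighbor $u_i$, it outputs $L_i=\{\ell : p_{i,\ell}\ge\tau\}$, where $p_{i,\ell}$ is the probability, given $B_{T-1}(v)$, that $u_i$ outputs $\ell$ on the edge $\{u_i,v\}$ itself. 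Your first bullet's product bound is valid exactly here. The $\delta$ labels around $v$ are produced by $\delta$ distinct white nodes whose unseen parts $B_T(u_i)\setminus B_{T-1}(v)$ are pairwise disjoint by the girth assumption. Hence every tuple with $\ell_i\in L_i$ has probability $\prod_i p_{i,\ell_i}\ge\tau^\delta$. In your version the new active node is white, and all its original labels are computed by itself from the single view $B_T(v)$. Conditioned on $B_{T-1}(v)$ they are therefore not independent. For example, if $v$ outputs $(a,b)$ or $(b,a)$ with probability $1/2$ each, then $a$ and $b$ are likely on both ports, yet $(a,a)$ has probability $0$. Moreover, you define ``likely'' via whether some white neighbor of $u$ puts $\ell$ on its edge to $u$, so the tuple of likely labels need not even be a possible configuration of $v$. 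Hence ``the original algorithm fails at $v$ with probability at least $\tau^{\Delta}$'' does not follow.

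The quantifiers are swapped as well. The active node fails as soon as \emph{some} tuple in $L_1\times\cdots\times L_\delta$ lies outside $\constr{C}_B$, not only when every tuple does. That is the event the product bound plus Markov controls, giving probability at most $p/\tau^\delta$ over $B_{T-1}(v)$. Your second bullet says a violation forces an original failure or an unlikely label. That reasoning is correct only for an existential constraint, i.e., for the passive white nodes under $\constr{C}'_W$, where it yields $p+\Delta|\Sigma|\tau$. For a black node under the universal $\constr{C}'_B$ it is false: the actual output being valid and contained in the sets says nothing about the other combinations. With the roles swapped back, your accounting goes through. Taking $\tau\approx p^{1/(\delta+1)}$ gives $O(p^{1/(\delta+1)})\le O(p^{1/(\Delta+1)})$, up to the $\Delta|\Sigma|$ factor hidden in the informal statement. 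The exponent $\Delta+1$ comes from the alternate steps, where the degree-$\Delta$ side is active. Two smaller points. Take $\tau\le 1/|\Sigma|$ so that no $L_i$ is empty. Also extend $(L_1,\ldots,L_\delta)$ to a maximal configuration so that its sets are labels of $\re(\Pi)$; this is harmless because the existential constraint $\constr{C}'_W$ is preserved when sets are enlarged.
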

In other words, the lemma shows that, for randomized algorithms, we can save one round at the cost of increasing the local failure probability \emph{polynomially}.

An important observation is that applying these black-box lifting theorems within the round elimination framework cannot yield lower bounds stronger than $\Omega(\log_\Delta \log n)$ rounds. More generally, for algorithms with local failure probability at most $p$, these theorems cannot establish lower bounds stronger than $\Omega\left(\min\left\{\log_\Delta n,\log_\Delta \log\frac{1}{p}\right\}\right)$ rounds.
For some problems, such as sinkless orientation, this lower bound is tight. As a consequence, these black-box lifting theorems \emph{cannot be improved} without restricting the class of problems: there cannot be a lifting theorem that applies to all problems satisfying $\re(\Pi)=\Pi$ and that gives an $\omega(\log_\Delta \log n)$ lower bound. Hence, a stronger lifting theorem requires additional restrictions on the class of problems.

\subsection{Problem Family}
To describe the family of problems we consider, we first need to provide the definition of the \emph{black-white formalism} used in the round elimination framework.

\begin{definition}[Problems in the black-white formalism]\label{def:white-black}
    A problem $\Pi$ in the black-white formalism is a tuple $(\Sigma,\Delta,\delta, \constr{C}_W,\constr{C}_B)$ where:
    \begin{itemize}
        \item $\Sigma$ is a finite set of labels.
        \item $\constr{C}_W$ is a set of multisets, each of size $\Delta$, of labels from $\Sigma$.
        \item $\constr{C}_B$ is a set of multisets, each of size $\delta$, of labels from $\Sigma$.
    \end{itemize}
\end{definition}
Elements of $\constr{C}_W$ and $\constr{C}_B$ are called \emph{configurations}.
Solving a problem $\Pi = (\Sigma,\Delta,\delta, \constr{C}_W,\constr{C}_B)$ in a $(\Delta,\delta)$-biregular $2$-colored graph means assigning a label from $\Sigma$ to each edge such that:
\begin{itemize}
    \item For each white node $v$, the multiset of labels assigned to the edges incident to $v$ forms a configuration present in $\constr{C}_W$.
    \item For each black node $v$, the multiset of labels assigned to the edges incident to $v$ forms a configuration present in $\constr{C}_B$.
\end{itemize}
Throughout the paper, we assume that the set of labels appearing in $\constr{C}_W$ and the set of labels appearing in $\constr{C}_B$ are the same (otherwise there would be configurations that cannot be used at all, and we normalize the problem by removing them), and are exactly $\Sigma$. Moreover, we assume that $\constr{C}_W$ and $\constr{C}_B$ are non-empty.

Note that, with this formalism, it is possible to define problems on standard non-$2$-colored graphs: nodes become white nodes, then we put a black node in the middle of each edge, and we require to output a label for each node-edge pair. This variant, which is essentially the restriction to the case $\delta=2$, is called \emph{node-edge formalism}. 

\Cref{fig:perfect-matching-formalisms} illustrates the perfect matching problem in both formalisms.
\input{figures/perfect-matching.tex}

\paragraph{Family of problems.}
Our improved black-box lifting theorem applies to all problems in the black-white formalism in which any two distinct configurations have edit distance at least $2$.
If, in addition, each configuration has a so-called \emph{dominant} label, we obtain stronger lower bounds. We now define these concepts formally.

\begin{definition}[Distance between configurations]
    Let $C$ be a configuration, and let $m_C(\ell)$ denote the multiplicity of $\ell$ in $C$. For two configurations $C_1$ and $C_2$ of the same size, we define their distance as
\[
    d(C_1,C_2)=\frac{1}{2}\sum_{\ell} |m_{C_1}(\ell) - m_{C_2}(\ell)|,
\]
that is, the minimum number of label substitutions needed to transform one configuration into the other, or, equivalently, the edit distance between $C_1$ and $C_2$.
\end{definition}

We extend this notion to define the distance between a configuration and a set of configurations in the standard way.
\begin{definition}[Distance from a set]
    Let $\constr{C}$ be a nonempty set of configurations, and let $C_1$ be a configuration. Then, the distance from $C_1$ to $\constr{C}$ is defined as 
    \[
    d(C_1,\constr{C}) = \min_{C_2 \in \constr{C}} d(C_1,C_2).
    \]
\end{definition}

\begin{definition}[$2$-separated problems]
    Let $\Pi = (\Sigma,\Delta,\delta,\constr{C}_W,\constr{C}_B)$ be a problem in the black-white formalism. We say that $\Pi$ is \emph{$2$-separated} if the following holds.
    \begin{itemize}
        \item Let $C_1,C_2 \in \constr{C}_W$. If $C_1 \neq C_2$, then $d(C_1,C_2) \ge 2$.
        \item Let $C_1,C_2 \in \constr{C}_B$. If $C_1 \neq C_2$, then $d(C_1,C_2) \ge 2$.
    \end{itemize} 
\end{definition}
Perfect matching, $2$-vertex coloring, even orientations are examples of $2$-separated problems. Instead, maximal matching, maximal independent set, sinkless orientation are examples of problems that are not $2$-separated.

\begin{definition}[$b$-dominant problems]
    Let $b\ge 0$ be an integer. Let $\Pi = (\Sigma,\Delta,\delta,\constr{C}_W,\constr{C}_B)$ be a problem in the black-white formalism. We say that $\Pi$ is \emph{$b$-dominant} if the following holds:
    \begin{itemize}
        \item Let $C \in \constr{C}_W$. There exists a label $\ell$ such that $m_C(\ell) \ge \Delta - b$.
        \item Let $C \in \constr{C}_B$. There exists a label $\ell$ such that $m_C(\ell) \ge \delta - b$.
    \end{itemize} 
\end{definition}
Observe that any problem in the black-white formalism is $\max\{\Delta,\delta\}$-dominant.

\subsection{A New Measure of Error}
Typically, the local failure probability of a node is defined as the probability that the node produces an invalid configuration, and the global failure probability is defined as the probability that at least one node fails.
Using this standard notion of failure (which is the one used in the existing lifting theorems) \emph{we cannot} achieve our results. In fact, one of our main contributions is proposing a novel and more fine-grained notion of error.
In order to distinguish the two notions, we use the term \emph{failure} for the standard one, and \emph{error} for the one we introduce.

\begin{definition}[Error of an algorithm]\label{def:error}
    Let $\Pi = (\Sigma,\Delta,\delta,\constr{C}_W,\constr{C}_B)$ be a problem in the black-white formalism. 
    Let $G = (W,B,E)$ be a $(\Delta,\delta)$-biregular $2$-colored graph, and let $f : E \rightarrow \Sigma$ be a label assignment to the edges of $G$. For a node $v$, let $c(v)$ be the configuration induced by the assignment of $f$ to the edges incident to $v$.
    The \emph{error} of $f$ w.r.t.\ $\Pi$ is defined as:
    \[
    \sum_{v\in W}d\bigl(c(v),\constr{C}_W\bigr) + \sum_{v \in B}d\bigl(c(v),\constr{C}_B\bigr)
    \]
    We call the first term \emph{white error} and the second term \emph{black error}.
    By \emph{expected error} of an algorithm $\mathcal{A}$ on $G$ we denote the expected error taken over all possible assignments of random bits  to the nodes of $G$.
\end{definition}
In other words, if a node fails, we do not simply count it as a failed node; instead, we measure its error by the minimum number of labels around it that must be changed to correct its output. Note that, however, this measure does not take into account any new errors that these changes around a node might introduce elsewhere.

\subsection{A New Lifting Theorem for Round Elimination}
We are now ready to state our main result about round elimination.
\begin{theorem}[Informal]\label{thm:newlifting-informal}
Let $\Pi$ be a $b$-dominant $2$-separated problem. 
Let $\varepsilon_0$ be a lower bound on the expected error of any $0$-round algorithm for $\Pi$.
Then, any algorithm for $\Pi$ with error at most $\varepsilon$ requires $\Omega\left(\min\left\{\log_\Delta n, \log_{2+b} \varepsilon_0/\varepsilon\right\}\right)$ rounds in the randomized LOCAL model.
\end{theorem}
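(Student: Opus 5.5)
Since $\Pi$ is $2$-separated, it is natural to assume, as for fixed points, that $\re(\Pi)$ relates back to $\Pi$ itself, so the plan is a self-reduction: from a $T$-round algorithm $\mathcal{A}$ with expected error $\varepsilon$ we build a $(T-1)$-round algorithm for the same $\Pi$ whose expected error is at most $(2+b)\varepsilon$. Iterating $T$ times yields a $0$-round algorithm with error at most $(2+b)^T\varepsilon$. This must be at least $\varepsilon_0$, which gives $T \ge \log_{2+b}(\varepsilon_0/\varepsilon)$. The construction uses the $T$-hop neighborhoods of nodes, so we restrict to $T$ below a constant fraction of the girth; this produces the $\log_\Delta n$ term once we instantiate on high-girth graphs such as those of \Cref{lem:app:high-girth-independent}. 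We first work in the port-numbering model with shared randomness and then lift to randomized LOCAL in the usual way: random identifiers from a large range are distinct with high probability, and the rare collision events cost only an additive error.

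\textbf{The one-round step.} We alternate between removing a round at white nodes and at black nodes. Take the step where the last round is removed at white nodes. Fix a white node $v$ with $(T-1)$-hop view $V'$. Each incident edge $e=\{v,u\}$ has a label determined by the view of $u$, which extends beyond $V'$ by fresh randomness and unseen structure. For each edge, $v$ would like to output its most likely label given $V'$. The difficulty is that these maxima, taken edge by edge, may not form a valid configuration.

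\textbf{The refined rule.} Instead, $v$ samples one hypothetical extension of its view, computes the configuration $C$ that $\mathcal{A}$ would produce there, and rounds $C$ to the nearest configuration of $\constr{C}_W$. Alternatively, $v$ can pick a fixed canonical configuration according to the conditional distribution. Because of $2$-separation, any configuration at distance less than $1$ from $\constr{C}_W$ corrects uniquely. The crucial lemma (\Cref{lem:onestep-singlenode-simplified} in spirit) is a single-node statement. The expected distance between the new output and the neighbors' actual outputs under the old algorithm is bounded by a constant times the old local errors at $v$ and at its neighbors. We would prove it with a coupling. With a tree neighborhood, the extensions through different edges are independent given $V'$. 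If two independent samples disagree on some edge, then in at least one of the two worlds some node incident to that edge has error, so disagreement probability is at most twice the error probability. In the $b$-dominant case the dominant label fixes all but $b$ positions, which gives the $2+b$ factor rather than something growing with $\Delta$.

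\textbf{Summing and the main obstacle.} We sum over nodes with linearity of expectation. White errors of the new algorithm are zero or charged to the old error. Black errors are charged to the disagreements, each bounded per edge by the old error of its endpoints, with multiplicity at most $2+b$. The main obstacle is this charging. Error counts labels rather than nodes, so we need the number of edges whose label changes at $v$ to be at most $(2+b)$ times $v$'s expected local error plus its neighbors' contributions, without a $\Delta$ factor. To prove this, we would first compare the dominant label of the sampled configuration with the true one. If the dominant labels differ, $2$-separation and $b$-dominance force distance at least $\Delta-2b$, which is paid for by many erroneous edges. If they agree, at most $b$ non-dominant positions can move, plus one more position for the correction.
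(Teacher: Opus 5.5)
The missing piece is the per-node lemma with a degree-independent $O(1+b)$ factor, and that lemma is essentially the whole content of the theorem.

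Several parts of your plan are sound and match the paper:
\begin{itemize}
\item On high-girth graphs, the neighbors' outputs are conditionally independent given the passive node's $(T-1)$-view.
\item The passive side becomes active and outputs a valid configuration of its own constraint. For $2$-separated $\Pi$ this is exactly $\re(\Pi)$ by \Cref{obs:2sep-is-fixedpoint}.
\item The per-node bounds are then summed by linearity and the step is iterated.
\item Your sample-and-round rule can replace the paper's minimum-discrepancy rule. Its expected deviation is at most $\varepsilon_v+\sum_i(1-\sum_\ell p_{i,\ell}^2)\le\varepsilon_v+2\sum_i(1-\max_\ell p_{i,\ell})\le\varepsilon_v+2d_v$, where $d_v$ is the minimum discrepancy over valid configurations.
\end{itemize}
However, your coupling claim is false for two full independent samples. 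Take $\constr{C}_B=\{\{\mathrm M,\mathrm U,\ldots,\mathrm U\}\}$, and let ports $1$ and $2$ each output $\mathrm M$ independently with probability $1/2$, all other ports $\mathrm U$. Then the samples $(\mathrm M,\mathrm U,\ldots,\mathrm U)$ and $(\mathrm U,\mathrm M,\mathrm U,\ldots,\mathrm U)$ occur together with positive probability. Both are valid at $v$, and they disagree on two edges, yet no node need err in either world. What $2$-separation actually gives is the single-coordinate (hybrid) version: resampling only port $i$ yields two configurations at distance $1$, so $\Pr[X_i\neq X'_i]\le 2q_v$. Summed over ports this is $2\delta q_v$, which is \Cref{lem:onestep-singlenode-simplified}. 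Its degree factor only gives a bound whose logarithm base is polynomial in $\Delta$.

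Your dominance step does not remove that factor, for two reasons.
\begin{itemize}
\item When both samples share the dominant label, ``at most $b$ positions move, plus one'' is a pointwise $O(b)$ bound on the typical event. It does not vanish as $\varepsilon_v\to 0$, although for $\varepsilon_v=0$ the deviation is $0$. So you still need to show that the expected number of disagreeing ports is $O(1+b)\varepsilon_v$, which is the real difficulty.
\item When the dominant labels differ, nothing is ``paid for'' pointwise. Both samples can be valid configurations dominated by different labels, with no erroneous edge in either world. Since this event costs about $\delta$, you must show its probability is $O((1+b)\varepsilon_v/\delta)$, and that requires using the product structure.
\end{itemize}
The paper's \Cref{lem:onestep-singlenode} supplies exactly these ingredients:
\begin{itemize}
\item It finds $\ell^*$ with $\discr_{\ell^*}\le 2(\varepsilon_v+b)+1$.
\item It disposes of the cases $\delta<8(b+1)$ and $\ell^*\notin\dom$.
\item It restricts to $\constr{C}_{\ell^*}$ and double counts with weights $W(Z)\le\discr(Z)\le\discr_{\ell^*}+b+1$. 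This works because the substituted label differs from $Z_i$, so $p_{i,\ell}\le 1-p_{i,Z_i}$, and because $W(Z)>0$ forces $k(Z)\le b+1$.
\item It bounds $\Pr[k(X)\ge\delta-b]\le\varepsilon_v$ via $\mathbb{E}[k(X)\mid k(X)\ge b+1]\le\discr_{\ell^*}+b+1$ and Markov's inequality.
\end{itemize}
Without an argument of this kind, you only obtain the $\log_\Delta(\varepsilon_0/\varepsilon)$-type bound.

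A minor point: for odd $T$ your chain ends at a $0$-round algorithm with black nodes active, which $\varepsilon_0$ does not control. The paper fixes this by padding to $S=2\lceil T/2\rceil$.
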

We emphasize that \Cref{thm:newlifting-informal} is \emph{not} obtained by merely refining the analysis of the existing black-box lifting theorems in the case of $2$-separated problems and under the new notion of failure probability. To achieve it, we introduce a \emph{different} way to define the $(T-1)$-round algorithm.
In the following, we observe some consequences of our result. 

For ease of presentation, let us first simplify a bit the setting. Let us temporarily set aside the distinction between \emph{error} and \emph{failure} and assume that \cref{thm:newlifting-informal} talks only about failure probability (even though this is not quite precise).
Assume also that $\Delta \ge \delta$. Notice that, under this assumption, any problem in the black-white formalism is $\Delta$-dominant.
We observe the following.
\begin{itemize}
    \item Let $p$ be the desired local failure probability. The standard black-box lifting theorems give randomized lower bounds of $\Omega\left(\min\left\{\log_\Delta n, \log_\Delta \log \frac{1}{p}\right\}\right)$ rounds. As discussed, this bound cannot be improved without restricting the class of problems to which the theorem applies.
    \item We show that, for $2$-separated problems, we can improve this bound \emph{exponentially}, to $\Omega\left(\min\left\{\log_\Delta n, \log_\Delta \frac{1}{p}\right\}\right)$ rounds.
    \item If the problem is also $O(1)$-dominant, we obtain an even stronger result, establishing a lower bound of $\Omega\left(\min\left\{\log_\Delta n, \log \frac{1}{p}\right\}\right)$ rounds.
\end{itemize}
While generalizing the lower bound to obtain a better dependence on $p$ may seem like a small detail, we note that such an improvement has important consequences: in fact, it enables us to prove lower bounds for problems that are not even $2$-separated!

\subsection{Applications}
Now that we have provided formal definitions, we revisit the maximal matching example in more detail. Maximal matching, when encoded in the black-white formalism, is \emph{not} $2$-separated. However, \emph{perfect} matching is.
It is known that there are families of regular graphs with girth $\Omega(\log_\Delta n)$ and independence number $O(n\log\Delta/\Delta)$.
Since the unmatched nodes in any maximal matching form an independent set, every maximal matching on such graphs leaves at most an $O(\log\Delta/\Delta)$ fraction of the nodes unmatched. 
We transform any algorithm that solves maximal matching on these graphs into an algorithm that, in the same running time, solves \emph{perfect matching} with sufficiently small errors. The transformation is simple: each unmatched node arbitrarily chooses an incident edge and marks it as \emph{matched}.
Notice that the resulting algorithm never makes errors at nodes; errors occur only on edges, and the total number of errors is $O(n\log\Delta/\Delta)$. Hence, the expected error is $\varepsilon=O(n\log\Delta/\Delta)$. Also, it is easy to see that any $0$-round algorithm for perfect matching makes $\Omega(n)$ errors in expectation.
Since perfect matching is a $1$-dominant $2$-separated problem, we directly obtain a lower bound of $\Omega\left(\min\left\{\log_\Delta n, \log(\Delta/\log\Delta)\right\}\right) = \Omega\left(\min\left\{\log_\Delta n, \log\Delta\right\}\right)$ rounds for maximal matching.
In other words, we can use a simple graph theoretic argument, combined with our black-box lifting theorem, and obtain lower bounds for local problems.
While the example was for algorithms that never fail, even if we consider Monte Carlo algorithms with global failure probability at most $1/n$, we can alter the solution (without spending any additional round) so that every node is happy and errors are on edges, while introducing $O(1)$ additional expected errors, and hence the same idea goes through.

We apply our technique to two families of problems.
The first generalizes standard maximal matching to a fractional setting, where the goal is to assign each edge a weight in $[0,1]$ such that the sum of the weights of the edges incident to each node is at most $1$, and no edge weight can be increased without violating this constraint. A node is considered matched if the sum of its incident edge weights is exactly $1$. We restrict our attention to the case where outputs are restricted to integer multiples of $1/k$, for some constant $k$ independent of $n$ and $\Delta$, and prove the following theorem.

\begin{restatable}{theorem}{kintegral}\label{thm:kintegral}
Let $k$ be a positive integer fixed independently of $\Delta$ and $n$. Maximal $\frac{1}{k}$-integral matching requires $\Omega\left(\min\left\{\log_\Delta n, \log \Delta\right\}\right)$ rounds and $\Omega\left(\sqrt{\log n}\right)$ rounds in the randomized LOCAL model, for any algorithm with failure probability at most $1/n$. These lower bounds hold even on $\Delta$-regular graphs of girth $\Omega(\log_\Delta n)$.
\end{restatable}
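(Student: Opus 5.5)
We reduce maximal $\frac{1}{k}$-integral matching to a $2$-separated, $O(1)$-dominant problem, namely the \emph{perfect $\frac{1}{k}$-integral matching} problem, and then apply \Cref{thm:newlifting-informal}, exactly as in the maximal matching example. We encode the target problem in the node-edge formalism. Each white node (an original node) outputs on each incident edge a weight in $\{0,\frac1k,\dots,1\}$, and its weights must sum to exactly $1$. Each black node (an edge) requires its two labels to be equal.

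\textbf{Separation and dominance.} On black nodes, two distinct configurations $\{a,a\}$ and $\{a',a'\}$ have distance $2$. On white nodes, two distinct multisets with the same sum $1$ cannot differ by a single substitution, since one substitution changes the sum. So the problem is $2$-separated. Every white configuration has at most $k$ nonzero entries, so label $0$ has multiplicity at least $\Delta-k$. Every black configuration is constant. Hence the problem is $k$-dominant with $k=O(1)$, and the base of the logarithm in \Cref{thm:newlifting-informal} is $2+k=O(1)$.

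\textbf{Zero-round error.} We need $\varepsilon_0=\Omega(n)$ on $\Delta$-regular high-girth graphs. In $0$ rounds each white node outputs a random weight vector independently of its neighbour's view (shared randomness only correlates these, and conditioning on it reduces to deterministic port-based choices). An edge is consistent only if both endpoints put the same weight on it. Fix a white node $v$, which has total weight $1$, and consider its $\Delta$ edges. Charge the black error of an edge $uv$ to the edge-pairs where the two weights differ. I would argue as follows. If each node puts nonzero weight on a port set that is a random function of its own bits, then for a typical edge the probability that both endpoints place the same nonzero weight on it is at most $O(1/\Delta)$ on average over ports: the two endpoints' weight vectors are independent, and each places weight at least $1/k$ on at most $k$ ports. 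Summing, a constant fraction of white nodes have some nonzero-weight edge whose other endpoint disagrees. This gives expected error $\Omega(n)$ for $\Delta$ larger than a constant depending on $k$. For shared randomness, I would apply the argument conditioned on the shared string and use port-number symmetry; I expect this to go through, perhaps with $\Delta\ge ck$.

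\textbf{Reduction from maximal $\frac1k$-integral matching.} I would use the graphs of \Cref{lem:app:high-girth-independent}: $\Delta$-regular, girth $\Omega(\log_\Delta n)$, independence number $O(n\log\Delta/\Delta)$. In a maximal $\frac1k$-integral matching, the unsaturated nodes (weight sum $<1$) form an independent set, since otherwise an edge between two of them could be increased. So at most $O(n\log\Delta/\Delta)$ nodes are unsaturated. Each unsaturated node adds its missing weight to an arbitrary incident edge on its own side only. This creates no white error, and black error at most $1$ per such node. A global failure probability $1/n$ adds only $O(1)$ expected error after letting failing nodes output something locally valid. Hence $\varepsilon=O(n\log\Delta/\Delta)$.

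\textbf{Conclusion.} \Cref{thm:newlifting-informal} gives $\Omega(\min\{\log_\Delta n,\log(\Delta/\log\Delta)\})=\Omega(\min\{\log_\Delta n,\log\Delta\})$. Choosing $\Delta=2^{\sqrt{\log n}}$, which lies in the allowed range of \Cref{lem:app:high-girth-independent}, makes both terms $\Theta(\sqrt{\log n})$. An algorithm working on all graphs in particular works on these, so it needs $\Omega(\sqrt{\log n})$ rounds.

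The main obstacle is the $\Omega(n)$ zero-round error bound, especially with shared randomness and for general $k$. I would first try the symmetry argument above, averaging over the port-number assignment chosen adversarially at random.
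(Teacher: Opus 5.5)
Your proposal is correct and follows the paper's proof essentially step for step. It uses the same target problem (perfect $\frac{1}{k}$-integral matching), the same $2$-separation and $k$-dominance arguments, the same reduction through the small independence number of the high-girth graphs of \Cref{lem:graph-family-1}, an $\Omega(n)$ zero-round bound from the independence of an edge's two endpoints under random port assignment, and the choice $\Delta=2^{\sqrt{\log n}}$. Your zero-round bound charges each node's nonzero-weight edges, which is a slightly more careful version of the paper's per-edge estimate.

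One claim is inaccurate: the failure event does not add only $O(1)$ expected error. The bound that follows is $\frac{1}{n}\cdot\frac{n\Delta}{2}=O(\Delta)$, since each edge carries black error at most $1$. This is still $o(n\log\Delta/\Delta)$ for $\Delta\le n^{1/4}$, so the conclusion is unaffected.
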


The second family of problems generalizes matching in a totally different way.
We can see matching as the problem of packing a maximal set of copies of the graph consisting of two nodes that share an edge. A generalization of this view is maximal $H$-packing, where we use different graph patterns.
\begin{restatable}{theorem}{hpacking}\label{thm:hpacking}
    Let $H$ be a tree with at least $2$ nodes fixed independently of $\Delta$ and $n$. The maximal $H$-packing problem requires $\Omega\left(\min\left\{\log_\Delta n, \log \Delta\right\}\right)$ rounds and $\Omega\left(\sqrt{\log n}\right)$ rounds in the randomized LOCAL model, for any algorithm with failure probability at most $1/n$. These lower bounds hold even on $\Delta$-regular graphs of girth $\Omega(\log_\Delta n)$.
\end{restatable}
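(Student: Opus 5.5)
We reduce maximal $H$-packing to a $2$-separated, $O(1)$-dominant problem, following the template used for maximal matching, and then apply \Cref{thm:newlifting-informal}. Let $h=|V(H)|$. The target problem is the perfect $H$-packing problem in the node-edge formalism. Each white node must be covered by exactly one copy of $H$. It outputs, for each incident edge, either a label saying the edge belongs to its copy together with the role (node of $H$) played by each endpoint, or a label $\mathsf{O}$ saying the edge does not belong to its copy. Black nodes (edge midpoints) must see two consistent labels. Since $H$ is a fixed tree, the copies can be pinned down locally. On a graph of girth larger than $h$, a consistent labeling in which every node plays a role of degree $\deg_H$ in one tree-shaped copy is a perfect $H$-packing.

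Each white configuration has $\Delta-\deg_H(\text{role})\ge \Delta-(h-1)$ copies of $\mathsf{O}$, so the problem is $(h-1)$-dominant. Two distinct white configurations differ in at least two positions. If they differ in one edge label, then the role or the incident tree-edges change, and this forces a second change: either the count of $\mathsf{O}$ labels must stay equal to $\Delta-\deg$, or the roles listed on the other edges must be updated consistently. Black configurations are pairs of matching labels, for example $\{(\mathsf{O}),(\mathsf{O})\}$ or $\{(e,\text{roles}),(e,\text{roles})\}$, and any two of them differ in both positions. Thus the problem is $2$-separated. The labels carry an orientation of $H$'s edges and roles, which is a constant amount of information. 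I would double-check that no two valid white configurations are at distance $1$, for example by including in every label the node's own role, so that changing one label breaks the agreement with the others.

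For the graph family, \Cref{lem:app:high-girth-independent} gives $\Delta$-regular graphs with girth $\Omega(\log_\Delta n)$ and independence number $O((n/\Delta)\ln\Delta)$. In a maximal $H$-packing on such a graph, let $U$ be the set of uncovered nodes. Since $H$ is connected with $h\ge 2$, $G[U]$ contains no copy of $H$. On a high-girth graph, a subgraph with no copy of the tree $H$ has maximum degree less than $h$, because a node of degree $\ge h$ in a tree-like neighbourhood of radius $h$ would have to be checked. This is the subtle step. More safely, a graph containing no copy of a fixed tree $H$ is $(h-1)$-degenerate, so it has an independent set of size at least $|U|/h$. Hence $|U|\le h\,\alpha(G)=O((n/\Delta)\ln\Delta)$.

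Given a maximal packing algorithm, every covered node outputs its role labels, and every uncovered node outputs an arbitrary perfect-style configuration, for instance pretending to be the root of a copy using arbitrary incident edges. White errors are then zero. Black errors are at most $h$ per uncovered node, giving expected error $\varepsilon=O((n/\Delta)\log\Delta)$. Monte Carlo failure with probability $1/n$ adds $O(1)$ to this. Any $0$-round algorithm has error $\varepsilon_0=\Omega(n)$ on these graphs, by the standard argument that an edge is consistently labelled with at most constant probability below $1$ by symmetry.

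\Cref{thm:newlifting-informal} with $b=h-1=O(1)$ then yields $\Omega(\min\{\log_\Delta n,\log(\Delta/\log\Delta)\})=\Omega(\min\{\log_\Delta n,\log\Delta\})$. Choosing $\Delta=2^{\Theta(\sqrt{\log n})}$, which satisfies $\Delta\le n^{1/4}$, balances the two terms and gives $\Omega(\sqrt{\log n})$. I expect the main obstacle to be designing the label set so that the problem is genuinely $2$-separated while still encoding tree copies locally. The degeneracy bound relating uncovered nodes to the independence number is a secondary concern.
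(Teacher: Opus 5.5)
Your overall architecture matches the paper's: encode perfect $H$-packing as a $2$-separated, $O_H(1)$-dominant problem $\Pi^G$, bound the uncovered nodes via degeneracy and the independence number, map with only black errors, and lift. The gap is in the one step with real content, the $2$-separated encoding. It fails as you set it up, and your proposed repair does not fix it.

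With labels recording the roles (nodes of $H$) of the endpoints, two distinct roles of degree $1$ give white configurations at distance $1$. Already for $H=K_2$, the roles $x,y$ yield $\{(x\to y),\mathsf{O},\dots,\mathsf{O}\}$ and $\{(y\to x),\mathsf{O},\dots,\mathsf{O}\}$. If a label names only the unoriented edge of $H$, then for the path $a-b-c$ the configurations $\{ab,\mathsf{O}^{\Delta-1}\}$ and $\{ab,bc,\mathsf{O}^{\Delta-2}\}$ are at distance $1$.

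Adding the node's own role to every label does not help. If the $\mathsf{O}$-labels carry a role, then an edge outside all copies may join nodes of any two roles. So the black constraint contains both $\{(r,\mathsf{O}),(r',\mathsf{O})\}$ and $\{(r,\mathsf{O}),(r'',\mathsf{O})\}$, which are at distance $1$. If only the in-copy labels carry the role, the leaf example survives.

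More generally, in any encoding of this shape a leaf's configuration has a single non-$\mathsf{O}$ label. Two leaves with different labels are therefore always at distance $1$. So all leaves must receive the same label, and the labelling must forget where a node sits in $H$. Your claim that changing one label ``forces a second change'' is false, and the lifting theorem cannot be invoked.

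The missing idea is the paper's type encoding. Form the incidence tree $H'$ of $H$ and root it at its unique center $c$. Label each node--edge pair by the isomorphism type (preserving root and colors) of the subtree on the side away from $c$, and use $\bot$ outside copies. This gives the needed properties:
\begin{itemize}
\item A node's configuration is determined by its own rooted type, so all nodes of the same type (e.g.\ all leaves) coincide.
\item Two white configurations of different types differ in at least two places. With equal degrees, the parent label and some child label differ. If the degrees differ by at least two, the $\bot$-counts already do. If they differ by one, the white-rooted parent label cannot occur among the other node's labels.
\item Each edge configuration is determined by either of its entries.
\item The center is separated because at least two of its labels are types of height $r-1$ that occur in no other configuration of the same constraint.
\end{itemize}
You do not need $\Pi^G$ to characterize perfect packings exactly. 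You only need separation, dominance, small mapping error, and $\Omega(n)$ $0$-round error.

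Two smaller corrections.
\begin{itemize}
\item Drop the claim that an $H$-free subgraph has maximum degree below $h$; a high-degree star contains no $P_4$. Your degeneracy argument is the right one.
\item Your $0$-round bound does not follow from a constant per-edge disagreement probability. With $\Delta-O(1)$ ports labelled $\mathsf{O}$, agreement can have probability $1-\Theta(1/\Delta)$. The correct argument uses three facts: ports are random, every valid configuration has two distinct labels, and compatibility is a bijection on labels. Together these give disagreement probability at least $1/\Delta$ per edge, hence $\Omega(n)$ over the $n\Delta/2$ edges.
\end{itemize}
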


Maximal matching on $2$-colored graphs has been extensively studied. The best randomized lower bound as a function of $n$ is given by the KMW construction \cite{kuhn-moscibroda-wattenhofer-2016-local-computation} (in fact, the lower bound of Khoury and Schild \cite{khoury-schild-2025-round-elimination-via-self-reduction} does not directly apply in such graphs). 
We show that a simple graph-theoretic argument, combined with our new lifting theorem, yields an improved lower bound for maximal matching also on bipartite $2$-colored graphs.

\begin{restatable}{theorem}{bmm}\label{thm:bmm}
The maximal matching problem, on bipartite $\Delta$-regular $2$-colored graphs of girth $\Omega(\log_\Delta n)$, requires $\Omega\left(\min\left\{\log_\Delta n, \log \Delta\right\}\right)$ rounds and $\Omega\left(\sqrt{\log n}\right)$ rounds in the randomized LOCAL model, for any algorithm with failure probability at most $1/n$.
\end{restatable}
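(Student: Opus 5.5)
We reduce maximal matching on bipartite $2$-colored graphs to perfect matching with small error, and then apply \Cref{thm:newlifting-informal} (in its formal version). The hard instances come from \Cref{lem:app:bipartite-high-girth}: for all large even $n$ and $3\le\Delta\le n^{1/4}$, there is a bipartite $\Delta$-regular graph of girth at least $c\log_\Delta n$ in which every maximal matching leaves at most $C(n/\Delta)\ln\Delta$ nodes unmatched. We view it as a $(\Delta,\Delta)$-biregular $2$-colored graph with white nodes $L$ and black nodes $R$. In this setting, perfect matching is the black-white problem with $\constr{C}_W=\constr{C}_B=\{\mathsf{M}\,\mathsf{U}^{\Delta-1}\}$. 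It is $2$-separated, since each side has a single configuration, and it is $1$-dominant.

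\textbf{Step 1 (transformation).} Let $\mathcal{A}$ be a $T$-round algorithm for maximal matching with failure probability at most $1/n$. We build a $T$-round algorithm $\mathcal{A}'$ for perfect matching. If the local output of $\mathcal{A}$ at a node is malformed, for instance two incident edges claimed as matched, the node keeps an arbitrary one of them. Every node that is left unmatched then marks an arbitrary incident edge as $\mathsf{M}$. Each of these choices has $d(c(v),\constr{C})\le 1$ at its endpoint, so on a successful run only unmatched nodes, or endpoints of edges marked by unmatched nodes, contribute error. This gives an error of $O(n\ln\Delta/\Delta)$ on success. On a failed run the error is at most $O(n\Delta)$, and this event has probability at most $1/n$, so it contributes $O(\Delta)$ in expectation. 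Hence the expected error is $\varepsilon=O(n\ln\Delta/\Delta+\Delta)=O(n\ln\Delta/\Delta)$, using $\Delta\le n^{1/4}$.

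\textbf{Step 2 (zero-round bound).} We need $\varepsilon_0=\Omega(n)$. In $0$ rounds in the black-white formalism, after port-numbering or identifier randomization as in the lifting framework, each white node's choice of matched port is a function of its local information. The question is whether a constant fraction of black nodes necessarily receives a number of $\mathsf{M}$ labels different from one. I would argue this by symmetry of port assignments: each black node receives a number of incoming marks distributed roughly as Binomial$(\Delta,1/\Delta)$, which differs from $1$ with constant probability. This is the step I expect to need the most care, namely making the $0$-round lower bound hold uniformly over the inputs and ID/port choices used in the formal lifting theorem. I would phrase it by fixing an adversarial random port numbering on black nodes, which the formal theorem presumably permits.

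\textbf{Step 3 (conclusion).} \Cref{thm:newlifting-informal} with $b=1$ gives $T=\Omega(\min\{\log_\Delta n,\log(\varepsilon_0/\varepsilon)\})=\Omega(\min\{\log_\Delta n,\log(\Delta/\log\Delta)\})=\Omega(\min\{\log_\Delta n,\log\Delta\})$. Choosing $\Delta=2^{\Theta(\sqrt{\log n})}$, which is admissible since it is at most $n^{1/4}$, balances the two terms: $\log_\Delta n=\Theta(\sqrt{\log n})$ and $\log\Delta=\Theta(\sqrt{\log n})$. This yields the $\Omega(\sqrt{\log n})$ bound. The girth condition $\Omega(\log_\Delta n)$ holds by construction.
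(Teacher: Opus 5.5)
Your route is the paper's: the bipartite family of \Cref{lem:app:bipartite-high-girth}, perfect matching as the $2$-separated $1$-dominant target, an $O(n\ln\Delta/\Delta)$ mapping error against an $\Omega(n)$ zero-round error, and $\Delta=2^{\Theta(\sqrt{\log n})}$. There is one concrete gap, in Step~1.

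\Cref{thm:newlifting} applies only to algorithms with \emph{zero white error}, and $\varepsilon_0$ is defined over zero-white-error $0$-round algorithms. The informal \Cref{thm:newlifting-informal} hides this condition. It is used in the proof of \Cref{lem:onestep}: the white error of the faster algorithm is bounded by its deviation from $\mathcal{A}$ only because $\mathcal{A}$ itself has no white error. In the black-white formalism each edge carries a single label, and only white nodes produce outputs. So when an unmatched \emph{black} node marks an incident edge as $\mathrm{M}$, the white endpoint of that edge, which is matched by maximality, ends up with two $\mathrm{M}$'s. Your transformation therefore has white error equal to the number of unmatched black nodes, and the hypothesis of the theorem you invoke fails as written.

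The fix is to let only white nodes repair. Each white node sanitizes its own output, keeping exactly one $\mathrm{M}$ or marking one edge if it has none. This makes the white error $0$ deterministically, also on failed runs. Unmatched black nodes are simply left with error $1$. On a successful run, every edge marked by an unmatched white node goes to a matched black node. Hence the black error is at most the number of unmatched nodes, $O(n\ln\Delta/\Delta)$, and failed runs add $O(\Delta)$ as you computed.

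In Step~2, the randomness that matters is the port assignment of the \emph{white} nodes, not the black nodes. In the paper's model these ports come from each node's own random bits. A zero-white-error $0$-round algorithm marks exactly one port per white node, so the marked edge is uniform and independent across white nodes. The number of $\mathrm{M}$'s at a black node is then exactly $\mathrm{Bin}(\Delta,1/\Delta)$. This differs from $1$ with probability $1-(1-1/\Delta)^{\Delta-1}\ge 1/2$ for $\Delta\ge 3$, giving $\varepsilon_0\ge n/4$ with no assumption on black-side ports. With these two adjustments your argument coincides with the paper's.
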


\paragraph{Limitations.}
In some sense, our new black-box lifting theorem \emph{cannot be improved}.
More precisely, we show that requiring problems to be $2$-separated is, in a certain sense, \emph{necessary}.
However, we do not claim that $b$-dominance is necessary. Nevertheless, we show that there are problems that are not $o(\Delta)$-dominant and that admit an algorithm with complexity $O\left(\log_\Delta \frac{\varepsilon_0}{\varepsilon}\right)$. This implies that there cannot exist a black-box lifting theorem for all $2$-separated problems that gives a lower bound of $\Omega\left(\min\left\{\log_\Delta n, \log \frac{\varepsilon_0}{\varepsilon}\right\}\right)$ rounds without imposing some additional restriction. We show that $O(1)$-dominance is sufficient, although other notions may be sufficient as well. We formalize these limitations in the following theorems.

\begin{theorem}[Informal]\label{thm:limit1-informal}
    Without $2$-separation, our round-elimination analysis cannot guarantee, in general, that saving one round increases the error by only a multiplicative factor depending on $\Delta$ and $\delta$.
\end{theorem}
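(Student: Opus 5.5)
This statement is a limitation result, so the natural proof is a counterexample. I would exhibit one concrete problem that is not $2$-separated, together with graph instances where a $T$-round algorithm has tiny error, while \emph{any} $(T-1)$-round algorithm for $\re(\Pi)$ has error that no multiplicative factor depending only on $\Delta,\delta$ can bound. Because the statement is about ``our round-elimination analysis'', I would make it concrete as follows. Suppose there were a function $g(\Delta,\delta)$ such that, whenever $\Pi$ is solved in $T$ rounds with expected error $\varepsilon$, $\re(\Pi)$ is solved in $T-1$ rounds with expected error at most $g(\Delta,\delta)\,\varepsilon$. I would show this implication is false for a specific non-$2$-separated problem.

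The candidate is sinkless orientation, which is a classical fixed point up to relabeling. It is not $2$-separated: white configurations differing in the orientation of a single edge, such as one outgoing versus two outgoing, are at distance $1$.

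The key steps are:
\begin{enumerate}
\item Recall, in the node-edge formalism, that $\re$ applied to sinkless orientation returns, after renaming labels, sinkless orientation or an equivalent problem. This is the known fixed point, so iterating the hypothetical amplification bound is legitimate.
\item Observe that under the error measure of \Cref{def:error}, a sink costs exactly $1$. Moreover, a $0$-round algorithm (random orientation) has expected error $\Theta(n\,2^{-\Delta})$, that is, it is nonzero but not large.
\item Assume the hypothetical bound. Starting from any algorithm with error $\varepsilon$ and running time $T$, iterating gives a $0$-round algorithm with error at most $g^T\varepsilon$. Comparing with a lower bound $\varepsilon_0$ on $0$-round error yields $T=\Omega(\log_{g}(\varepsilon_0/\varepsilon))$.
\item Contrast this with the known upper bound: sinkless orientation is solvable in $O(\log_\Delta\log n)$ randomized rounds with failure probability $1/n$, hence expected error at most $1$. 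Taking $\varepsilon\le 1$ and $\varepsilon_0=\Omega(n2^{-\Delta})$ with $\Delta$ constant, the implied lower bound is $\Omega(\log n)$. This contradicts the $O(\log\log n)$ upper bound, since $g$ is a constant when $\Delta,\delta$ are constants.
\end{enumerate}
This contradiction shows that no such multiplicative factor can exist without $2$-separation.

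The main obstacle is justifying a lower bound on the $0$-round error of the fixed-point problem. This must hold in the port-numbering or shared-randomness setting, and on the same high-girth graph families where the upper bound runs. It also has to be checked against the exact label normalization $\re$ produces. I would first handle it directly: a $0$-round algorithm orients each edge based only on a node's local view. Then, by symmetry on a regular $2$-colored graph, a constant fraction of nodes is a sink with probability at least $2^{-\Delta}$, giving $\varepsilon_0=\Omega(n)$ for constant $\Delta$.

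A secondary subtlety is making precise what ``our analysis'' means. I would phrase the formal version as the nonexistence of any $g(\Delta,\delta)$ making the one-step lemma true for all problems, which the contradiction above refutes.
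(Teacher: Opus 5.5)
Your argument is correct, but it takes a genuinely different route from the paper's. The paper formalizes the claim as a single-node failure of the discrepancy proxy (Theorem~\ref{thm:limit1}). It considers any problem with $\delta=2$, no distinct equivalent labels, and a black constraint that is not $2$-separated. It takes $\{x,z\},\{y,z\}\in\constr{C}_B$ with $x\neq y$, and a label $w$ with $\{x,w\}\in\constr{C}_B$ and $\{y,w\}\notin\constr{C}_B$. It then lets $X_1\in\{x,y\}$ and $X_2\in\{z,w\}$ be independent, with $X_1=y$ and $X_2=w$ each having probability $\sqrt{\varepsilon}$. Only $(y,w)$ is invalid, so the expected error is at most $\varepsilon$. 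Yet every admissible pair with $L_1\odot L_2\subseteq\constr{C}_B$ must omit $y$ from $L_1$ or $w$ from $L_2$, which forces discrepancy at least $\sqrt{\varepsilon}$. You instead argue globally. A universal factor $g(\Delta,\delta)$, iterated along the round-elimination sequence of sinkless orientation on high-girth graphs with constant $\Delta$, would give an $\Omega(\log n)$ lower bound, contradicting the known $O(\log_\Delta\log n)$ algorithm. This is essentially the tightness remark the paper makes when discussing the existing lifting theorems.

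What each route buys: yours refutes the conclusion itself. So no analysis at all, not just the discrepancy-minimizing $\mathcal{A}'$, can give a multiplicative factor for every non-$2$-separated problem. However, it is existential (one problem). It relies on external ingredients: the upper bound, high-girth graphs, the fixed-point computation, and a $0$-round bound. It also says nothing about other problems lacking $2$-separation. The paper's construction is elementary and quantitative ($\varepsilon$ versus $\sqrt{\varepsilon}$), and it applies to every non-$2$-separated black constraint. That is what supports the claim that $2$-separation is exactly where this analysis breaks. On the other hand, it only shows that the proxy fails, not that $\re(\Pi)$ is genuinely hard.

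Two details in your write-up need care.
\begin{itemize}
\item The $0$-round bound $\Omega(n2^{-\Delta})$ does not hold for every $0$-round algorithm on regular $2$-colored graphs with white nodes active. If each white node orients exactly one edge outward, black sinks occur with probability only $\Delta^{-\Delta}$. Any bound of the form $\Omega(n f(\Delta))$ with $f>0$ suffices for constant $\Delta$, so this is easily repaired.
\item $\re$ does not return sinkless orientation verbatim: an ``exactly one'' constraint appears. You should either track the actual sequence $\re^k(\Pi)$, which is eventually periodic with a constant number of labels, or pass to relaxations, using that enlarging constraint sets can only decrease the error measure.
\end{itemize}
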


\begin{restatable}{theorem}{limittwo}\label{thm:limit2}
    There exists a problem $\Pi$ that is $2$-separated, not $o(\Delta)$-dominant, and that can be solved in  $T \in O\left(1+\log_\Delta \frac{\varepsilon_0}{\varepsilon}\right)$ rounds in the randomized LOCAL model in graphs of girth $\Omega(\log_\Delta n)$ with target error at most $\varepsilon$, where $\varepsilon_0 \in \Omega(n)$ is a lower bound on the expected error of any $0$-round algorithm for the problem that has $0$ white error, assuming $\varepsilon$ is chosen such that $T \in o(\log_\Delta n)$.
\end{restatable}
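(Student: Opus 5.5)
We construct an explicit problem and an algorithm for it. The problem must be $2$-separated but have configurations with no label of multiplicity close to $\Delta$, and an algorithm should reduce the error geometrically by a factor of roughly $\Delta$ per round. Our candidate is an ``even orientation''-type problem. Let the white nodes have degree $\Delta$ (with $\Delta$ even), let black nodes have degree $\delta=2$, and use labels $\{\mathsf{I},\mathsf{O}\}$. White configurations are those with exactly $\Delta/2$ labels $\mathsf{I}$ and $\Delta/2$ labels $\mathsf{O}$, i.e., a balanced orientation. Black configurations are $\{\mathsf{I},\mathsf{O}\}$ only, i.e., the edge is consistently oriented.

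Each constraint set has a single configuration, so the problem is trivially $2$-separated. It is not $o(\Delta)$-dominant, since the white configuration has maximum multiplicity $\Delta/2$.

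For $\varepsilon_0$, consider any $0$-round algorithm with zero white error. In the PN/LOCAL setting with shared randomness on a high-girth regular graph, each white node independently outputs a balanced labeling of its ports. The labels on the two sides of each black node are then independent, and the black node's label pair is not $\{\mathsf{I},\mathsf{O}\}$ with constant probability. For example, if each side outputs $\mathsf{I}$ with marginal probability $1/2$ by symmetry of ports, the failure probability is at least $1/2$. Hence the expected black error is $\Omega(n\Delta)\subseteq\Omega(n)$. I would make this rigorous by averaging over random port numberings, so that each white node's output on a fixed edge is a fair coin up to dependence on the shared randomness. Shared randomness can be handled by conditioning: given the shared bits, the outputs at distinct white nodes are independent.

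For the upper bound, I would use the classical Euler-tour-style ``balanced orientation by splitting'' in $T$ rounds. Start with a random orientation of each edge, decided by one endpoint, which has constant error per white node in expectation. Then perform $T$ rounds of local balancing. In round $t$, each white node pairs up its incident edges and reroutes so that the unbalanced excess propagates outward. Alternatively, and more cleanly, I would use a random $2$-factorization: each white node pairs its $\Delta$ ports into $\Delta/2$ pairs, creating a union of paths and cycles in the high-girth graph. Orienting each cycle consistently makes every white node balanced, since each pair contributes one in and one out. To do this in $T$ rounds, each path or cycle component is cut into segments of length about $\Theta(\Delta^{T})$ using a random choice. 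More directly, orient each component consistently according to the identifier at the segment leader visible within radius $T$. Errors then occur only at segment boundaries.

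The key estimate, and the main obstacle, is arranging that the boundaries have density $\Delta^{-\Theta(T)}$ rather than $1/T$. A pure path-following procedure only sees distance $T$ along the walk, which gives error $n\Delta/T$, too weak. To gain the $\Delta$ factor, I would instead use the tree structure: the $T$-ball of a node contains $\Delta^{T}$ edges. Each edge is oriented from its endpoint pointing toward the lexicographically minimal random value within its radius-$T$ ball, in a flow-like scheme on the tree. The expected number of conflicted edges should be the probability that a minimum ``changes'' between adjacent balls, which is $O(\Delta^{-T})$ per edge. I would then check that white nodes can stay exactly balanced by pushing all imbalance onto black (edge) errors.

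Summing the errors gives $\varepsilon\le n\Delta\cdot\Delta^{-\Omega(T)}$, so $T=O(1+\log_\Delta(\varepsilon_0/\varepsilon))$ suffices. The condition $T\in o(\log_\Delta n)$ ensures that the $T$-balls are trees. I expect the flow scheme's balancing step to be the delicate part. If it fails, I would fall back to the $2$-factor/path decomposition with random segment leaders chosen at rate $\Delta^{-T}$, and verify that the leaders are detectable within $T$ rounds.
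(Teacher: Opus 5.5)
\textbf{Gap: the $T$-round algorithm is missing, and the route you sketch cannot supply it.} Your $2$-separation, non-dominance and $\varepsilon_0=\Omega(n)$ arguments are fine. The theorem, however, rests on the algorithm, and both mechanisms you sketch fail for concrete reasons.

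In the $2$-factor/trail decomposition, each trail is a non-backtracking walk, so on a high-girth graph it leaves the radius-$T$ ball after $T$ steps. A node therefore sees only $O(T)$ edges of its trail. Leaders placed at rate $\Delta^{-T}$ along a trail are about $\Delta^{T}$ steps apart and cannot be detected. Nodes of a trail at distance more than $2T$ choose its direction independently, so you are stuck with $\Theta(n\Delta/T)$ errors. Your ``fallback'' is exactly the scheme you already identified as too weak.

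The ``orient towards the minimum of the radius-$T$ ball'' scheme fails twice.
First, nothing in it makes out-degrees equal to $\Delta/2$. All $\Delta$ decisions at $u$ are driven by $\min B_T(u)$. If each node points towards the minimum of its own ball, the out-degree is $1$. If each edge uses its own ball, the out-degree is roughly uniform on $\{1,\dots,\Delta\}$. Either way, zero white error forces $\Theta(\Delta)$ edge errors per node, i.e., error comparable to $\varepsilon_0$.
Second, the estimate that the minimum changes between adjacent balls with probability $O(\Delta^{-T})$ is false. In a $\Delta$-regular tree, $|B_T(u)\setminus B_T(v)|=(\Delta-1)^T$ is a constant fraction of $|B_T(u)|$, so the minimum changes with constant probability.

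\textbf{The underlying obstruction is that your node constraint is rigid.} A node whose other $\Delta-1$ labels are fixed can repair itself through one free edge only if those labels are already almost balanced. So ``fix everything else, then adjust one edge'' does not work. Top-down alternatives push all conflicts to cluster boundaries, which in a $\Delta$-regular tree contain a constant fraction of the nodes. In fact your problem is essentially the exact splitting problem that the paper leaves open (the two coincide on $2$-colored bipartite graphs after swapping labels on one side). The authors do not expect it to admit an $O(\log_\Delta 1/p)$ algorithm.

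\textbf{The paper instead uses \emph{odd orientation}.} Take $\Delta\ge 4$ even, edge constraint $\{\mathrm{I},\mathrm{O}\}$, and node configurations with an odd number of $\mathrm{O}$. This problem is $2$-separated, since one change flips the parity. Its dominance parameter is at least $\Delta/2-1$. Crucially, a parity constraint can always be repaired through any single edge, which makes a clustering algorithm work:
\begin{enumerate}
\item Sample centers with probability $(\Delta-1)^{-T}$.
\item Each node joins its nearest center within distance $2T$, with random priorities breaking ties; a node that finds none forms a singleton cluster.
\item Orient inter-cluster edges consistently.
\item Process each cluster tree from the leaves inwards: every non-center node orients its parent edge so that its own parity is odd.
\end{enumerate}
Only the center can remain wrong, and it converts this into one edge error. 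The expected error is thus at most the expected number of clusters, $n(\Delta-1)^{-T}+n e^{-(\Delta-1)^{T}}=n\Delta^{-\Omega(T)}$, which yields $T\in O(1+\log_\Delta(\varepsilon_0/\varepsilon))$.
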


\section{Road Map}
The paper is structured as follows.
In \Cref{sec:preliminaries}, we provide some basic notions, and we define the function $\re$.
In \Cref{sec:asr}, we prove \Cref{thm:newlifting-informal}, our new black-box lifting theorem for round elimination.
In \Cref{sec:applications}, we provide applications. We first provide a generic helper theorem, that connects local problems to global problems. Then, we apply it to prove \Cref{thm:kintegral} (the lower bound for $\frac{1}{k}$-integral matching), \Cref{thm:hpacking} (the lower bound for maximal $H$-packing), and \Cref{thm:bmm} (the lower bound for bipartite maximal matching).
In \Cref{sec:optimality}, we prove \Cref{thm:limit1-informal} ($2$-separation is required) and   \Cref{thm:limit2} ($2$-separation is not sufficient to improve $\log_\Delta \varepsilon_0 / \varepsilon$ to $\log \varepsilon_0 / \varepsilon$).
We conclude in \Cref{sec:open} with some open questions.

\section{Preliminaries}\label{sec:preliminaries}

\paragraph{LOCAL model.}
The LOCAL model of distributed computation is a synchronous message-passing model, first introduced by \textcite{linial-1992-locality-in-distributed-graph-algorithms}.
Typically, it is assumed that each node has an ID that is unique in the entire graph, and that nodes know beforehand their own ID, their degree in the graph, the maximum degree $\Delta$, and the total number $n$ of nodes in the graph.\footnote{While the assumptions regarding knowledge of global parameters such as $n$ and $\Delta$ may seem unnatural, they can only strengthen our lower-bound results.}

The computation proceeds in synchronous rounds. In each round, nodes exchange (possibly different) messages with their neighbors and perform local computation. The size of the messages and the amount of local computation performed by each node are not restricted. The runtime of an algorithm solving a graph problem in the LOCAL model is measured as the number of communication rounds needed for all nodes to produce their output.

In the randomized version of the LOCAL model, nodes have access to a string of random bits of unrestricted size.
As in the case of the port numbering model, nodes also have  access to \emph{ports}, i.e., an algorithm running on $v$ can e.g.\ refer to its $i$th incident edge. However, we assume that these ports are not given adversarially: they are assigned by the nodes using their own random bits (this is w.l.o.g., as any adversarial assignment could be replaced by a random assignment by the algorithm at the beginning of the computation).

Our lower bounds apply to Monte Carlo algorithms. Any $T$-round Monte Carlo algorithm that solves a graph problem in the randomized LOCAL model must produce a solution within $T$ rounds, and the solution must be correct with high probability, i.e., with probability at least $1 - 1/n$.

For technical reasons, we assume that nodes are not given IDs in the randomized LOCAL model. Note that this does not weaken the model (or our results), as nodes can always generate unique IDs with failure probability at most $1/n^c$, for any arbitrary $c >0$, without communication, and hence this does not change the failure probability by much.

\paragraph{Round elimination.}
Let $\Pi = (\Sigma,\Delta,\delta,\constr{C}_W,\constr{C}_B)$ be a problem in the black-white formalism. The function $\re(\Pi)$ gives a new problem $\Pi' = (\Sigma',\delta,\Delta,\constr{C}'_B,\constr{C}'_W)$ defined as follows.
\begin{itemize}
    \item Let $\constr{C}'$ be the set of multisets $\{L_1,\ldots,L_\delta\}$ satisfying that, for all $i$, $L_i \subseteq \Sigma$ and $L_i \neq \{\}$, and that $\forall \ell_1 \in L_1, \ldots, \ell_\delta \in L_\delta$, it holds that $\{\ell_1,\ldots,\ell_\delta\} \in \constr{C}_B$. Let $\constr{C}'_B$ be the subset of $\constr{C}'$ containing only \emph{maximal} configurations, that is, all configurations $C=\{L_1,\ldots,L_\delta\} \in \constr{C}'$ such that there is no other configuration $C' =\{L'_1,\ldots,L'_\delta\} \in \constr{C}'$ such that $C' \neq C$ and there exists a bijection $\phi$ such that, for all $i$, $L_i \subseteq L'_{\phi(i)}$. 
    \item $\Sigma'$ is the set of sets appearing in at least one configuration in $\constr{C}'_B$.
    \item Let $\constr{C}'_W$ be the set of multisets $\{L_1,\ldots,L_\Delta\}$ satisfying that, for all $i$, $L_i \in \Sigma'$, and that $\exists \ell_1 \in L_1, \ldots, \ell_\Delta \in L_\Delta$ such that $\{\ell_1,\ldots,\ell_\Delta\} \in \constr{C}_W$.
\end{itemize}
It is easy to observe that, for all $2$-separated problems, $\re\bigl(\re(\Pi)\bigr) = \Pi$, where we say that two problems are equal if they are the same problem up to label renaming.
\begin{observation}\label{obs:2sep-is-fixedpoint}
  Let $\Pi = (\Sigma,\Delta,\delta,\constr{C}_W,\constr{C}_B)$ be a $2$-separated problem, and let $\Pi' = \re(\Pi)$. Then, up to renaming of labels, $\Pi' = (\Sigma,\delta,\Delta,\constr{C}_B,\constr{C}_W)$.
\end{observation}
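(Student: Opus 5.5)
We compute $\re(\Pi)$ directly from its definition, exploiting that $2$-separation makes every black configuration ``rigid''. We then show that the resulting problem is $\Pi$ with colors swapped, via the renaming $\{\ell\}\mapsto \ell$.

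\emph{Step 1: the sets $\constr{C}'$ and $\constr{C}'_B$.} Take any $\{L_1,\ldots,L_\delta\}\in\constr{C}'$. Suppose some $L_i$ contains two distinct labels $a\neq b$. Fix arbitrary choices $\ell_j\in L_j$ for $j\neq i$. Then both $C_a=\{\ell_1,\ldots,a,\ldots,\ell_\delta\}$ and $C_b=\{\ell_1,\ldots,b,\ldots,\ell_\delta\}$ lie in $\constr{C}_B$. They differ by a single substitution, so $d(C_a,C_b)=1$ and $C_a\neq C_b$. This contradicts $2$-separation. Hence every $L_i$ is a singleton, and $\constr{C}'$ is exactly $\bigl\{\{\{\ell_1\},\ldots,\{\ell_\delta\}\}:\{\ell_1,\ldots,\ell_\delta\}\in\constr{C}_B\bigr\}$.

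Next we check maximality. Suppose $C,C'\in\constr{C}'$ and there is a bijection $\phi$ with $\{\ell_i\}\subseteq\{\ell'_{\phi(i)}\}$. Since all sets are singletons, this forces $\ell_i=\ell'_{\phi(i)}$ for every $i$, so $C=C'$ as multisets. Thus every element of $\constr{C}'$ is maximal, and $\constr{C}'_B=\constr{C}'$.

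\emph{Step 2: labels and white constraints.} By Step 1, $\Sigma'$ is the set of singletons $\{\ell\}$ with $\ell$ appearing in some configuration of $\constr{C}_B$. By the standing normalization assumption, these labels are exactly $\Sigma$, so $\Sigma'=\{\{\ell\}:\ell\in\Sigma\}$. For the white side, a multiset $\{\{\ell_1\},\ldots,\{\ell_\Delta\}\}$ is in $\constr{C}'_W$ iff there is a choice of elements from the singletons forming a configuration in $\constr{C}_W$. The only possible choice is $\ell_1,\ldots,\ell_\Delta$ itself, so the condition is exactly $\{\ell_1,\ldots,\ell_\Delta\}\in\constr{C}_W$.

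\emph{Step 3: conclusion.} Apply the renaming $\{\ell\}\mapsto\ell$. It turns $\constr{C}'_B$ into $\constr{C}_B$ and $\constr{C}'_W$ into $\constr{C}_W$. Recalling the output convention $\re(\Pi)=(\Sigma',\delta,\Delta,\constr{C}'_B,\constr{C}'_W)$, we get $\Pi'=(\Sigma,\delta,\Delta,\constr{C}_B,\constr{C}_W)$ up to renaming. Note that only the black-side $2$-separation is used here. The white-side condition becomes relevant only when applying $\re$ a second time, which then yields $\re(\re(\Pi))=\Pi$.

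The only delicate point is Step 1: checking that the existence of a non-singleton set really yields two \emph{distinct} configurations at distance exactly $1$. This holds because the configurations differ in one position with $a\neq b$, so their multiplicities of $a$ (and of $b$) differ by exactly one.
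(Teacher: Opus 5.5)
Your proof is correct and follows the same route as the paper: $2$-separation forces every set in a configuration of $\constr{C}'$ to be a singleton, so $\constr{C}'_B$ is $\constr{C}_B$ with each label replaced by the corresponding singleton, and the existential condition defining $\constr{C}'_W$ reduces to membership in $\constr{C}_W$. You are more explicit than the paper about two points it leaves implicit, namely the maximality check and the use of the normalization assumption to get $\Sigma'=\{\{\ell\} : \ell\in\Sigma\}$.
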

\begin{proof}
  Let us compute $\Pi' = (\Sigma',\delta,\Delta,\constr{C}'_B,\constr{C}'_W)$. Observe that the configurations present in $\constr{C}'_B$ are exactly the configurations $\{\{\ell_1\},\ldots,\{\ell_\delta\}\}$ such that $\{\ell_1,\ldots,\ell_\delta\} \in \constr{C}_B$, since the $2$-separated property implies that, by adding any label to any other set, we obtain a configuration (of sets) where we could pick an invalid configuration (of the original problem).
  Hence, $\Sigma' = \{\{\ell\} \mid \ell \in \Sigma\}$, and by renaming $\{\ell\}$ to $\ell$ we obtain that $\constr{C}'_B = \constr{C}_B$.
  Let us now compute $\constr{C}'_W$. It contains exactly those configurations with labels from $\Sigma'$ that satisfy an existential quantifier. Since each set in $\Sigma'$ is a singleton, this implies that $\constr{C}'_W$ contains exactly those configurations $\{\{\ell_1\},\ldots,\{\ell_\Delta\}\}$ such that $\{\ell_1,\ldots,\ell_\Delta\} \in \constr{C}_W$. Hence, by renaming $\{\ell\}$ to $\ell$ we obtain that $\constr{C}'_W = \constr{C}_W$.
\end{proof}

\paragraph{White and black algorithms.}
Let $\Pi = (\Sigma,\Delta,\delta,\constr{C}_W,\constr{C}_B)$ be a problem in the black-white formalism.
Throughout the paper, we will only consider $\Pi$ in the context of $(\Delta,\delta)$-biregular graphs where nodes are $2$-colored. We now define the notion of algorithms in this setting. Let \emph{white} be the color of the nodes that need to satisfy the constraint $\constr{C}_W$, and let \emph{black} be the color of the nodes that need to satisfy the constraint $\constr{C}_B$. A $T$-round algorithm $A$ for $\Pi$ is a function that takes as input the $T$-radius neighborhood of a white node (which has degree $\Delta$) and produces an output for each of its incident edges. We say that white nodes are \emph{active} and black nodes are \emph{passive}, in the sense that white nodes are the ones actually producing the outputs. 

Observe that an algorithm for $\re(\Pi) = (\Sigma',\delta,\Delta,\constr{C}'_B,\constr{C}'_W)$ is now a mapping from the neighborhoods of the nodes that need to satisfy $\constr{C}'_B$ (which have degree $\delta$) to outputs.
In other words, the roles of active and passive are now swapped. As a simple explanatory case, consider the setting where white nodes have degree $\Delta$ and black nodes have degree $2$.  The round elimination statement described in the introduction essentially says that, if there is a $T$-round algorithm for $\Pi$ where nodes of degree $\Delta$ produce the outputs, then there is a $(T-1)$-round algorithm for $\re(\Pi)$ where nodes of degree $2$ produce the outputs.

\paragraph{Computable vs non-computable algorithms.}
We previously said that a $T$-round algorithm $A$ for $\Pi$ is a function $f$ that takes as input the $T$-radius neighborhood of a white node (which has degree $\Delta$) and produces an output for each of its incident edges. Note that we did not require $f$ to be a \emph{computable} function. In fact, we allow it to be non-computable (though we require it to be measurable), and our lower bounds hold even in the setting where we allow non-computable algorithms (and hence they are stronger lower bounds). We will \emph{exploit} non-computability when defining an algorithm for $\re(\Pi)$: if we allow an unbounded number of random bits for each node, the algorithm that we define may not be computable. However, it does not matter, since the final result is a lower bound that holds \emph{even for non-computable algorithms}, and hence for computable ones as well.

\section{Automatic Self Reduction}\label{sec:asr}
In this section, we prove the following theorem.
\begin{theorem}\label{thm:newlifting}
Let $\Pi$ be a $b$-dominant $2$-separated problem. 
Let $\varepsilon_0$ be a lower bound on the expected error of any $0$-round algorithm for $\Pi$ that has $0$ white error.
Let $\mathcal{G}$ be a family of $(\Delta,\delta)$-biregular graphs of girth at least $c \log_\Delta n$, for some constant $c >0$. W.l.o.g., assume $\Delta \ge \delta$.
Then, any algorithm for $\Pi$ on $\mathcal{G}$ with $0$ white error and at most $\varepsilon$ expected error requires $c'\left(\min\left\{\log_\Delta n, \log_{2+b} \varepsilon_0/\varepsilon\right\}\right)$ rounds in the randomized LOCAL model, where $c'>0$ depends only on $c$.
\end{theorem}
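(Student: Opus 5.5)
The plan is to prove a one-step ``self-reduction'' lemma and iterate it $T$ times, using \Cref{obs:2sep-is-fixedpoint} so that the problem never changes, only the roles of active and passive nodes. The lemma I would aim for has the following form. Suppose there is a $T$-round algorithm for $\Pi$ on $\mathcal{G}$ in which white nodes are active, with $0$ white error and expected black error at most $\varepsilon$, and suppose $T$ is at most a small constant times the girth. Then there is a $(T-1)$-round algorithm for $\re(\Pi)=(\Sigma,\delta,\Delta,\constr{C}_B,\constr{C}_W)$ in which black nodes are active, with $0$ black error and expected white error at most $(2+b)\varepsilon$. The same statement with colors swapped (using $\delta\le\Delta$ and $b$-dominance on both sides) lets us alternate. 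After $T$ steps we obtain a $0$-round algorithm whose active side has error $0$ and whose expected error is at most $(2+b)^T\varepsilon$. This must be at least $\varepsilon_0$; if the active side at the end is black, we apply the symmetric version of the hypothesis or perform one more step. Hence $T\ge\log_{2+b}(\varepsilon_0/\varepsilon)$, unless $T$ exceeds the girth bound, which gives the $\log_\Delta n$ term with $c'$ depending only on $c$.

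For the $(T-1)$-round algorithm, I would use the new definition rather than the classical ``set of possible labels''. A black node $v$ sees its radius-$(T-1)$ ball. Since girth is large, this ball is a tree, and the $T$-balls of its white neighbors $u_1,\dots,u_\delta$ extend it by disjoint subtrees. Node $v$ uses its own unbounded randomness to sample the random bits of all unseen nodes from their true conditional distribution given its view. This is independent across nodes, so it is exactly the prior; that such sampling is possible is where non-computability is allowed. From this sample, $v$ computes the labels $\ell_1,\dots,\ell_\delta$ that $u_1,\dots,u_\delta$ would output on the edges to $v$, and then outputs a closest configuration in $\constr{C}_B$, which guarantees black error $0$. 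Because the simulated world has the same law as the real one, $v$'s expected number of corrections equals its real expected black error. Summing over $v$ gives a contribution of at most $\varepsilon$.

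The main obstacle is bounding the new white error. A white node $u$ now receives on each edge $e=uv$ a label produced by a different black neighbor, and each neighbor simulated $u$'s hidden bits independently. Marginally, each edge label has the same distribution as in reality, but jointly the labels are not drawn from a single run, so they need not form a configuration of $\constr{C}_W$. My first attempt would be the following. Condition on what all of $u$'s neighbors jointly see, namely $u$'s radius-$(T-1)$ view plus their own branches. For the real algorithm, the conditional distribution of $u$'s output configuration lies entirely in $\constr{C}_W$, since the white error is $0$. Here $2$-separation is essential: two distinct valid configurations differ in at least two positions, so a mixture of them can only be ``mixed coordinatewise'' at a cost. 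I would then use $b$-dominance. Each valid configuration has a label $\ell$ with multiplicity at least $\Delta-b$. Declare an edge of $u$ \emph{bad} if the label placed on it is not the one given by a single fixed coupled real run. I would show that the distance of $u$'s assembled configuration from $\constr{C}_W$ is at most $(1+b)$ times the number of bad edges whose disagreement is charged to black errors at neighbors, plus the corrections themselves. This charging should yield the factor $2+b$.

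Making this charging rigorous is the step I expect to be hardest. In particular, I would need to show that disagreements between independent simulations at distinct neighbors can be paid for by real black errors, and not by intrinsic randomness of $u$'s hidden bits. I would first try to prove the single-node inequality by induction over the non-dominant positions, in the spirit of \cite[Lemma 3.8]{balliu-casagrande-etal-2026-new-hardness-results-for}.
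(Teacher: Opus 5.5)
Your skeleton matches the paper's. It uses a one-step lemma that keeps $\Pi$ fixed via \Cref{obs:2sep-is-fixedpoint}, alternates the active side, and loses a factor $O(1+b)$ per round; the paper proves $16(1+b)$. The genuine gap is that the one-step bound, which is essentially the whole proof, is never established, and you look for it in the wrong place.

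\textbf{Where the work actually lies.} No charging at the white node is needed, and $b$-dominance plays no role there. Work in the coupled realization: the real random bits plus the black nodes' extra simulation bits. The configuration $\mathcal{A}$ produces at $u$ is valid, so the new white error at $u$ is at most the number of edges $uv$ on which $v$'s new label differs from $\mathcal{A}$'s real label. Summing over $u$ reduces everything to a per-black-node deviation.

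Write $X_i$ for the real label on port $i$ of $v$, $\tilde X_i$ for your simulated one, and $p_{i,\ell}=\pr{X_i=\ell\mid B_{T-1}(v)}$. Conditioned on $B_{T-1}(v)$, the expected deviation of your rule at $v$ is at most
$\varepsilon_v+\sum_i\pr{\tilde X_i\neq X_i}\le\varepsilon_v+2\sum_i(1-\max_\ell p_{i,\ell})\le\varepsilon_v+2\min_{C\in\constr{C}_B}\discr(C)$.
So your sampling rule is usable. But you still need the single-node inequality $\min_{C\in\constr{C}_B}\discr(C)=O\bigl((1+b)\varepsilon_v\bigr)$, with a constant independent of $\delta$; this is \Cref{lem:onestep-singlenode}. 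The paper's faster algorithm simply outputs that minimizing $C$ deterministically.

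The mechanism behind that inequality is what answers your worry about ``intrinsic randomness of $u$'s hidden bits.' By girth, $X_1,\dots,X_\delta$ are independent given $B_{T-1}(v)$. By $2$-separation, changing one coordinate of a valid configuration makes it invalid. Hence randomness on any single port is paid for by black error at $v$.

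\textbf{The missing argument.} Your proposal does not prove the single-node inequality. $2$-separation and independence alone give only $\min_C\discr(C)\le 2\delta\varepsilon_v$ (\Cref{lem:onestep-singlenode-simplified}), since each invalid $Z$ lies in $N_i(C)$ for at most $\delta$ pairs $(C,i)$. That is not enough: in every other step the simulating nodes are white, of degree $\Delta$, so you would only get $\log_\Delta(\varepsilon_0/\varepsilon)$.

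Removing the dependence on $\delta$ needs the paper's dominance argument:
\begin{itemize}
\item A pairwise-disagreement count yields a label $\ell^*$ with $\discr_{\ell^*}\le 2(\varepsilon_v+b)+1$.
\item If $\delta<8(b+1)$, the simple $2\delta$ bound already suffices.
\item If $\ell^*\notin\dom$, then $\delta\le 3\varepsilon_v+3b+1$, which handles that case.
\item If $\ell^*\in\dom$, restrict to $\constr{C}_{\ell^*}$ and bound the weight $W(Z)\le\discr_{\ell^*}+b+1$ of each invalid $Z$. Then use independence and a conditional Markov bound on $k(X)$ to show $q_{\ell^*}\le 2\varepsilon_v$.
\end{itemize}
Your ``induction over non-dominant positions'' supplies none of these steps. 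Your specific factor $2+b$ is also unsupported, although any $O(1+b)$ would suffice.

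\textbf{A smaller point.} If the number of steps is odd, you cannot ``perform one more step'' from a $0$-round algorithm, and $\varepsilon_0$ only constrains white-active algorithms. Instead, view the algorithm as running $2\lceil T/2\rceil$ rounds, as the paper does.
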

We will obtain \Cref{thm:newlifting} by a recursive application of the following statement.
\begin{lemma}\label{lem:onestep}
    Let $\Pi = (\Sigma,\Delta,\delta,\constr{C}_W,\constr{C}_B)$ be a $2$-separated $b$-dominant problem. Assume $\Pi$ can be solved in $T\ge 1$ rounds with expected error at most $\varepsilon$ and white error $0$ in graphs with girth at least $2T+3$. Then, $\Pi' = \re(\Pi)$ can be solved in $T-1$ rounds with expected error at most $16(1+b) \varepsilon$ and black error $0$.
\end{lemma}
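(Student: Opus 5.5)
The plan is to follow the usual round-elimination template, but with a new rule for how a black node picks its output. By \Cref{obs:2sep-is-fixedpoint}, up to renaming, $\Pi'$ is $\Pi$ with the roles swapped: black nodes become active and must output a configuration of $\constr{C}_B$ exactly (black error $0$), while white nodes are now passive and are charged $d(c(w),\constr{C}_W)$.

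Fix a black node $u$ and its view $V_u$ of radius $T-1$. For each incident edge $e_i=\{u,w_i\}$, the radius-$T$ view of $w_i$ consists of $V_u$ plus a part $Z_i$ beyond it. Since the girth is at least $2T+3$ and the random bits and port assignments are independent, the $Z_1,\dots,Z_\delta$ are conditionally independent given $V_u$ (shared randomness is part of $V_u$). Hence the original labels $X=(X_1,\dots,X_\delta)$ on $e_1,\dots,e_\delta$ form a product distribution given $V_u$, with marginals $p_i(\cdot)$. Node $u$ can compute these marginals, possibly non-computably, which the paper explicitly allows. The new algorithm lets $u$ output a configuration $Y=(Y_1,\dots,Y_\delta)\in\constr{C}_B$ that is a deterministic function of the marginals, together with some tie-breaking randomness. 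This gives black error $0$ by construction.

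For the white error, let $w$ be a white node. In the original algorithm $c(w)\in\constr{C}_W$, because the white error is $0$. So, under the natural coupling in which the true labels $X$ are realized by the same random bits, $d(c'(w),\constr{C}_W)$ is at most the number of edges at $w$ where $Y_e\neq X_e$. Summing over $w$, the total expected error is at most $\sum_u \sum_i \Pr[Y_i\neq X_i]$. The expected black error of the original algorithm is $\sum_u \mathbb{E}[d(X,\constr{C}_B)]$ and is at most $\varepsilon$. It therefore suffices to prove a purely combinatorial single-node statement. If $X$ is a product distribution on $\Sigma^\delta$ and $\constr{C}_B$ is $2$-separated and $b$-dominant, then some $Y\in\constr{C}_B$, chosen from the marginals alone, satisfies
\[
\textstyle\sum_i \Pr[Y_i\neq X_i]\le 16(1+b)\,\mathbb{E}\bigl[d(X,\constr{C}_B)\bigr].
\]

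For this single-node lemma, I would let $M_i=\arg\max_\ell p_i(\ell)$ and $q_i=1-p_i(M_i)$. Then split into two cases.

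\textbf{Case 1: $Q=\sum_i q_i$ is small (say $Q\le 1/2$).} If the mode vector $M$ lies in $\constr{C}_B$, take $Y=M$, so that the cost is exactly $Q$. By $2$-separation, any outcome of $X$ differing from a valid $M$ in exactly one coordinate has $d(X,\constr{C}_B)\ge 1$. Hence $\mathbb{E}[d]\ge \sum_i q_i\prod_{j\ne i}(1-q_j)\ge Q/2$. If instead $M\notin\constr{C}_B$, then $\mathbb{E}[d]\ge\Pr[X=M]\ge 1-Q$ is a constant, and any valid $Y$ costs at most $\delta$. This is still too weak, so here I will need the sharper argument below.

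\textbf{Case 2: many coordinates are uncertain.} Here I use $b$-dominance. Every valid configuration has a label $\ell$ of multiplicity at least $\delta-b$. So I choose $Y$ to be a valid configuration whose dominant label is the most frequent mode, with the at most $b$ remaining entries placed where they are most likely. The cost $\sum_i\Pr[Y_i\ne X_i]$ then splits into at most $b$ "free" coordinates, each costing at most $1$, plus the coordinates where $X_i$ deviates from the dominant label. For the latter, I would show that $d(X,\constr{C}_B)$ is, up to an additive $b$, at least the number of coordinates not carrying the most common label. An averaging argument over the product distribution should then give
\[
\mathbb{E}[d]\ge \tfrac{1}{16}\min\Bigl\{1,\ \tfrac{1}{1+b}\textstyle\sum_i\Pr[Y_i\ne X_i]\Bigr\}.
\]
In the regime where $\mathbb{E}[d]$ is a constant, I would pair this with the trivial bound that the cost is at most $(1+b)$ times something of order $\mathbb{E}[d]$.

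I expect this single-node inequality, and in particular the intermediate regime where $M$ is invalid or $Q$ is moderately large, to be the main obstacle. My first attempt would be a case analysis on $\Pr[d(X,\constr{C}_B)=0]\ge 1/2$ versus $<1/2$: in the first case $\mathbb{E}[d]$ controls the cost via $2$-separation, and in the second case $\mathbb{E}[d]\ge 1/2$ and dominance bounds the cost by $O(1+b)$ times the expected number of non-dominant coordinates. Once the lemma is established, summing over black nodes gives expected error at most $16(1+b)\varepsilon$ in $T-1$ rounds.
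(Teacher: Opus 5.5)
Your reduction is the same as the paper's. Black nodes compute the port marginals $p_{i,\ell}$, which are independent across ports by the girth assumption, and output a valid configuration of $\constr{C}_B$. The white error is then bounded by the expected deviation from the zero-white-error original output, and you sum over black nodes. So everything rests on the single-node inequality $d_v\le 16(b+1)\varepsilon_v$, and this is exactly what the proposal does not prove.

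Your Case~1 is sound, and its ``invalid mode'' subcase is not actually an obstacle. If $M\notin\constr{C}_B$, then $\varepsilon_v\ge\Pr[X=M]\ge 1-Q\ge 1/2$. The valid $Y$ nearest to $M$ then has $\discr(Y)\le Q+d(M,\constr{C}_B)\le 2Q+\varepsilon_v\le 3\varepsilon_v$ by the triangle inequality. Case~2, however, is only asserted (``an averaging argument should then give''). Moreover, the displayed inequality with $\min\{1,\cdot\}$ would not suffice once the cost exceeds $1+b$. The ``trivial bound'' you pair it with is precisely the nontrivial part.

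Two ideas are missing. First, you need a \emph{fixed} label $\ell^*$ with $\discr_{\ell^*}\le 2(\varepsilon_v+b)+1$. The pointwise bound $d(X,\constr{C}_B)\ge d_{\dom}(X)-b$ is easy, but the majority label of $X$ varies with the realization. Nothing you wrote shows that the most frequent mode is a good choice. The paper obtains $\ell^*$ in three steps:
\begin{itemize}
\item it bounds the expected number of ordered pairs $(i,j)$ with $X_i\neq X_j$ by $2\delta(\varepsilon_v+b)$;
\item it picks a good port $i^*$;
\item it uses the independence of $X_{i^*}$ from the other ports to average over $\ell\sim p_{i^*,\cdot}$.
\end{itemize}
With $\ell^*$ in hand, the constant-error regime follows: $d_v\le\discr_{\ell^*}+b$ if $\ell^*\in\dom$, and $\delta\le 3\varepsilon_v+3b+1$ otherwise.

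Second, you must handle small error when $\delta\gg b$. Your fallback claim is that, whenever $\Pr[d=0]\ge 1/2$, 2-separation lets $\mathbb{E}[d]$ control the cost. This is false: 2-separation alone yields only $d_v\le 2\delta\varepsilon_v$. For the 2-separated constraint ``odd number of $\mathrm{O}$'s'' with every port independently uniform, $\Pr[d=0]=\varepsilon_v=1/2$, yet every valid configuration has discrepancy $\delta/2$. The paper gets the factor $b+1$ by restricting to $\constr{C}_{\ell^*}$ and double counting with weights $W(Z)\le\discr(Z)\le\discr_{\ell^*}+b+1$. It then shows $\Pr[k(X)\ge\delta-b]\le\Pr[b<k(X)<\delta-b]\le\varepsilon_v$, where $k$ counts ports not labeled $\ell^*$, via a conditional Markov argument that again uses independence.

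Alternatively, your own $Q$-split can be closed without that machinery. For a 2-separated $V$, the slices $V_\ell=\{y:(\ell,y)\in V\}$ are pairwise disjoint and 2-separated. Hence $\Pr[X\in V]\le q_1+(1-2q_1)\Pr[X_{\ge 2}\in V_{M_1}]$ when $q_1\le 1/2$. Inducting over coordinates gives $\Pr[X\notin\constr{C}_B]\ge\frac12\bigl(1-e^{-2Q}\bigr)$ when all $q_i\le 1/2$, and $\Pr[X\notin\constr{C}_B]\ge\max_i q_i$ in general. So $Q>1/2$ forces $\varepsilon_v>0.3$, and the fixed-label lemma then gives $d_v\le 11(1+b)\varepsilon_v$.
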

We first show that \Cref{lem:onestep} implies \Cref{thm:newlifting}. Then, we prove \Cref{lem:onestep}.

\begin{proof}[Proof of \Cref{thm:newlifting}]
    Let $G$ be any $(\Delta,\delta)$-biregular $2$-colored graph of girth $\Omega(\log_\Delta n)$. Assume there exists a $T$-round algorithm for $\Pi$ with expected error $\varepsilon$, white error $0$, and $T$ sufficiently small compared to the girth. 
    Let $S=2\lceil T/2\rceil$. Then, by applying \Cref{lem:onestep} we obtain that there exists a $(S-1)$-round algorithm for $\re(\Pi)$ with expected error at most $16(1+b) \varepsilon$ and black error $0$. By repeating the same argument for $S$ times, we obtain that there exists a $0$-round algorithm for $\re^S(\Pi)$ with expected error at most $\bigl(16(1+b)\bigr)^S \varepsilon$ and white error $0$. By applying \Cref{obs:2sep-is-fixedpoint}, we get that $\re\bigl(\re(\Pi)\bigr) = \Pi$, and hence there exists a $0$-round algorithm for $\Pi$ with expected error at most $\bigl(16(1+b)\bigr)^{S} \varepsilon$. By assumption, we know that $\bigl(16(1+b)\bigr)^{S} \varepsilon \ge \varepsilon_0$. Hence, $S \ge \log_{16(1+b)} \frac{\varepsilon_0}{\varepsilon} \in \Omega\left(\log_{2+b}\frac{\varepsilon_0}{\varepsilon}\right)$, which implies the same asymptotic bound for $T$, and the claim follows.
\end{proof}

The rest of the section is devoted to proving \Cref{lem:onestep}.
\subsection{The Faster Algorithm}
We define a $(T-1)$-round algorithm $\mathcal{A}'$ for solving $\re(\Pi)$ as a function of a given $T$-round algorithm $\mathcal{A}$ for $\Pi$. Let $v$ be a black node. We define the output of $v$ solely as a function of its $(T-1)$-radius neighborhood $B_{T-1}(v)$.
Node $v$, for each neighboring white node $u$, considers \emph{all possible extensions} of $B_{T-1}(v) \cap B_T(u)$ into $T$-radius neighborhoods of $u$ (that is, all possible random bits assignments for the nodes that are at distance $T$ from $v$ and $T-1$ from $u$, and all the nodes that are at distance $T+1$ from $v$ and $T$ from $u$). For each such extension, it computes the output of $\mathcal{A}$ on $u$. For each possible output $\ell$, we obtain a probability $p_\ell$ that $u$ outputs $\ell$ on the edge $\{u,v\}$. Observe that, while we provided an algorithmic description, $p_\ell$ is just $\pr{\text{Node $u$ outputs $\ell$ towards $v$} \mid B_{T-1}(v)}$. 
Let $p_{i,\ell}$ be the value obtained for label $\ell$ on the $i$th incident edge of $v$ (that we simply call \emph{port $i$}).

We define the \emph{discrepancy} of a label $\ell$ on port $i$ as $\discr(i,\ell) = 1 - p_{i,\ell}$, and we define the discrepancy of a configuration $C = (\ell_1,\ldots,\ell_\delta)$ as $\discr(C) = \sum_{1\le i \le \delta}  \discr(i,\ell_i)$. 
Observe that the discrepancy of a configuration $C$ is the expected number of ports in which $\mathcal{A}$ outputs a label that is not the one specified by $C$.
Note that now the order of the labels in the configuration matters (the discrepancy depends on which port gets which label).
In this section, we will need to compute the distance between two configurations, in a setting where the order matters. Hence, we extend the definition of distance to this setting. For two \emph{tuples} $C_1$ and $C_2$, we define their distance $d(C_1,C_2)$ as the number of positions in which they differ. Moreover, unless otherwise specified, each constraint will now be seen as a set of tuples (e.g., if $\constr{C}_W = \{\{2,1,1\}\}$, we now consider it as $\constr{C}_W = \{(2,1,1),(1,2,1),(1,1,2)\}$).

We define $\mathcal{A}'$ as the algorithm that, among all configurations in $\constr{C}_B$, outputs the configuration $C$ (seen now as a tuple of size $\delta$ and not just as a multiset, because the order matters) that minimizes the discrepancy $\discr(C)$.
Recall that, by \Cref{obs:2sep-is-fixedpoint}, up to renaming of labels, $\re(\Pi) = (\Sigma,\delta,\Delta,\constr{C}_B,\constr{C}_W)$. Hence, the defined algorithm $\mathcal{A}'$ never produces an invalid configuration on the black nodes, and we will later upper bound its white error as a function of the black error of $\mathcal{A}$.

As a side note, the standard black-box lifting theorem of round elimination defines $\mathcal{A}'$ very differently: it first computes $p_{i,\ell}$ for each $i$ and $\ell$ as in our case, but then it takes some fixed threshold $f$ and on each port $i$ it outputs the set containing all the labels $\ell$ satisfying $p_{i,\ell} \ge f$. The threshold $f$ is then chosen to minimize the sum of the failure probabilities of the black and the white nodes. Note that such an algorithm could fail on both black and white nodes.

\subsection{A Simpler Statement}
Before proving \Cref{lem:onestep}, we provide a simpler statement that has a simpler proof. 
In some sense, we now give the best bound that we can prove if we only exploit $2$-separation (and we do not use dominance).
We will bound the error of $\mathcal{A}'$ in terms of its discrepancy. We prove a bound on the discrepancy of each node $v$, as a function of the error produced by $\mathcal{A}$ on $v$. Later, we will improve the bound and make it depend on $b$, and we will obtain \Cref{lem:onestep} by essentially going through all nodes and all possible random bits assignments.

Let $X$ be the random variable denoting the output of $\mathcal{A}$ on the edges incident to $v$, conditioned on a fixed $B_{T-1}(v)$. 
Note that $X$ is a random variable that depends on the random bits seen by the neighbors of $v$ (in $T$ rounds) and not seen by $v$ itself (in $T-1$ rounds).
Let $\varepsilon_v = \expect{d(X,\constr{C}_B) \mid B_{T-1}(v)}$. In other words, $\varepsilon_v$ is the expected error that $\mathcal{A}$ produces on $v$, conditioned on a fixed $(T-1)$-neighborhood of $v$.

Let $C_v$ be the output of $\mathcal{A}'$ on $v$. Let $d_v = \discr(C_v)$ be its discrepancy. 
\begin{lemma}\label{lem:onestep-singlenode-simplified}
    Let $v$ be a black node. Then, $d_v \le 2 \delta \varepsilon_v$.
\end{lemma}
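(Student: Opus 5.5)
The plan is to show that some valid tuple has small discrepancy by averaging over the valid outputs of $\mathcal{A}$ itself. The ingredients are the independence of ports given $B_{T-1}(v)$, a resampling trick, and $2$-separation.

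\emph{Preliminaries.} Fix $B_{T-1}(v)$ and write $X=(X_1,\ldots,X_\delta)$ for the (ordered) output of $\mathcal{A}$ on the ports of $v$.

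First, since the girth is at least $2T+3$, the parts of $B_T(u_i)$ not contained in $B_{T-1}(v)$ are pairwise disjoint for distinct neighbors $u_i$ of $v$. Hence, conditioned on $B_{T-1}(v)$, the coordinates $X_1,\ldots,X_\delta$ are independent, and $\Pr[X_i=\ell]=p_{i,\ell}$. I will double-check this disjointness claim on the tree-like neighborhood.

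Second, the tuple distance from $X$ to $\constr{C}_B$, now viewed as the set of all orderings, equals the multiset distance. Whenever $X\notin\constr{C}_B$ this distance is at least $1$, so $\Pr[X\notin\constr{C}_B]\le\varepsilon_v$.

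\emph{Trivial case.} If $\varepsilon_v\ge 1/2$, any tuple $C\in\constr{C}_B$ has $\discr(C)\le\delta\le 2\delta\varepsilon_v$, so the bound holds. From now on assume $\varepsilon_v<1/2$, so that $\Pr[X\in\constr{C}_B]\ge 1-\varepsilon_v>1/2$.

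\emph{Averaging step.} Since $\mathcal{A}'$ minimizes discrepancy over $\constr{C}_B$, $d_v$ is at most the discrepancy of a valid tuple drawn from the law of $X$ conditioned on validity:
\[
d_v\le \frac{\expect{\discr(X)\,\mathbf{1}[X\in\constr{C}_B]}}{\Pr[X\in\constr{C}_B]}.
\]
Let $Y=(Y_1,\ldots,Y_\delta)$ be an independent copy of $X$. Then $\discr(i,X_i)=1-p_{i,X_i}=\Pr[Y_i\neq X_i\mid X]$, and therefore
\[
\expect{\discr(X)\,\mathbf{1}[X\in\constr{C}_B]}=\sum_i\Pr\bigl[X_i\neq Y_i,\ X\in\constr{C}_B\bigr].
\]

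\emph{Resampling and $2$-separation (the key step).} Let $X^{(i)}$ be $X$ with coordinate $i$ replaced by $Y_i$. By the independence across ports, $X^{(i)}$ has the same law as $X$, so $\Pr[X^{(i)}\notin\constr{C}_B]\le\varepsilon_v$. Now $X$ and $X^{(i)}$ differ in at most one position. If both are in $\constr{C}_B$ and $X_i\ne Y_i$, they would be distinct valid tuples at distance $1$. As multisets their distance would then be $1$, contradicting $2$-separation. Hence the event $\{X_i\ne Y_i,\ X\in\constr{C}_B\}$ implies $X^{(i)}\notin\constr{C}_B$, and each summand above is at most $\varepsilon_v$.

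\emph{Conclusion.} Combining the bounds gives $d_v\le \delta\varepsilon_v/(1-\varepsilon_v)\le 2\delta\varepsilon_v$. The main obstacle is the step bounding the summands: one must see that the right comparison is with a one-coordinate resampling, which stays within distance $1$ so that $2$-separation applies. This relies on the conditional independence of the ports, which I expect to be the only place the girth assumption is used.
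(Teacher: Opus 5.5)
Your proof is correct and is essentially the paper's argument: you bound the same quantity $\sum_{C\in\constr{C}_B}\Pr[C]\,\discr(C)$ from below by $(1-q_v)\,d_v$ via the minimality of $C_v$, and from above by $\delta q_v$, where $q_v=\Pr[X\notin\constr{C}_B]\le\varepsilon_v$. The only difference is that your one-coordinate resampling $X^{(i)}$ stands in for the paper's explicit count over the invalid neighbors $N_i(C)$. There, $2$-separation is used twice: to show that every $Z\in N_i(C)$ is invalid, and to show that each invalid $Z$ is counted at most once per position. In your coupling the second use is automatic, since $\Pr[X^{(i)}\notin\constr{C}_B]=q_v$ directly.
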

\begin{proof}
    For a configuration $C = (\ell_1,\ldots,\ell_\delta)$, let $\pr{C}$ be the probability that the original algorithm $\mathcal{A}$ outputs $C$ on the edges incident to $v$, conditioned on the fixed $(T-1)$-neighborhood of $v$, $B_{T-1}(v)$. By the assumption on $T$ and the girth, we have that for each pair of neighbors $u_1$ and $u_2$ of $v$, it holds that $\bigl(B_T(u_1) \setminus B_{T-1}(v)\bigr) \cap \bigl(B_T(u_2) \setminus B_{T-1}(v)\bigr) = \emptyset$.
    Hence, by independence, $\pr{C} = \prod_{1\le i \le \delta}p_{i,\ell_i}$.

    In order to provide an upper bound on $d_v$, we consider the following quantity:
    \[
        \sum_{C \in \constr{C}_B} \pr{C} \discr(C).
    \]
    That is, for each valid black configuration, we sum its probability, weighted by its discrepancy.
    Let $q_v$ be the probability that $\mathcal{A}$ produces an invalid configuration on $v$, conditioned on $B_{T-1}(v)$.
    Since $C_v$ minimizes the discrepancy, we obtain the following lower bound:
        \[
        \sum_{C \in \constr{C}_B} \pr{C} \discr(C) \ge \sum_{C \in \constr{C}_B} \pr{C} \discr(C_v) = d_v \sum_{C \in \constr{C}_B} \pr{C} = (1-q_v) d_v.
    \]
    In order to provide an upper bound, we first define the \emph{invalid neighborhood of a valid configuration $C$ w.r.t.\ position $i$}, as follows. Let $C = (\ell_1,\ldots,\ell_\delta) \in \constr{C}_B$ be a valid configuration. $N_i(C)$ is the set of configurations that in all positions $j \neq i$ have label $\ell_j$, and in position $i$ have a label different from $\ell_i$. By the $2$-separated property, all configurations in $N_i(C)$ are invalid, because they differ from $C$ in exactly $1$ position.

    Observe that, for any configuration $C = (\ell_1,\ldots,\ell_\delta) \in \constr{C}_B$, and all $1 \le i \le \delta$, the following holds:
    \[
    \pr{C} \left(1 - p_{i,\ell_i}\right) \le  p_{1,\ell_1} \cdot \ldots \cdot p_{i-1,\ell_{i-1}} \cdot \left(1 - p_{i,\ell_i}\right) \cdot p_{i+1,\ell_{i+1}} \cdot \ldots \cdot p_{\delta,\ell_{\delta}}  = \sum_{Z \in N_i(C)} \pr{Z}.
    \]

    Hence, we obtain the following upper bound:
    \[
     \sum_{C \in \constr{C}_B} \pr{C} \discr(C) =  \sum_{C = (\ell_1,\ldots,\ell_\delta) \in \constr{C}_B}  \sum_{1\le i \le \delta} \pr{C} \left(1 - p_{i,\ell_i}\right) \le \sum_{C \in \constr{C}_B} \sum_{1 \le i \le \delta} \sum_{Z \in N_i(C)} \pr{Z}.
    \]
    We claim that for each possible $Z$, we sum $\pr{Z}$ at most $\delta$ times. In fact, for each $Z$ for which we sum $\pr{Z}$, we must obtain $Z$ by picking some $C \in \constr{C}_B$, then we must pick a position $1 \le i \le \delta$, then we must change the label of $C$ in position $i$ and obtain $Z$. We claim that there are at most $\delta$ choices $(C,i)$ that give $Z$: first, for each $C$, there cannot be two values $i,j$ such that $Z \in N_i(C)$ and $Z \in N_j(C)$, because $Z$ differs from $C$ in exactly one position. Then, for each position $i$, there exists at most one $C$ such that $Z \in N_i(C)$, because otherwise there would be two configurations $C_1,C_2 \in \constr{C}_B$ that, by changing them in a single place, we obtain $Z$, which would imply that $d(C_1,C_2) = 1$. By the $2$-separated property, this is not possible. Since there are $\delta$ possible positions, we obtain that, for each $Z$, we sum $\pr{Z}$ for at most $\delta$ times.

    Moreover, each $Z$ for which we sum $\pr{Z}$ is a configuration not in $\constr{C}_B$, and we know that $\sum_{Z \notin \constr{C}_B} \pr{Z} = q_v$. Hence, 
    \[
    \sum_{C \in \constr{C}_B} \sum_{1 \le i \le \delta} \sum_{Z \in N_i(C)} \pr{Z} \le \delta q_v.
    \] 
    Summarizing, we obtained the following,
    \[
        (1-q_v)d_v \le \sum_{C \in \constr{C}_B} \pr{C} \discr(C)  \le \delta q_v.
    \]
    Also, trivially, $d_v \le \delta$.
    If $q_v < 1$,
    we get the following: \[
    d_v \le \delta q_v / (1-q_v),
    \]
    and hence:
    \[
    d_v \le \min\left\{\delta,  \delta q_v / (1-q_v)\right\} \le 2\delta q_v.
    \]
    Note that, whenever the original algorithm produces an error on a node, in order to obtain a correct configuration on the node it is required to change at least one label. Hence, $q_v \le \varepsilon_v$, and the claim follows. Note that the conclusion trivially holds also if $q_v = 1$.
\end{proof}

\subsection{An Improved Bound that Depends on \texorpdfstring{$b$}{b}}
We devote this subsection to proving the following lemma.
\begin{lemma}\label{lem:onestep-singlenode}
    Let $v$ be a black node. Then, $d_v \le 16 (b+1) \varepsilon_v$.
\end{lemma}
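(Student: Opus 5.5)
The plan is to reduce the bound on $d_v$ to a statement about the product distribution of $X=(X_1,\ldots,X_\delta)$, and then to use a hybrid argument in which dominance supplies the $b$-dependence. By the girth assumption, exactly as in the proof of \Cref{lem:onestep-singlenode-simplified}, the coordinates $X_i$ are independent given $B_{T-1}(v)$, with $\pr{X_i=\ell}=p_{i,\ell}$. Also $\discr(C)=\expect{d(X,C)\mid B_{T-1}(v)}$ for every tuple $C$, so
\[
d_v=\min_{C\in\constr{C}_B}\expect{d(X,C)}.
\]
For a tuple $Z$, let $Y(Z)\in\constr{C}_B$ be a nearest valid tuple, so that $\varepsilon_v=\expect{d(X,Y(X))}$.

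Let $X'$ be an independent copy of $X$. Since a minimum is at most an average, the triangle inequality gives
\[
d_v\le \expect{d(X,Y(X'))}\le \expect{d(X,X')}+\expect{d(X',Y(X'))}=\expect{d(X,X')}+\varepsilon_v .
\]
It therefore suffices to show $\expect{d(X,X')}\le 15(b+1)\varepsilon_v$, or some bound of that form. This is an \emph{isoperimetric} statement: two independent samples of a product measure that is concentrated (in the $\varepsilon_v$ sense) near $\constr{C}_B$ must also be close to each other.

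To prove it, I would use a hybrid path. Pick a uniformly random ordering $\pi$ of the ports, independent of $X,X'$. Define $Z^{(0)}=X$, and let $Z^{(j)}$ be $Z^{(j-1)}$ with coordinate $\pi(j)$ replaced by $X'_{\pi(j)}$; then $Z^{(\delta)}=X'$. By independence of the coordinates, each $Z^{(j)}$ is distributed exactly as $X$, so $\expect{\sum_{j} d(Z^{(j)},\constr{C}_B)}=(\delta+1)\varepsilon_v$. This is too weak if used naively, because it only recovers the factor $\delta$ of \Cref{lem:onestep-singlenode-simplified}. Instead I would charge only the steps that actually change a coordinate, and use $2$-separation together with the fact that $f(Z)=d(Z,\constr{C}_B)$ is $1$-Lipschitz along the path. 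Along a stretch of changing steps, the walk either stays at distance $\ge1$ from $\constr{C}_B$, so each such step is paid by a positive value of $f$, or it passes through a valid tuple. By $2$-separation, two consecutive valid tuples on the path that differ are at least two changing steps apart, with an invalid tuple in between.

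The main obstacle is to show that, in expectation over the random order $\pi$, the number of changing steps is at most $O(b+1)$ times the expected sum of $f$ along the path, plus $O(\varepsilon_v)$. The danger is a path that hops between many valid tuples with $f=1$ only at every other step. This is where $b$-dominance enters. Every valid tuple agrees with a constant tuple $(\ell,\ldots,\ell)$ in all but at most $b$ positions. Moving from one valid tuple with dominant label $\ell$ to another with dominant label $\ell'\ne\ell$ requires a long stretch of mixed tuples, which are far from $\constr{C}_B$, so $f$ is large there, of order the number of changed coordinates. Moving between valid tuples with the same dominant label touches only $O(b)$ ``exceptional'' positions, and there $2$-separation gives at least one invalid tuple per $O(b+1)$ changes. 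I would make this rigorous by splitting the changed coordinates into those where $X_i$ or $X'_i$ differs from the dominant label of the nearby valid tuple, and the rest. The random order $\pi$ serves to average out the adversarial placement of the exceptional positions. I expect this bookkeeping, and obtaining a constant such as $16$ rather than something worse, to be the delicate part.
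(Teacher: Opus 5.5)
Your first step is correct: $Y(X')$ is a valid tuple independent of $X$, so $d_v\le\expect{d(X,Y(X'))}\le\expect{d(X,X')}+\varepsilon_v$. It is only a reformulation, though. Since $\expect{d(X,X')}=\sum_i\bigl(1-\sum_\ell p_{i,\ell}^2\bigr)=\expect{\discr(X)}\le 2\discr(C_v)=2d_v$, your target is equivalent to the lemma up to constants. All of the content therefore lies in the step you leave as ``the delicate part'', which is not proved, and the inequality you single out there does not help. Read on changing steps, it is immediate from $2$-separation. Consecutive tuples of the compressed path differ in one coordinate, so no two consecutive ones are valid, and $d(X,X')$ is at most twice the number of invalid tuples on the path. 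Hopping between valid tuples therefore costs a factor $2$, not $O(b+1)$. The trouble is the right-hand side. For any fixed order (randomizing $\pi$ changes nothing), $X_{\pi(j)}$ is independent of $Z^{(j)}$. Hence the expected sum of $f$ over changing steps equals $\expect{\discr(X)f(X)}$. This can be $\Theta(\delta)\varepsilon_v$: in black-white perfect matching with each port labeled $\mathrm{M}$ with probability $1/2$, one has $\discr(X)\equiv\delta/2$. Over the whole path the sum is $(\delta+1)\varepsilon_v$, which is even worse. Counting invalid tuples instead of summing $f$ gives $\expect{d(X,X')}\le 2\pr{X\notin\constr{C}_B}+2\expect{\discr(X)\mathbf{1}[X\notin\constr{C}_B]}$. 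This is the same kind of discrepancy-weighted count of invalid outcomes that the paper controls via $W(Z)\le\discr(Z)$. With only $\discr(X)\le\delta$ it gives an $O(\delta)\varepsilon_v$ bound, as in \Cref{lem:onestep-singlenode-simplified}.

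The missing idea is why $\discr(X)$ is $O(b+1)$ on the invalid outcomes. This comes from the product structure, not from path combinatorics, and the paper needs three ingredients for it:
\begin{enumerate}
\item A label $\ell^*$ with $\discr_{\ell^*}\le 2(\varepsilon_v+b)+1$, found by counting disagreeing port pairs and using $d_\dom(X)\le d(X,\constr{C}_B)+b$.
\item A case analysis. The regimes $\delta<8(b+1)$ and $\varepsilon_v\ge 1/4$ are handled directly, and outside them $\ell^*$ is forced to be dominant.
\item A tail bound on $k(X)$, the number of ports not labeled $\ell^*$, which is a sum of independent indicators. One shows $\expect{k(X)\mid k(X)\ge b+1}\le\discr_{\ell^*}+b+1$. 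In the remaining regime, Markov's inequality then gives $\pr{k(X)\ge\delta-b}\le\pr{b<k(X)<\delta-b}\le\varepsilon_v$, so $X$ rarely drifts toward configurations dominated by another label.
\end{enumerate}
Your remark about paths that switch dominant label gestures at this concentration, but only heuristically. Together with $\discr(X)\le\discr_{\ell^*}+k(X)$, these ingredients would close your framework at $O((b+1)\varepsilon_v)$, though the constant $16$ would still have to be checked. They are exactly the content of the paper's proof, and none of them appears in your proposal.
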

We define the set of \emph{dominant labels} as $\dom = \{ \ell \mid \exists C \in \constr{C}_B \text{ s.t.\ } m_C(\ell) \ge \delta - b\}$, that is, all labels for which there exists at least one valid configuration containing it for at least $\delta - b$ times.

The proof is structured as follows. We first prove that the output of $\mathcal{A}$ does not deviate too much from $\{\ell,\ldots,\ell\}$ for some fixed $\ell$. Then, we consider the simple case in which $\delta$ is small compared to $b$ (and we simply conclude the proof by applying \Cref{lem:onestep-singlenode-simplified}), and then, for the case in which $\delta$ is large compared to $b$, we consider separately the cases $\ell \notin \dom$ and $\ell \in \dom$.

\paragraph{The output of $\mathcal{A}$ is not too far from $\{\ell,\ldots,\ell\}$ for some $\ell$.}
Let $\discr_\ell = \discr\bigl((\ell,\ldots,\ell)\bigr)$, that is, the discrepancy of the (possibly invalid) configuration using only label $\ell$. We prove that there exists some label $\ell^*$ satisfying $\discr_{\ell^*} \le 2(\varepsilon_v + b) + 1$.
    In other words, if we sum the probabilities of not obtaining $\ell^*$, we get at most $2(\varepsilon_v + b) + 1$.
\begin{lemma}
There exists a label $\ell^*$ satisfying $\discr_{\ell^*} \le 2(\varepsilon_v + b) + 1$.
\end{lemma}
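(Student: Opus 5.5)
The plan is a probabilistic averaging argument over the random output $X=(X_1,\ldots,X_\delta)$ of $\mathcal{A}$ at $v$, conditioned on $B_{T-1}(v)$. The key facts are that the coordinates $X_i$ are independent with $\pr{X_i=\ell}=p_{i,\ell}$, as in the proof of \Cref{lem:onestep-singlenode-simplified}, and that $\expect{d(X,\constr{C}_B)}=\varepsilon_v$. For each outcome $X$, let $C(X)\in\constr{C}_B$ be a closest valid configuration, so that $d(X,C(X))=d(X,\constr{C}_B)$. By $b$-dominance, $C(X)$ has a label $\lambda(X)$ appearing at least $\delta-b$ times. Hence $X$ disagrees with the constant tuple $(\lambda(X),\ldots,\lambda(X))$ in at most $d(X,\constr{C}_B)+b$ positions.

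Next, pick a reference label. I would take an independent copy $Y$ of $X$, with the same independent coordinate distribution, and consider the random label $\lambda(Y)$. Write $m_\ell(X)$ for the number of positions $i$ with $X_i\neq\ell$. Then
\[
\sum_\ell \pr{\lambda(Y)=\ell}\,\discr_\ell=\expect{m_{\lambda(Y)}(X)},
\]
so some $\ell^*$ satisfies $\discr_{\ell^*}\le\expect{m_{\lambda(Y)}(X)}$. It therefore suffices to bound this expectation by $2(\varepsilon_v+b)+1$.

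To bound it, note that for every position $i$,
\[
\mathbf 1[X_i\neq\lambda(Y)]\le \mathbf 1[X_i\neq\lambda(X)]+\mathbf 1[\lambda(X)\neq\lambda(Y)].
\]
This crude split loses a factor of $\delta$, so I would instead use a position-wise coupling. If $X_i\neq\lambda(Y)$, then either $X_i\neq\lambda(X)$, or $X_i=\lambda(X)\neq\lambda(Y)$, and in the latter case position $i$ is one where $X$ and $Y$ are likely to disagree. The cleaner route compares $X$ and $Y$ coordinatewise:
\[
m_{\lambda(Y)}(X)\le d(X,Y)+m_{\lambda(Y)}(Y).
\]
The term $m_{\lambda(Y)}(Y)$ has expectation at most $\varepsilon_v+b$, by the first step. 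This leaves $\expect{d(X,Y)}=\sum_i\bigl(1-\sum_\ell p_{i,\ell}^2\bigr)$, which is the main obstacle, since it must be bounded by roughly $\varepsilon_v+b+1$.

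To handle $\expect{d(X,Y)}$, I would use the fact that $1-\sum_\ell p_{i,\ell}^2\le 1-\max_\ell p_{i,\ell}\le \discr(i,\ell)$ for every label $\ell$. Fixing a single outcome $y$ of $Y$ then gives $\expect{d(X,Y)}\le \min_y \sum_i \discr(i,\lambda(y))$, which is circular. To break the circularity, I would avoid the symmetric coupling and argue directly. Let $\ell^*$ maximize $\pr{\lambda(X)=\ell}$, with $\pi=\pr{\lambda(X)=\ell^*}$. If $\pi\ge 1/2$, then $\discr_{\ell^*}\le\expect{m_{\ell^*}(X)}$, which is at most $\expect{m_{\lambda(X)}(X)}+\delta\,\pr{\lambda(X)\neq\ell^*}$. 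This still carries a factor of $\delta$, and I expect the additive $+1$ in the statement to come from exploiting independence: changing a single coordinate changes the best constant label only if the configuration is nearly balanced between two labels. When $\delta>2b+2$, majority at least $\delta-b$ forces the dominant label to be unique, and $\lambda(X)\neq\lambda(Y)$ forces $d(X,Y)\ge \delta-2b-(\text{errors})$. Summing $\mathbf 1[X_i\neq\lambda(Y)]$ against the tuple on which $Y$ nearly agrees then yields the factor $2$. When $\delta\le 2b+2$, the bound is trivial, since $\discr_\ell\le\delta\le 2b+2$ for any $\ell$, and this is where the slack $+1$ is used.
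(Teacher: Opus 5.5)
Your first two steps are sound. Pointwise $m_{\lambda(X)}(X)\le d(X,\constr{C}_B)+b$, so $\expect{m_{\lambda(X)}(X)}\le\varepsilon_v+b$. Averaging $\discr_\ell$ over an independent random reference label is also a legitimate way to exhibit $\ell^*$. The gap is that you never bound the one quantity that matters, $\expect{m_{\lambda(Y)}(X)}$, and the route you sketch cannot reach the stated constant.

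The decomposition $m_{\lambda(Y)}(X)\le d(X,Y)+m_{\lambda(Y)}(Y)$ would need $\expect{d(X,Y)}\lesssim\varepsilon_v+b+1$, which is false in general. Take $b=0$, $\constr{C}_B=\{(a,\ldots,a),(c,\ldots,c)\}$, $\delta=1000$ and $p_{i,c}=0.01$ for every $i$. Then $\varepsilon_v\le 10$, while $\expect{d(X,Y)}=\delta(1-0.99^2-0.01^2)=19.8$. So this route yields about $3\varepsilon_v$ rather than $2\varepsilon_v+1$.

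There are further problems along the way. The inequality $1-\sum_\ell p_{i,\ell}^2\le 1-\max_\ell p_{i,\ell}$ is reversed, since $\sum_\ell p_{i,\ell}^2\le\max_\ell p_{i,\ell}$. Your last paragraph is heuristic: ``summing \ldots\ then yields the factor $2$'' is not backed by any inequality. Its ``trivial'' case is also off by one. For $\delta=2b+2$, the bound $\discr_\ell\le 2b+2$ does not give $2(\varepsilon_v+b)+1$ when $\varepsilon_v<1/2$; only $\delta\le 2b+1$ is trivial.

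The missing idea is to draw the reference label from a \emph{single coordinate} rather than from a global statistic like $\lambda(Y)$. Then the expectation splits into pairwise disagreement probabilities, which independence makes explicit.

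Pointwise, every ordered pair $(i,j)$ with $i\neq j$ and $X_i\neq X_j$ has $X_i\neq\lambda(X)$ or $X_j\neq\lambda(X)$. So there are at most $2\delta\,m_{\lambda(X)}(X)$ such pairs, and by your first step their expected number is at most $2\delta(\varepsilon_v+b)$. Hence some $i^*$ satisfies $\sum_{j\neq i^*}\pr{X_{i^*}\neq X_j}\le 2(\varepsilon_v+b)$. By independence, the left-hand side equals $\sum_\ell p_{i^*,\ell}\sum_{j\neq i^*}(1-p_{j,\ell})$, so some $\ell^*$ has $\sum_{j\neq i^*}(1-p_{j,\ell^*})\le 2(\varepsilon_v+b)$. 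Adding the term $1-p_{i^*,\ell^*}\le 1$ gives the claim.

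This is the paper's argument, and it shows that the $+1$ comes from the diagonal term $j=i^*$, not from a case split on $\delta$. In your coupling language, it amounts to replacing $\lambda(Y)$ by $Y_J$ for a uniformly random position $J$.
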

\begin{proof}
    Recall that $X$ is the random output of $\mathcal{A}$ conditioned on $B_{T-1}(v)$. Let $X_i$ be the random variable denoting the output of $\mathcal{A}$ on port $i$ of $v$, conditioned on $B_{T-1}(v)$.

    Let $d_\dom(C) = \min_{\ell \in \dom} d\bigl(C,(\ell,\ldots,\ell)\bigr)$, that is, the minimum distance from $C$ to a configuration using a single dominant label.

    We start asking, for how many pairs $(i,j)$ where $i \neq j$, algorithm $\mathcal{A}$ produces an output where $X_i \neq X_j$, in expectation. The following holds:
    \[
    \expect*{\sum_{i\neq j}\mathbf{1}_{\{X_i \neq X_j\}}} \le \expect*{ \delta(\delta-1) - \bigl(\delta - d_{\dom}(X)\bigr)\bigl(\delta - d_{\dom}(X) - 1\bigr) } \le \expect{2 \delta d_{\dom}(X)},
    \]
    where the first inequality provides an upper bound as follows. Let $\ell$ be the label from $\dom$ that appears in (the realization of) $X$ the most. It appears $\delta - d_{\dom}(X)$ times. Then, the number of different pairs is at most the number of pairs different from $(\ell,\ell)$.

    Observe that $d_\dom(C) \le d(C,\constr{C}_B) + b$, because $C$ is within distance $d(C,\constr{C}_B)$ from a valid configuration, and any valid configuration contains some dominant label at least $\delta - b$ times by the $b$-dominant property.

    Similarly, $\expect{d_\dom(X)} \le \varepsilon_v + b$, because $X$ has expected error $\varepsilon_v$, and hence expected distance $\varepsilon_v$ from a valid configuration.
    We thus get the following:
    \[
    \expect*{\sum_{1 \le i \le \delta} \sum_{\substack{1 \le j \le \delta,\\j \neq i}}\mathbf{1}_{\{X_i \neq X_j\}}} \le 2 \delta (\varepsilon_v + b).
    \]
    This means that there exists some $i^*$ such that:
    \[
    \expect*{\sum_{\substack{1 \le j \le \delta,\\j\neq i^*}}\mathbf{1}_{\{X_{i^*} \neq X_j\}}} \le 2 (\varepsilon_v + b).
    \]
    This implies the following:
    \[
      \expect*{\sum_{\substack{1 \le j \le \delta,\\j \neq i^* }}\mathbf{1}_{\{X_{i^*} \neq X_j\}}} = \sum_{\substack{1 \le j \le \delta,\\j\neq i^* }}\pr{X_{i^*} \neq X_j} = \sum_{\ell \in \Sigma} p_{i^*,\ell} \sum_{\substack{1 \le j \le \delta\\j \neq i^*}} (1 - p_{j,\ell}) \le 2 (\varepsilon_v + b).
    \]
    This implies that there exists some $\ell^*$ satisfying:
    \[
    \sum_{\substack{1 \le j \le \delta\\j \neq i^*}} \left(1 - p_{j,\ell^*}\right) \le 2 (\varepsilon_v + b).
    \]
    Since $1-p_{i^*,\ell^*} \le 1$, we thus get:
    \[
    \discr_{\ell^*} = \sum_{\substack{1 \le j \le \delta}} \left(1 - p_{j,\ell^*}\right) \le 2 (\varepsilon_v + b) + 1.
    \]
\end{proof}

    \paragraph{The case $\delta < 8(b+1)$.}
    We start by considering a simple case, that is, $\delta < 8(b+1)$. In this case, by applying \Cref{lem:onestep-singlenode-simplified}, we get $d_v \le 2 \delta \varepsilon_v \le 16(b+1) \varepsilon_v$, concluding the proof.
    Hence, in the following, assume $\delta \ge 8(b+1)$.

    \paragraph{The case $\ell^* \notin \dom$.}
    Assume $\ell^* \notin \dom$. 
    On a high level, we now know that $\mathcal{A}$ produces an output that, in expectation, is not far from $\{\ell^*,\ldots,\ell^*\}$ and $\ell^*$ is not a dominant label, but any valid configuration contains some $\ell \neq \ell^*$ for at least $\delta - b$ times. This implies that $\mathcal{A}$ produces a large error as a function of $\delta$, and in order for this to not be a contradiction we get that $\delta$ is small (i.e., linear in $\varepsilon_v + b +1$).

    More formally, observe that, for any $\ell \in \dom$, $d\bigl(\{\ell,\ldots,\ell\},\{\ell^*,\ldots,\ell^*\}\bigr) = \delta$. Hence, for any configuration $C$, $d_{\dom}(C) + d\bigl(C,\{\ell^*,\ldots,\ell^*\}\bigr) \ge \delta$.
    This implies the following:
    \[
    \delta \le \expect{d_{\dom}(X)} + \expect*{d\bigl(X,\{\ell^*,\ldots,\ell^*\}\bigr)} \le \varepsilon_v + b + \discr_{\ell^*} \le \varepsilon_v + b + 2(\varepsilon_v + b) + 1 = 3\varepsilon_v + 3b +1.
    \]
    We now conclude the proof with a simple case analysis.
    Suppose $\varepsilon_v \ge 1/4$. Then,
    \[
    d_v \le \delta \le 3\varepsilon_v + 3b +1 \le  3\varepsilon_v + 3b +4\varepsilon_v = 7\varepsilon_v + 3b \le 12(b+1)\varepsilon_v,
    \]
    concluding the proof. Hence, assume $\varepsilon_v < 1/4$. This case does not apply, since $8(b+1) \le \delta \le 3\varepsilon_v + 3b + 1 < 3/4 + 3b +1$ gives a contradiction.

    \paragraph{The case $\ell^* \in \dom$.}
    If $\varepsilon_v \ge 1/4$, then the discrepancy $d_v$ of $\mathcal{A}'$ is at most the discrepancy of any configuration $C \in \constr{C}_B$ satisfying $m_C(\ell^*) \ge \delta - b$, which exists by the assumption  $\ell^* \in \dom$ and the definition of $\dom$.
    This discrepancy is at most the discrepancy of $\{\ell^*,\ldots,\ell^*\}$, which is $\discr_{\ell^*}$, plus the distance from it to $C$.
    We thus get the following,
    \[
    d_v \le \discr_{\ell^*} + b \le 2(\varepsilon_v + b) + 1 + b = 2\varepsilon_v + 3b + 1 \le 2\varepsilon_v + 3b + 4\varepsilon_v \le 6\varepsilon_v + 3b \le 12(b+1)\varepsilon_v,
    \]
    which concludes the proof. Hence, assume $\varepsilon_v < 1/4$.
    Observe that the following holds:
    \[
    \discr_{\ell^*} \le 2(\varepsilon_v + b) + 1 < 2\left(\frac{1}{4} + b\right) + 1 = 2b + \frac{3}{2}.
    \]

    \subparagraph{Restricting to valid configurations dominated by $\ell^*$.}
    Informally, we now restrict to the valid configurations that contain $\ell^*$ for at least $\delta - b$ times, and, in such a setting, we prove an analogue of a statement proved in \Cref{lem:onestep-singlenode-simplified}.
    Let $\constr{C}_{\ell^*} = \{C \in \mathcal{C}_B \mid m_C(\ell^*) \ge \delta - b\}$, that is, $\constr{C}_{\ell^*}$ is the set of valid configurations that have $\ell^*$ as dominant label. Let $q_{\ell^*} = \pr{X \notin \constr{C}_{\ell^*}}$. Let $d_{\ell^*} = \min_{C \in \constr{C}_{\ell^*} } \expect{ d(C,X) }$. In other words, consider the setting in which only the configurations in $\constr{C}_{\ell^*}$ are valid. In such a setting, $q_{\ell^*}$ is the probability that $\mathcal{A}$ fails, and $d_{\ell^*}$ is the minimum discrepancy of $\mathcal{A}'$ if it is restricted to output configurations from  $\constr{C}_{\ell^*}$.

    \begin{lemma}
    $d_{\ell^*} \le \left(\discr_{\ell^*} + b +1\right)q_{\ell^*}/\left(1-q_{\ell^*}\right).$
    \end{lemma}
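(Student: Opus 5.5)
The plan is to rerun the argument of \Cref{lem:onestep-singlenode-simplified}, with $\constr{C}_{\ell^*}$ in place of $\constr{C}_B$, and then sharpen the multiplicity count using the dominance of $\ell^*$. Since $\discr(C)=\expect{d(C,X)}$, and by the girth assumption $\pr{C}=\prod_i p_{i,\ell_i}$, minimality of $d_{\ell^*}$ over $\constr{C}_{\ell^*}$ gives the lower bound
\[
\sum_{C\in\constr{C}_{\ell^*}}\pr{C}\,\discr(C)\;\ge\; d_{\ell^*}\sum_{C\in\constr{C}_{\ell^*}}\pr{C}\;=\;(1-q_{\ell^*})\,d_{\ell^*}.
\]
So it suffices to show that the left-hand side is at most $(\discr_{\ell^*}+b+1)\,q_{\ell^*}$. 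For the upper bound I will reuse the invalid neighborhoods $N_i(C)$, which satisfy $\pr{C}(1-p_{i,\ell_i})\le\sum_{Z\in N_i(C)}\pr{Z}$. By $2$-separation every $Z\in N_i(C)$ lies outside $\constr{C}_B$, hence outside $\constr{C}_{\ell^*}$. The left-hand side is therefore bounded by $\sum_{Z\notin\constr{C}_{\ell^*}}\pr{Z}\,\mu(Z)$, where $\mu(Z)$ is the number of pairs $(C,i)$ with $C\in\constr{C}_{\ell^*}$ and $Z\in N_i(C)$.

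As in the simplified lemma, for each position $i$ at most one $C$ is possible, so $\mu(Z)$ is at most the number of admissible positions. The aim is to replace the crude bound $\delta$ by roughly $d\bigl(Z,(\ell^*,\ldots,\ell^*)\bigr)+b+1$. I split the pairs $(C,i)$ into three types.
\begin{itemize}
\item Positions with $Z_i\neq\ell^*$. There are exactly $d\bigl(Z,(\ell^*,\ldots,\ell^*)\bigr)$ of these.
\item Positions with $Z_i=\ell^*$ and $\ell_i\neq\ell^*$, where $C$ has no other non-$\ell^*$ label outside position $i$ beyond those of $Z$. These are problematic, since there may be many such positions.
\item Positions with $\ell_i=\ell^*=Z_i$. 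These are impossible, since $Z$ and $C$ must differ at $i$.
\end{itemize}
The second type is the one to control. My first attempt is not to charge these terms to $Z$ at all. Instead, I would sum them directly over $C$: each $C\in\constr{C}_{\ell^*}$ has at most $b$ positions with $\ell_i\neq\ell^*$. For each such position, I compare $\pr{C}$ with the probability of the configuration $Z'$ obtained from $C$ by setting position $i$ to $\ell^*$ (or to some other label). I would then use the independence across ports to write $\pr{C}(1-p_{i,\ell_i})\le\sum_{Z\in N_i(C)}\pr{Z}$, but now charge only to those $Z$ whose non-$\ell^*$ positions are a subset of $C$'s non-$\ell^*$ positions plus at most one extra position. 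Such $Z$ are reached from at most $b+1$ pairs of this type, which is the origin of the additive $b+1$.

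The remaining task, which I expect to be the main obstacle, is to show that
\[
\sum_{Z\notin\constr{C}_{\ell^*}}\pr{Z}\,d\bigl(Z,(\ell^*,\ldots,\ell^*)\bigr)\;\le\;\discr_{\ell^*}\,q_{\ell^*},
\]
i.e., a decorrelation between ``$X_j\neq\ell^*$'' and ``$X\notin\constr{C}_{\ell^*}$''. Summing over $j$, it suffices to show $\pr{X_j\neq\ell^*,\,X\notin\constr{C}_{\ell^*}}\le(1-p_{j,\ell^*})\,q_{\ell^*}$. I would attempt this via the product structure. Condition on $X_{-j}$: the event $X_j\neq\ell^*$ has probability $1-p_{j,\ell^*}$ independently of $X_{-j}$. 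Whether $X\notin\constr{C}_{\ell^*}$ depends on $X_j$, but a monotonicity or coupling argument should work: replacing $X_j$ by $\ell^*$ can only make membership in $\constr{C}_{\ell^*}$ fail in a way that is itself charged to $q_{\ell^*}$, using $2$-separation to handle the one-position change. If the clean inequality fails, I would fall back to a slightly weaker inequality with the lost terms absorbed into the $b+1$ slack. The lemma then follows by dividing by $1-q_{\ell^*}$.
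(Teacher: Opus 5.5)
There is a genuine gap. Your lower bound $\sum_{C\in\constr{C}_{\ell^*}}\pr{C}\discr(C)\ge(1-q_{\ell^*})d_{\ell^*}$ is right, but the upper bound fails where you replace $\pr{C}(1-p_{i,\ell_i})$ by $\sum_{Z\in N_i(C)}\pr{Z}$. Once the factor $p_{i,\ell_i}$ is dropped, no recounting can recover the lemma, because the unweighted multiplicity $\mu(Z)$ really can reach $\delta$. Take perfect matching ($b=1$, $\ell^*=\mathrm{U}$) with $p_{1,\mathrm{M}}=1-\eta$ and $p_{j,\mathrm{U}}=1$ for $j\ge 2$, and let $C^{(j)}$ have $\mathrm{M}$ at port $j$. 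The all-$\mathrm{U}$ configuration lies in $N_j(C^{(j)})$ for every $j$, and all of these are type-2 pairs. So your intermediate bound equals $\delta\eta$, whereas $(\discr_{\ell^*}+b+1)q_{\ell^*}=(3-\eta)\eta$.

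Your claim that type-2 configurations are reached from at most $b+1$ pairs mixes up two counts. The count per $C$ is at most $b$ (its non-$\ell^*$ positions). The count per $Z$ can be as large as $\delta-k(Z)$, where $k(Z)=d\bigl(Z,(\ell^*,\ldots,\ell^*)\bigr)$.

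Your ``remaining task'' is also false. Take $p_{1,\mathrm{M}}=1$, $p_{2,\mathrm{M}}=\eta'$, and all other ports surely $\mathrm{U}$. Then $\sum_{Z\notin\constr{C}_{\ell^*}}\pr{Z}k(Z)=2\eta'$, but $\discr_{\ell^*}q_{\ell^*}=(1+\eta')\eta'$. The per-port version fails too: $\pr{X_2\neq\mathrm{U},X\notin\constr{C}_{\mathrm{U}}}=\eta'$ versus $(1-p_{2,\mathrm{U}})q_{\mathrm{U}}=(\eta')^2$. The inequality is also aimed at the wrong terms. Type-1 pairs need no decorrelation: any $Z$ reached by some pair has $k(Z)\le b+1$, so they contribute at most $(b+1)q_{\ell^*}$. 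The $\discr_{\ell^*}$ term has to come from the type-2 pairs.

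The missing idea is to keep the factor $p_{i,\ell_i}$. By independence, $\pr{C}(1-p_{i,\ell_i})=p_{i,\ell_i}\sum_{Z\in N_i(C)}\pr{Z}$ holds with equality. Swapping the order of summation gives
\[
\sum_{C\in\constr{C}_{\ell^*}}\pr{C}\discr(C)=\sum_{Z\notin\constr{C}_{\ell^*}}\pr{Z}\,W(Z),\qquad W(Z)=\sum_{(i,\ell)\,:\,Z^{i\rightarrow\ell}\in\constr{C}_{\ell^*}}p_{i,\ell},
\]
so each $Z$ is charged a probability-weighted count instead of $\mu(Z)$. Every such $\ell$ differs from $Z_i$, hence $W(Z)\le\sum_i(1-p_{i,Z_i})=\discr(Z)\le\discr_{\ell^*}+k(Z)$. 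Moreover, $W(Z)>0$ forces $k(Z)\le b+1$. Thus $W(Z)\le\discr_{\ell^*}+b+1$, which gives the lemma after dividing by $1-q_{\ell^*}$. In the first example, the weight of the all-$\mathrm{U}$ configuration is $\sum_j p_{j,\mathrm{M}}=1-\eta$ rather than $\delta$.
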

    \begin{proof}
    We consider the following quantity:
    \[
    \sum_{C \in \constr{C}_{\ell^*}} \pr{C} \discr(C).
    \]
    Similarly as in the proof of \Cref{lem:onestep-singlenode-simplified}, we lower bound this quantity as follows:
    \[
    \sum_{C \in \constr{C}_{\ell^*}} \pr{C} \discr(C) \ge d_{\ell^*}\sum_{C \in \constr{C}_{\ell^*}} \pr{C} = \left(1 - q_{\ell^*}\right)d_{\ell^*}.
    \]

    Similarly as in the proof of \Cref{lem:onestep-singlenode-simplified}, for a configuration $C = (\ell_1,\ldots,\ell_\delta) \in \constr{C}_{\ell^*}$, we define  $N_i(C)$ as the set of configurations that in all positions $j \neq i$ have label $\ell_j$, and in position $i$ have a label different from $\ell_i$. By the $2$-separated property, all configurations in $N_i(C)$ are not in $\constr{C}_{\ell^*}$, because they differ from $C$ in exactly $1$ position, and $\constr{C}_{\ell^*} \subseteq \constr{C}_B$.
    Observe that, for any configuration $C = (\ell_1,\ldots,\ell_\delta) \in \constr{C}_{\ell^*}$, and all $1 \le i \le \delta$, the following holds:
    \[
    \pr{C} \left(1 - p_{i,\ell_i}\right) = p_{i,\ell_i}\cdot  \left(p_{1,\ell_1} \cdot \ldots \cdot p_{i-1,\ell_{i-1}} \cdot \left(1 - p_{i,\ell_i}\right) \cdot p_{i+1,\ell_{i+1}} \cdot \ldots \cdot p_{\delta,\ell_{\delta}}\right)  = p_{i,\ell_i}\sum_{Z \in N_i(C)} \pr{Z},
    \]
    which is similar to the bound used in \Cref{lem:onestep-singlenode-simplified}, except that there we upper bounded $p_{i,\ell_i}$ by $1$. In \Cref{lem:onestep-singlenode-simplified}, we continued by upper bounding (by $\delta$) how many times we sum $\pr{Z}$ for each $Z$. We now proceed similarly, except that now we perform a count that is weighted by $p_{i,\ell_i}$.

    Let $C^{i\rightarrow \ell}$ be the configuration obtained by replacing the $i$th label of $C$ with $\ell$.
    We obtain the following equality: 
    \begin{align*}
        \sum_{C \in \constr{C}_{\ell^*}} \pr{C} \discr(C) &=  \sum_{C = (\ell_1,\ldots,\ell_\delta) \in \constr{C}_{\ell^*}}  \sum_{1\le i \le \delta} \pr{C} \left(1 - p_{i,\ell_i}\right)
        \\&= \sum_{C = (\ell_1,\ldots,\ell_\delta) \in \constr{C}_{\ell^*}} \sum_{1 \le i \le \delta} p_{i,\ell_i} \sum_{Z \in N_i(C)} \pr{Z} 
        \\&=\sum_{\substack{(C,i,Z) ~: \\ C = (\ell_1,\ldots,\ell_\delta) \in \constr{C}_{\ell^*} \\\land ~ 1 \le i \le \delta\\ \land ~ Z \in N_i(C)}} p_{i,\ell_i} \pr{Z}
        \\&=\sum_{Z  \notin \constr{C}_{\ell^*}} \pr{Z} \sum_{\substack{(C,i) ~:\\C = (\ell_1,\ldots,\ell_\delta) \in \constr{C}_{\ell^*}\\ \land~ 1 \le i \le \delta\\\land~ Z \in N_i(C) }} p_{i,\ell_i} 
        \\&=\sum_{Z \notin \constr{C}_{\ell^*}} \pr{Z} \cdot W(Z), \text{ where }W(Z) = \sum_{(i,\ell) ~:~ Z^{i\rightarrow \ell} \in \constr{C}_{\ell^*}}p_{i,\ell},
    \end{align*}
    where in the fourth equality we used the $2$-separated property to infer that if $Z \in N_i(C)$ then $Z \notin  \constr{C}_{\ell^*}$. We call $W(Z)$ the \emph{weight} of $Z$.
    In the proof of \Cref{lem:onestep-singlenode-simplified}, we essentially upper bounded the weight of $Z$ with $\delta$, but we now provide a better bound.

    Note that $\ell$ cannot be equal to the $i$th label of $Z$, since $Z \notin \constr{C}_{\ell^*}$ but $Z^{i\rightarrow \ell} \in \constr{C}_{\ell^*}$.
    Hence, let $Z = (\ell_1,\ldots,\ell_\delta)$. We get that \[
    W(Z) \le \sum_{1 \le i \le \delta} \left(1 - p_{i,\ell_i}\right) = \discr(Z),
    \]
    that is, the weight of $Z$ is upper bounded by its discrepancy, which can be upper bounded by the discrepancy of $(\ell^*,\ldots,\ell^*)$, plus the distance of $Z$ from $(\ell^*,\ldots,\ell^*)$.
    Let $k(Z) = |\{i \mid \ell_i \neq \ell^*\}|$.
    We thus get the following:
    \begin{align*}
    W(Z) &\le \discr_{\ell^*} + k(Z).
    \end{align*}
    Now observe that, if $W(Z) > 0$, then $Z \in N_i(C)$ for some $i$ and some $C \in \constr{C}_{\ell^*}$. Hence, by changing $1$ label of $Z$ we obtain a configuration in $ \constr{C}_{\ell^*}$, and by changing at most $b$ additional labels we reach $\{\ell^*,\ldots,\ell^*\}$. Hence, if $W(Z) > 0$, we get that $k(Z) \le b+1$.
    We obtain the following:
    \[
    \sum_{C \in \constr{C}_{\ell^*}} \pr{C} \discr(C) = \sum_{Z \notin \constr{C}_{\ell^*}} \pr{Z} \cdot W(Z) \le \sum_{Z \notin \constr{C}_{\ell^*}} \pr{Z} \left(\discr_{\ell^*} + b + 1\right) = \left(\discr_{\ell^*} + b + 1\right)q_{\ell^*}.
    \]
    Hence,
    \[
    \left(1 - q_{\ell^*}\right)d_{\ell^*} \le \sum_{C \in \constr{C}_{\ell^*}} \pr{C} \discr(C) \le \left(\discr_{\ell^*} + b +1\right)q_{\ell^*},
    \]
    as required.
    \end{proof}

    \subparagraph{Considering again all the valid configurations.}
    We now bound $q_{\ell^*}$ as a function of $\varepsilon_v$. Observe that, for each configuration $C = (\ell_1,\ldots,\ell_\delta) \in \constr{C}_B \setminus \constr{C}_{\ell^*}$, it holds that \[
    k(C) = |\{i \mid \ell_i \neq \ell^*\}| \ge \delta - b,
    \]
    by the $b$-dominant property and the fact that $C$ is valid but not in $\constr{C}_{\ell^*}$.
    Hence, any configuration $Z$ such that $b < k(Z) < \delta - b$ is invalid.
    Hence,\[
     \pr{b < k(X) < \delta - b} \le \varepsilon_v.
    \]
    
    Recall that $\delta \ge 8(b+1)$ and that $\discr_{\ell^*} < 2b + \frac{3}{2}$.
    Assume that \(\pr{k(X) \ge b+1} > 0\).
    We now prove that $\expect{k(X) \mid k(X) \ge b+1 } \le \discr_{\ell^*} + b+1$.

    Define \(Y_i\) to be the indicator variable of the event that \(X_i \neq \ell^*\). 
    The variables \(Y_1, \ldots, Y_\delta\) are independent and \(k(X) = \sum_{i = 1}^\delta Y_i\).
    Also, \(\expect{k(X)} = \sum_{i = 1}^\delta \pr{Y_i = 1} = \discr_{\ell^*}\).
    For each \(i \in \{1, \ldots, \delta\}\), define \(k_{-i}(X) = \sum_{j \neq i} Y_j\).
    For an event \(A\), let \(\mathds{1}(A)\) be the indicator random variable for event \(A\).
    By independence, 
    \[
        \expect{k(X) \mathds{1}(\{k(X) \ge b+1\})}= \sum_{i = 1}^\delta \pr{Y_i = 1} \pr{k_{-i}(X) \ge b}.
    \]
    Now, \(\pr{k_{-i}(X) \ge b} = \pr{k_{-i}(X) \ge b+1} + \pr{k_{-i}(X) = b}\).
    Note that \(\pr{k_{-i}(X) \ge b+1} \le \pr{k(X) \ge b+1}\).
    Hence,
    \begin{align*}
        \expect{k(X) \mathds{1}(\{k(X) \ge b+1\})} & \le \sum_{i = 1}^\delta \pr{Y_i = 1} \pr{k(X) \ge b+1} + \sum_{i = 1}^\delta \pr{Y_i = 1} \pr{k_{-i}(X) = b}.
    \end{align*}
    Note that
    \begin{align*}
        \sum_{i = 1}^\delta \pr{Y_i = 1} \pr{k_{-i}(X) = b} & = \sum_{i = 1}^\delta \pr{Y_i = 1,k(X) = b+1} \\
        & = \expect{\sum_{i = 1}^\delta Y_i \mathds{1}(\{k(X) = b+1\})} \\
        & = (b+1) \pr{k(X) = b+1} \\
        & \le (b+1) \pr{k(X) \ge b+1}.
    \end{align*}
    Therefore,
    \begin{align*}
        \expect{k(X) \mathds{1}(\{k(X) \ge b+1\})} & \le \left(\sum_{i = 1}^\delta \pr{Y_i = 1}  + b+1 \right) \pr{k(X) \ge b+1} \\
        & = \left(\discr_{\ell^*} + b+1\right)\pr{k(X) \ge b+1}.
    \end{align*}
    Since by definition
    \[
        \expect{k(X) \mid k(X) \ge b+1 } = \frac{\expect{k(X)\mathds{1}(\{k(X) \ge b+1\})}}{\pr{k(X) \ge b+1}},
    \]
    dividing both sides of the previous inequality by \(\pr{k(X) \ge b+1}\) gives the desired claim.

    Now assume $\pr{k(X)\ge b+1}=0$. 
    Then $\pr{k(X)\ge\delta-b}=0\le\varepsilon_v$, so the desired bound holds trivially. 
    Otherwise, by Markov's inequality under the conditional distribution given \(k(X) \ge b+1\), we get the following:
    \begin{align*}
    \frac{\pr{k(X) \ge \delta - b}}{\pr{k(X) \ge b+1}}  &= \frac{\pr{k(X) \ge \delta - b \land k(X) \ge b+1}}{\pr{k(X) \ge b+1}} =   \pr{k(X) \ge \delta - b \mid k(X) \ge b+1 } \\&\le \frac{\discr_{\ell^*} + b + 1}{\delta-b} \le 
    \frac{3b + \frac{5}{2}}{\delta-b} \le \frac{3b + \frac{5}{2}}{7b+8}
    \le 
    \frac{1}{2}.
    \end{align*}
    Hence, 
    \[
    2\pr{k(X) \ge \delta - b} \le \pr{k(X) \ge b+1} =  \pr{b < k(X) < \delta -b} + \pr{ k(X) \ge \delta -b}.
    \]
    This implies \[
    \pr{k(X) \ge \delta - b} \le \pr{b < k(X) < \delta -b} \le \varepsilon_v.
    \]
    We get that \[
    q_{\ell^*} \le \pr{X \notin \constr{C}_B} + \pr{ k(X) \ge \delta-b } \le 2 \varepsilon_v < \frac{1}{2}.
    \]
    Finally,
    \begin{align*}
        d_v &\le d_{\ell^*} \le \frac{\left(\discr_{\ell^*} + b +1\right)q_{\ell^*}}{1-q_{\ell^*}} \le 2\left(\discr_{\ell^*} + b +1\right)q_{\ell^*} \\&\le 4\left(\discr_{\ell^*} + b +1\right)\varepsilon_v \le 4\bigl((2b + 2) + b +1\bigr) \varepsilon_v
        = 12(b+1)\varepsilon_v,
    \end{align*}
    concluding the proof.

\subsection{Proof of \texorpdfstring{\Cref{lem:onestep}}{the Main Lemma}.}
The proof is essentially obtained by summing the bound of \Cref{lem:onestep-singlenode} over all nodes, and considering all possible random bits assignments.

In order to prove this statement, we upper bound the error of the algorithm $\mathcal{A}'$ that we previously defined. 
In order to bound the error of $\mathcal{A}'$, we bound its expected deviation from the output of $\mathcal{A}$, that is, in expectation, on how many edges $\mathcal{A}'$ produces an output that is not exactly the same as the output produced by $\mathcal{A}$. Suppose the expected deviation is $d$. Then, we can upper bound the error of  $\mathcal{A}'$ as $d$ because:
    \begin{itemize}
        \item $\mathcal{A}'$ has $0$ black error.
        \item Given an output of $\mathcal{A}'$, in expectation we need to change $d$ labels to obtain the output of $\mathcal{A}$. But $\mathcal{A}$ has $0$ white error, and hence we need to change $d$ labels in expectation to produce a solution with $0$ white error. 
        \item The error of $\mathcal{A}'$ is the sum, over all black nodes and all white nodes, of how many labels we need to change at most to make the node happy. We hence have that the black error is $0$ and the expected white error is upper bounded by $d$.
    \end{itemize}
    Hence, the claim follows if we prove that $d \le 16(b+1) \varepsilon$. 
    Let $B$ be the set of black nodes.
    Observe that $\expect*{\sum_{v \in B} \varepsilon_v} \le \varepsilon$, and $\expect*{\sum_{v \in B} d_v} = d$, where the expectation is taken over all possible random bit assignments to the nodes.
    Hence, by \Cref{lem:onestep-singlenode} and linearity of expectation,
    \[
        d = \expect*{\sum_{v \in B} d_v} \le 16 (b+1) \expect*{\sum_{v \in B} \varepsilon_v} \le 16 (b+1) \varepsilon.
    \]
    Hence, the claim follows.

\section{Applications}\label{sec:applications}
In this section, we provide some applications of \Cref{thm:newlifting}.
We will exploit the following high-girth graph construction, for which we provide an explicit proof for the exact parameter range we need.
\begin{lemma}\label{lem:graph-family-1}
    There exist three positive constants $\alpha$, $\Delta_0$, and $\varepsilon$ such that, for every sufficiently large $n$ and every $\Delta$ satisfying that $n\Delta$ is even and that $\Delta_0\le\Delta\le n^{1/4}$, there exists a $\Delta$-regular graph on $n$ nodes that has girth at least $\varepsilon \log_\Delta n$ and independence number at most $\alpha \frac{n \log \Delta}{\Delta}$.
\end{lemma}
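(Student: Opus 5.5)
This is essentially the statement of \Cref{lem:app:high-girth-independent} in the appendix. The plan is to prove it by the probabilistic method on random $\Delta$-regular graphs, followed by a deterministic repair step that removes short cycles. Fix a small auxiliary constant $\gamma>0$ and set $g=\lfloor \gamma\log_\Delta n\rfloor$. If $\gamma\log_\Delta n<4$, the required girth is below $3$, so any simple graph suffices. It then remains only to control the independence number, which we do below.

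First, sample a uniformly random simple $\Delta$-regular graph on $n$ nodes. Two known results apply in the range $\Delta_0\le\Delta\le n^{1/4}$. The Frieze--{\L}uczak theorem shows that, with probability $1-o(1)$, the independence number is at most $C(n/\Delta)\ln\Delta$. The McKay--Wormald--Wysocka cycle-count bounds, valid because $(\Delta-1)^{2g-1}\le n^{2\gamma}=o(n)$, show that, with probability $1-o(1)$, the number of cycles of length $<g$ is at most $\sum_{r<g}\max\{(\Delta-1)^r/r,\log n\}=O(g\log n+n^\gamma)\le n^{4\gamma}$. Hence some graph $G''$ satisfies both bounds simultaneously.

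Second, remove the short cycles while preserving regularity, using Alon's edge-switching procedure. While a cycle $K$ of length $<g$ exists, pick an edge $xy\in K$. Then pick an edge $uv$ that lies at distance $\ge g$ from $K$ and on no short cycle, and replace $xy,uv$ by $xv,yu$. Such an edge $uv$ exists: the edges excluded by the distance condition number at most $g\Delta^{g+1}$, and those excluded by lying on a short cycle number at most $gn^{4\gamma}$. Both are $o(n\Delta)$ for $\gamma$ small. The switch creates no new short cycle, by a case analysis on which of the new edges a putative short cycle $D$ uses. If $D$ uses only one new edge, it yields a short path from $K$ to $u$ or $v$, contradicting the distance condition. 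If $D$ uses both, we get either a short $u$--$v$ path, contradicting that $uv$ lies on no short cycle, or two paths of length $\ge g$. Since the switch destroys $K$, the procedure terminates after at most $n^{4\gamma}$ switches, and the final graph $G$ has girth $\ge g\ge(\gamma/2)\log_\Delta n$.

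Finally, an independent set $I$ of $G$ can span only edges of $G''$ that were deleted during the switching, and there are at most $2n^{4\gamma}$ of these. Removing one endpoint of each yields an independent set of $G''$, so $|I|\le C(n/\Delta)\ln\Delta+2n^{4\gamma}\le 2C(n/\Delta)\ln\Delta$, using $\Delta\le n^{1/4}$ and $\gamma$ small. The main obstacle is the switching step: we must check that the new edges close no short cycle, and that enough candidate edges $uv$ survive throughout the procedure. The parameter choices $\Delta^{g}\le n^{\gamma}$ and the bound on the number of short cycles are exactly what make this counting go through.
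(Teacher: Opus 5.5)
Your proposal is correct and follows the paper's appendix proof essentially step for step. You use Frieze--{\L}uczak together with McKay--Wormald--Wysocka to obtain a graph with small independence number and at most $n^{4\gamma}$ short cycles, then apply Alon's degree-preserving switching with the same two conditions on $uv$ and the same case analysis, and finally correct the independence number by $2n^{4\gamma}$. Two details are left implicit: taking $\Delta_0\ge 3$, which is needed for the geometric-sum bound, and the fact that the distance condition also ensures $xv,yu$ are not already edges, so each switch keeps the graph simple.
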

\begin{proof}[Proof sketch]
    The first part of the main theorem of Frieze and Łuczak~\cite{frieze-luczak-1992} implies that, for every sufficiently large \(n\) and every \(\Delta_0\le\Delta\le n^{1/4}\) with \(n\Delta\) even, a uniformly random simple \(\Delta\)-regular graph on \(n\) nodes has independence number at most \(O(n\log\Delta/\Delta)\) with high probability.
    Fix a sufficiently small auxiliary constant $\eta>0$.
    If $\eta\log_\Delta n<4$, the graph is simple and hence has girth at least $3>\frac{\eta}{2}\log_\Delta n$. Thus, this case is already covered by taking $\varepsilon=\eta/2$.
  Otherwise, for \(g=\lfloor\eta\log_\Delta n\rfloor\), the condition \((\Delta-1)^{2g-1}=o(n)\) of McKay, Wormald, and Wysocka~\cite{mckay-wormald-wysocka-2004} is satisfied.
    Lemma~1 of their paper, together with the Poisson means defined in their equation~(1.2), implies that such a graph has at most \(O(n^{4\eta})\) cycles of length less than \(g\), with high probability.
    Hence, there exists a \(\Delta\)-regular graph satisfying both properties simultaneously.

    Following the degree-preserving switching used in the proof of Lemma~2.1 of Alon~\cite{alon-2010-on-constant-time-approximation-of-parameters}, we remove all its short cycles.
    The switchings preserve regularity and can be chosen so that they create no new short cycles.
    Since every switching removes only two edges, after \(O(n^{4\eta})\) switchings the independence number has increased by at most \(O(n^{4\eta})\).
    For a sufficiently small choice of \(\eta\), this is \(o(n\log\Delta/\Delta)\), so the resulting graph has the claimed girth and independence number after adjusting the constants.
    The full argument is given in the appendix (see \Cref{lem:app:high-girth-independent}).
\end{proof}

On a high level, all our results are obtained by applying the following idea, which is a direct application of \Cref{thm:newlifting}.
\begin{observation}\label{obs:local-to-global}
    Let $\Pi^L$ be a problem, and $\Pi^G$ be a problem in the black-white formalism. Suppose that $\Pi^G$ is $2$-separated and $b$-dominant, and suppose that any $0$-round algorithm for $\Pi^G$ that has $0$ white error has expected black error at least $\varepsilon_0$.
    Let $\mathcal{G}$ be a family of $(\Delta,\delta)$-biregular graphs of girth $\Omega(\log_\Delta n)$, and assume, w.l.o.g., that $\Delta \ge \delta$.
    Suppose that, on $\mathcal{G}$, there is a $T$-round algorithm mapping a solution of $\Pi^L$ into a solution of $\Pi^G$, such that the expected black error is at most $\varepsilon$ and the white error is $0$.
    Then, $\Pi^L$, on $\mathcal{G}$, requires $\Omega\left(\min\left\{\log_\Delta n, \log_{2+b} \frac{\varepsilon_0}{\varepsilon}\right\}\right) - T$ rounds.
\end{observation}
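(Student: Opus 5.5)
The plan is a proof by contradiction via composition. Suppose $\Pi^L$ can be solved on $\mathcal{G}$ in $T'$ rounds. Composing this algorithm with the given $T$-round mapping yields a $(T'+T)$-round algorithm for $\Pi^G$ on $\mathcal{G}$: every white node first simulates the $\Pi^L$ algorithm on its $(T'+T)$-neighborhood, then applies the mapping. The composite algorithm has white error $0$ and expected black error at most $\varepsilon$. Its total expected error is therefore at most $\varepsilon$.

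We then invoke \Cref{thm:newlifting} for $\Pi^G$, which is $2$-separated and $b$-dominant. By hypothesis, every $0$-round algorithm with white error $0$ has expected error at least $\varepsilon_0$, which is exactly the $\varepsilon_0$ required there. The family $\mathcal{G}$ has girth at least $c\log_\Delta n$ for some constant $c>0$. The theorem gives $T'+T \ge c'\min\{\log_\Delta n, \log_{2+b}\varepsilon_0/\varepsilon\}$, and rearranging gives the claim.

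The only points that need care are bookkeeping. First, the composition must be a legitimate algorithm in the model used by \Cref{thm:newlifting}, namely randomized LOCAL with white nodes active. Two sub-cases arise.

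\begin{itemize}
\item If $\Pi^L$ is a problem on the underlying graph where black nodes also produce outputs, those outputs are computable by a neighboring white node in one extra round. This extra round is absorbed into $T$, or alternatively into the $O(1)$ hidden in the $\Omega$.
\item The mapping may use randomness. That randomness is simply part of the random bits, and the expected error is still taken over all bits.
\end{itemize}

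Second, if the algorithm for $\Pi^L$ is Monte Carlo, the mapping is assumed to handle its failures (as discussed in the applications). So we just use the stated guarantee on expected black error of the composed output. I expect no real obstacle; the main subtlety is ensuring the error bound of the mapping holds for the composed distribution, which follows since it is assumed for every solution (or in expectation) produced by the algorithm for $\Pi^L$.
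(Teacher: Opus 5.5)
Your proposal is correct and matches the paper's argument, which the paper only states as ``a direct application of \Cref{thm:newlifting}.'' Composing a $T'$-round algorithm for $\Pi^L$ with the $T$-round mapping gives a $(T'+T)$-round algorithm for $\Pi^G$ with white error $0$ and expected error at most $\varepsilon$. Applying \Cref{thm:newlifting} then gives $T'+T \ge c'\min\{\log_\Delta n, \log_{2+b}\varepsilon_0/\varepsilon\}$. Your bookkeeping remarks are consistent with how the paper uses the observation in its applications, and any extra constant rounds are absorbed into the asymptotic bound.
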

In other words, by exploiting \Cref{thm:newlifting}, we can directly prove lower bounds for \emph{local} problems, by exploiting a black-box lower bound that we obtain for \emph{global} problems. Essentially, the lower bound that we obtain is the logarithm of the ratio $\varepsilon_0/\varepsilon$ that we introduce in the mapping. For example, a lower bound for maximal matching immediately derives from the following observations:
\begin{itemize}
    \item Let $\Pi^L$ be the problem of maximal matching.
    \item Let $\Pi^G$ be the problem of perfect matching. It is $2$-separated and $1$-dominant.
    \item By \Cref{lem:graph-family-1}, there are families of graphs of girth $\Omega(\log_\Delta n)$ and independence number $O(n \log \Delta / \Delta)$. If we solve $\Pi^L$ in these graphs, unmatched nodes form an independent set, and hence they are at most $O(n \log \Delta / \Delta)$.
    \item Any $0$-round algorithm for $\Pi^G$ produces $\Omega(n)$ errors in expectation.
    \item Consider the following mapping from  $\Pi^L$ to  $\Pi^G$: unmatched nodes mark an arbitrary incident edge as matched. This produces $O(n \log \Delta / \Delta)$ errors.
    \item We get that $\log \frac{\varepsilon_0}{\varepsilon} \in \Omega(\log \Delta)$. 
\end{itemize}

Hence, in order to prove a lower bound, it is sufficient to find a suitable pair $(\Pi^L,\Pi^G)$ of problems.
We now provide some applications of this idea. 

An easy application of our new black-box lifting theorem is a lower bound for maximal $\frac{1}{k}$-integral matching, which is the problem of computing an assignment from $\{0,\frac{1}{k},\frac{2}{k},\ldots,1\}$ to each edge, such that the values assigned to the edges incident to a node sum to at most $1$, and no edge can be increased. 
\kintegral*
\begin{proof}
   We consider the setting in which $\Delta > k$.
   Let $\Pi^L$ be the  maximal $\frac{1}{k}$-integral matching problem. Let $\Pi^G$ be the \emph{perfect} $\frac{1}{k}$-integral matching problem, that is, the problem of outputting a fraction from $\{0,\frac{1}{k},\frac{2}{k},\ldots,1\}$ on each edge, such that, for each node, the sum of the values assigned to its incident edges is exactly $1$. We call \emph{unsaturated} nodes the nodes for which their edges do not sum to exactly $1$.

   We claim that $\Pi^G$ is $2$-separated and $(k+1)$-dominant.
   Its edge constraint contains all pairs $\{\frac{i}{k}, \frac{i}{k}\}$ for $0 \le i \le k$, which clearly satisfies $2$-separation (and trivially it is $2$-dominant, and $2 \le k+1$).
   Its node constraint contains all multisets of size $\Delta$ where each label is from $\{0,\frac{1}{k},\frac{2}{k},\ldots,1\}$ and the sum is exactly $1$.
   Observe that $k$-dominance is satisfied: each configuration $C$ must satisfy $m_C(0) \ge \Delta - k$.
   Now, suppose for a contradiction that the node constraint is not $2$-separated. This means that there exist two configurations $C_1 = \{\ell_1,\ldots,\ell_\Delta\}$ and $C_2 = \{\ell'_1,\ldots,\ell'_\Delta\}$ that satisfy $d(C_1,C_2) = 1$. This means $\sum_{1 \le i \le \Delta} \ell_i \neq \sum_{1 \le i \le \Delta} \ell'_i$, and hence that at least one of the two configurations does not sum to $1$, a contradiction.

   Consider the family of graphs from \Cref{lem:graph-family-1}. In such graphs, the independence number is $O(n \log \Delta / \Delta)$. In any solution for $\Pi^L$, it must hold that unsaturated nodes form an independent set: for a contradiction, suppose $u$ and $v$ are neighboring unsaturated nodes; we could increase the edge between them by at least $1/k$, contradicting maximality.

   Consider the following mapping from $\Pi^L$ to $\Pi^G$: each unsaturated node $v$ picks an arbitrary node-edge pair labeled $0$ and increases it such that $v$ becomes saturated. We obtain a solution for $\Pi^G$ in which there are no errors on the nodes, and $O(n \log \Delta / \Delta)$ errors on the edges, because there are $O(n \log \Delta / \Delta)$ unsaturated nodes, and each of them modifies a single edge.

   Now consider an arbitrary $0$-round algorithm for $\Pi^G$ with $0$ white error.
   Any $0$-round algorithm must produce an output solely as a function of the random bits it sees on the node.
   On each node, since $\Delta > k$, and since it must output labels that sum to $1$, it must output $0$ on at least one edge, and non-$0$ on at least one edge. This implies that, if we look at an arbitrary edge, we get that it must get different labels with probability at least $1/\Delta$. Since the number of edges is $n \Delta /2$, we get that the expected error is $\Omega(n)$.
    Hence, the claim follows by applying \Cref{obs:local-to-global} (and the $\Omega\left(\sqrt{\log n}\right)$ lower bound is obtained by picking $\Delta = 2^{\sqrt{\log n}}$).
\end{proof}

We now consider the maximal $H$-packing problem, where the goal is to find a maximal collection of vertex-disjoint copies of $H$ in the input graph, where $H$ is a fixed tree. In the following, we assume that the degree of $H$ is strictly less than the degree of $G$ (or, in other words, in order to prove our lower bound, we consider a family of graphs with sufficiently large degree).
\hpacking*
\begin{proof}
    Let $\Pi^L$ be the maximal $H$-packing problem.
    Let $\Pi$ be the perfect $H$-packing problem, that is,  maximal $H$-packing restricted to solutions in which each node is in a copy of $H$.
    We prove that there exists a problem $\Pi^G$ in the black-white formalism that is $2$-separated and $O_H(1)$-dominant, and such that, given a solution to $\Pi$, we can solve $\Pi^G$ in $O_H(1)$ rounds, where $O_H(1)$ hides the dependencies on $H$.

    We define a labeling $\ell : V_H \times E_H \rightarrow \Sigma$ for some finite set $\Sigma$ of labels to be defined later.
    Consider the incidence graph $H' = (V_{H'},E_{H'})$ of $H$, that is, the graph where there is a white node for each node in $V_H$ and a black node for each edge in $E_H$, connected in the natural way. Let $c \in V_{H'}$ be the node with minimum eccentricity in $H'$ (note that this node is unique).
    For each node $v \in V_{H'}$, let $p(v)$ be the edge on the shortest path from $v$ to $c$, and let $p(c)$ be undefined.
    For each node $v \neq c$, let $t(v)$ be the subtree rooted at $v$ and induced by the nodes reachable from $v$ by not using the edge $p(v)$. Let $\mathcal{T} = \{t(v) \mid v \in V_{H'}\setminus \{c\}\}$ be the set of subtrees appearing in $H'$, where two trees are considered equal if they are equal up to isomorphism (where the isomorphism preserves the root and the colors of the nodes). Let $\sigma$ be an arbitrary bijective function from $\mathcal{T}$ to $\{1,\ldots,|\mathcal{T}|\}$. In other words, $\sigma$ assigns a unique integer to each possible \emph{type} of subtree.
    For each edge $e = \{u,v\}$, let $u$ be the node satisfying $p(u) = e$. We label $e$ with $\sigma(t(u))$. Then, we map back this labeling to $H$ in the natural way (and $\Sigma$ is the set of labels that we used).

    We define $\Pi^G$ as follows. The set of labels is $\Sigma \cup \{\bot\}$.
    If $H$ contains an edge labeled $\{\ell,\ell'\}$, we allow the edge configuration $\{\ell,\ell'\}$. Moreover, we allow the edge configuration $\{\bot,\bot\}$.
    If $H$ contains a node labeled $\{\ell_1,\ldots,\ell_d\}$, we allow the node configuration $\{\ell_1,\ldots,\ell_d,\bot,\ldots,\bot\}$ of size $\Delta$.
    \Cref{fig:perfect-h-packing-encoding} gives an example of this encoding.
    \input{figures/perfect-h-packing.tex}
    We claim that $\Pi^G$ is $2$-separated and $O_H(1)$-dominant. 
    
    We first handle the configuration at the center $c$.
    Let $r$ be the radius of $H'$. Since $c$ is its unique center,
    at least $2$ subtrees rooted at children of $c$ have height exactly $r-1$,
    while all subtrees rooted at distance $\ge2$ from $c$
    have height $\le r-2$.
    If $c$ is black, it has exactly two children, and both corresponding
    subtree types occur only at children of $c$. Hence, both labels in
    the edge configuration of $c$ occur in no other edge configuration,
    so its distance from every other edge configuration is $2$.
    Similarly, if $c$ is white, its node configuration contains at least two
    labels (possibly the same twice) that occur in no other node
    configuration.
    In either case, its configuration has distance at least two
    from every other configuration of the same constraint.
    We can thus restrict the remaining argument to configurations
    arising at nodes other than $c$.

    Suppose for a contradiction that $2$-separation does not hold on the edge constraint. Then there exist two allowed configurations $\{\ell,\ell_1\}$ and $\{\ell,\ell_2\}$. This implies that there exist two subtrees $t_1$ and $t_2$ of type $\ell$ such that, if we connect a black node to the root of $t_1$ we obtain a different tree from the one obtained by adding a black node to the root of $t_2$, which is a contradiction, because $t_1 = t_2$.
    Now consider the node configurations. Suppose there exist two configurations $C_1$ and $C_2$ such that $d(C_1,C_2) = 1$. From $C_1$ and $C_2$ we extract the two trees from which they have been constructed, $t_1$ and $t_2$.
    They must have different types, $\ell_1$ and $\ell_2$.
    Consider two cases, either the roots of $t_1$ and $t_2$ have the same degree, or not. In the former case, since $t_1$ and $t_2$ have different types but same degree, this implies that the types of the children of the roots of $t_1$ and $t_2$ differ. Hence, there are at least two differences between $C_1$ and $C_2$, the label of the root and the label of the child.
    In the latter case, $C_1$ and $C_2$ are of the following form. $C_1 = \{\ell_1,\ldots,\ell_d,\bot,\ldots,\bot\}$ and $C_2 = \{\ell'_1,\ldots,\ell'_{d'},\bot,\ldots,\bot\}$ with $d \neq d'$, w.l.o.g.\ $d < d'$. Moreover, it cannot hold that $\{\ell_1,\ldots,\ell_d\} \subset \{\ell'_1,\ldots,\ell'_{d'}\}$, since this would require $\sigma$ to assign the same label to a tree rooted at a black node and to a tree rooted at a white node. Hence, there are at least two differences.

    Consider an arbitrary solution of $\Pi^L$, and let $U$ be the subgraph induced by nodes that are not part of any copy of $H$. 
    We prove that $U$ can be colored with $O_H(1)$ colors.
    Let $h=|V(H)|$. By maximality, $U$ contains no copy of $H$.
  Every graph of minimum degree at least $h-1$ has at least $h$
  vertices and contains every $h$-vertex tree as a subgraph,
  by greedy embedding. Thus, no subgraph of $U$ has minimum
  degree at least $h-1$, so $U$ is $(h-2)$-degenerate and hence
  $(h-1)$-colorable.
    
   Consider the family of graphs from \Cref{lem:graph-family-1}. In such graphs, the independence number is $O(n \log \Delta / \Delta)$. Hence, any subset of nodes that is $O_H(1)$ colorable includes at most a fraction $O(\log \Delta / \Delta)$ of the nodes.

    Consider the following mapping from $\Pi^L$ to $\Pi^G$: 
    nodes that are part of some $H$ spend $O_H(1)$ rounds to compute their labeling in $H$, and output the associated configuration.
    Nodes not in $H$ output the configuration that a leaf of $H$ would output, in an arbitrary way.
    We obtain a solution for $\Pi^G$ in which there are no errors on the nodes, and $O(n \log \Delta / \Delta)$ errors on the edges, because there are $O(n \log \Delta / \Delta)$ nodes that are not part of any $H$, and each of them can create at most one
     edge error (on the edge not getting $\bot$).

    Now consider an arbitrary $0$-round algorithm for $\Pi^G$.
   Any $0$-round algorithm must produce an output solely as a function of the random bits it sees on the node.
   Observe that each valid node configuration contains at least two labels ($\bot$ and something else), and each label is edge compatible only with a single other label.
   Hence, an edge gets an error with probability at least $1/\Delta$, and thus the expected error is $\Omega(n)$.
    Hence, the claim follows by applying \Cref{obs:local-to-global} (and the $\Omega\left(\sqrt{\log n}\right)$ lower bound is obtained by picking $\Delta = 2^{\sqrt{\log n}}$).
\end{proof}

We now consider the bipartite maximal matching problem. While a lower bound of $\Omega(\log \Delta)$ is already known for maximal matching, we now show that it can be extended to bipartite $2$-colored graphs. For this purpose, we first prove the following statement.
\begin{lemma}
    For every sufficiently large even $n$ and every $3\le\Delta\le n^{1/4}$, there exists a bipartite $\Delta$-regular graph on $n$ nodes of girth $\Omega(\log_\Delta n)$ in which every maximal matching leaves at most an $O(\log\Delta/\Delta)$ fraction of the nodes unmatched.
\end{lemma}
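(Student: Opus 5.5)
The statement is exactly the content of \Cref{lem:app:bipartite-high-girth}, so I would follow the same probabilistic-method-plus-switching route used for \Cref{lem:graph-family-1}, adapted to preserve bipartiteness.

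\textbf{Random model.} Take two sides $L,R$ of $n/2$ nodes and build a random $\Delta$-regular bipartite multigraph via the configuration model: attach $\Delta$ half-edges to each node and take a uniform perfect matching between the half-edges of $L$ and those of $R$.

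\textbf{Key property replacing independence number.} In a maximal matching, the unmatched sets $U_L\subseteq L$ and $U_R\subseteq R$ have equal size and no edges between them. It therefore suffices to show that, with high probability, every pair $A\subseteq L$, $B\subseteq R$ with $|A|=|B|=s=\lceil C(n/2\Delta)\ln\Delta\rceil$ spans at least one edge. For fixed $A,B$, the probability that no half-edge of $A$ is matched into $B$ is a hypergeometric zero-probability, bounded by $(1-2s/n)^{s\Delta}\le e^{-2s^2\Delta/n}$. A union bound over $\binom{n/2}{s}^2\le (en/2s)^{2s}$ pairs leaves the exponent $-s(C\ln\Delta-2\ln(en/2s))$. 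Since $n/s=O(\Delta/\ln\Delta)$, this is negative and superpolynomially small once $C$ is a large enough constant.

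\textbf{Few short cycles.} For $g=\max\{3,\lfloor\gamma\log_\Delta n\rfloor\}$, I would count rooted oriented cycles of length $2k<g$ (length-$2$ cycles are parallel edges). There are at most $(n/2)^{2k}\Delta^{4k}$ candidates, each appearing with probability at most $(4/n\Delta)^{2k}$. Summing gives expected count $O(g(2\Delta)^g)=O(n^{1/2})$ for small $\gamma$, so Markov yields at most $n^{2/3}$ short cycles with probability $1-o(1)$. Fix an outcome $G''$ satisfying both properties.

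\textbf{Switching.} Repeatedly pick a short cycle $K$ with an edge $xy$ ($x\in L$), and an edge $uv$ ($u\in L$) that is at distance at least $g$ from $K$ and lies on no short cycle. Replace $xy,uv$ with $xv,yu$. This keeps the graph bipartite and regular, and a standard case analysis shows no new short cycle is created, as in \Cref{lem:app:high-girth-independent}. A suitable $uv$ exists because only $O(g n^{2/3}+g\Delta^{g+1})=o(n\Delta)$ edges are excluded. After at most $n^{2/3}$ switches the graph is simple with girth $\Omega(\log_\Delta n)$.

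\textbf{Maximal matchings in the final graph.} The final graph $G$ differs from $G''$ by at most $2n^{2/3}$ removed edges. Given a maximal matching in $G$, delete from $U_L$ the endpoints of removed edges running between $U_L$ and $U_R$. The remaining sets span no edges in $G''$, so $|U_L|\le C n\ln\Delta/(2\Delta)+2n^{2/3}$. Since $\Delta\le n^{1/4}$, the additive term is lower order.

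\textbf{Main obstacle.} The delicate step is the union-bound calculation: the constant $C$ must absorb $\ln(en/2s)\approx\ln\Delta$ uniformly over the whole range of $\Delta$, including small constant $\Delta$ and $\Delta$ near $n^{1/4}$.
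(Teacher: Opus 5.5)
Your proposal is correct and follows the paper's proof of \Cref{lem:app:bipartite-high-girth} essentially step by step: the same ingredients appear in the same order. These are the bipartite configuration model, the property that any $A\subseteq L$, $B\subseteq R$ of size $s$ span an edge (with the same hypergeometric and union-bound estimate), the count of rooted oriented short cycles followed by Markov's inequality, the bipartiteness-preserving switchings, and the final correction of $|U_L|$ by at most $2n^{2/3}$.

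The only slip is the bound $g(2\Delta)^g=O(n^{1/2})$, which fails when $g=3$ is forced by the maximum and $\Delta$ is close to $n^{1/4}$. Since only lengths $2k\le g-1$ contribute, the expected count is at most $g(2\Delta)^{g-1}$; for $g=3$ this is just $(2\Delta)^2\le 4n^{1/2}$, which is how the paper treats that case separately.
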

The proof of this lemma is deferred to the appendix. See \Cref{lem:app:bipartite-high-girth}.
It is similar to that of \cref{lem:app:high-girth-independent}; however, since we need to focus on bipartite graphs, we need to change the random graph model of interest, and we consider the bipartite configuration model.  

\bmm*
\begin{proof}
    Consider the bipartite perfect matching problem. In the black-white formalism it can be expressed by using the constraints $\constr{C}_W = \constr{C}_B = \{\{\mathrm{M},\mathrm{U},\ldots,\mathrm{U}\}\}$. That is, each edge is either labeled $\mathrm{M}$ (matched) or $\mathrm{U}$ (unmatched), and each node needs to have exactly one $\mathrm{M}$.
    Similarly as in the previous proofs, the error that we obtain when mapping bipartite maximal matching to perfect matching is $O(n \log \Delta / \Delta)$, because there are $O(n \log \Delta / \Delta)$ unmatched nodes, and again, any $0$-round algorithm has $\Omega(n)$ expected error. Thus, the claim follows.
\end{proof}

\section{Optimality of Our Approach}\label{sec:optimality}
We now discuss the optimality of our approach.
Recall that, in order to upper bound the error of the faster algorithm, we provided an upper bound on the discrepancy of its output from the output of the original algorithm.
We first show that, unless we change how we measure the error, $2$-separation is somehow necessary to provide a good upper bound on the discrepancy.
Let $\Pi = (\Sigma,\Delta,\delta,\constr{C}_W,\constr{C}_B)$.
If $\Pi$ is not $2$-separated, a valid black configuration of $\re(\Pi)$ is not necessarily composed of singletons, and in general it consists of nonempty label sets $L_1,\ldots,L_\delta$ such that every choice $\ell_i \in L_i$ gives a configuration $\{\ell_1,\ldots,\ell_\delta\} \in \constr{C}_B$.
Given random labels $X_1,\ldots,X_\delta$, we define the \emph{discrepancy} of this configuration, with $L_i$ assigned to port $i$, as
  \[
      \discr(L_1,\ldots,L_\delta)
      = \sum_{i=1}^{\delta} \pr{X_i \notin L_i}.
  \]
  This is the expected number of ports on which the original output lies outside the chosen label set, and it agrees with our earlier definition when every $L_i$ is a singleton.

  We restrict to the case $\delta=2$. For any pair of sets \(A,B\), we define their unordered product \(A \odot B\) as the set \(\{\{a,b\}: a \in A, b \in B\}\), where the elements are multisets.

  Moreover, for two labels $\ell_1,\ell_2$ we say that $\ell_1$ and $\ell_2$ are \emph{equivalent} if, for any $C \in \constr{C}_B$, if we replace any number of occurrences of $\ell_1$ in $C$ with $\ell_2$, and any number of occurrences of $\ell_2$ in $C$ with $\ell_1$, we obtain a configuration $C' \in  \constr{C}_B$. It is known that, if a problem $\Pi$ uses equivalent labels, then $\Pi$ can be normalized into an equivalent problem that does not use distinct equivalent labels.

  We show that, after this normalization, if the black constraint is not $2$-separated, then some independent label distributions have expected error at most $\varepsilon$, while every valid choice of label sets has discrepancy $\Omega(\sqrt{\varepsilon})$.
  Consequently, the discrepancy cannot always be bounded by a constant multiple of the original expected error.
  In more detail, we prove the following theorem.

\begin{theorem}\label{thm:limit1}
  Let $\Pi = (\Sigma,\Delta, 2,\constr{C}_W,\constr{C}_B)$ be a problem that does not contain distinct equivalent labels and whose black constraint is not $2$-separated.
  Then, for every sufficiently small $\varepsilon > 0$, there exist probability
  distributions $(p_{i,\ell})_{\ell\in\Sigma}$, one for each
  $i\in\{1,2\}$, with the following property.

  Let $X_1,X_2$ be independent random variables with
  $\pr{X_i=\ell}=p_{i,\ell}$. Then
  \[
      \expect*{
          d\bigl(\{X_1,X_2\},\constr{C}_B\bigr)
      }\le\varepsilon,
  \]
  but, for every two non-empty subsets \(L_1, L_2 \subseteq \Sigma\) such that \(L_1 \odot L_2 \subseteq \constr{C}_B\), we have that
  \[
      \discr(L_1,L_2) \ge \sqrt{\varepsilon}.
  \]
  \end{theorem}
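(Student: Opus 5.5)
The plan is to build the two distributions explicitly from a witness of non-$2$-separation. Each distribution will put mass about $1-\sqrt{\varepsilon}$ on one label and mass $\sqrt{\varepsilon}$ on one other label. The only invalid outcome will require both rare events, so the expected error is $O(\varepsilon)$. Any valid choice of label sets $(L_1,L_2)$ with discrepancy below $\sqrt{\varepsilon}$ would have to contain all four supported labels, and non-equivalence will make $L_1\odot L_2$ contain an invalid configuration.

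\emph{Step 1 (witnesses).} Since the black constraint is not $2$-separated and $\delta=2$, there are labels $a,b,c$ with $b\neq c$ and $\{a,b\},\{a,c\}\in\constr{C}_B$. Since $b$ and $c$ are distinct and not equivalent, some $C\in\constr{C}_B$ becomes invalid after swapping some occurrences of $b$ and $c$. With only two positions, a case analysis yields one of two situations, up to exchanging the roles of $b$ and $c$.
\begin{enumerate}
  \item[(i)] There is a label $x$ with $\{b,x\}\in\constr{C}_B$ and $\{c,x\}\notin\constr{C}_B$. Here $x\neq a$ because $\{a,c\}$ is valid, and $x=b$ or $x=c$ is allowed.
  \item[(ii)] We have $\{b,b\},\{b,c\}\in\constr{C}_B$ but $\{c,c\}\notin\constr{C}_B$, so only the simultaneous replacement of both occurrences fails. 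In this case I set $x:=c$. The identities needed below, $\{b,x\}\in\constr{C}_B$ and $\{c,x\}\notin\constr{C}_B$, then hold again.
\end{enumerate}
So in all cases we obtain labels $a\neq x$ and $b\neq c$ such that $\{a,b\},\{a,c\},\{x,b\}\in\constr{C}_B$ and $\{x,c\}\notin\constr{C}_B$.

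\emph{Step 2 (distributions and error).} Let $\eta=\sqrt{\varepsilon}$, with $\varepsilon$ small enough that $\eta<1/2$. Port $1$ outputs $a$ with probability $1-\eta$ and $x$ with probability $\eta$. Port $2$ outputs $b$ with probability $1-\eta$ and $c$ with probability $\eta$. Of the four possible outcomes $\{a,b\},\{a,c\},\{x,b\},\{x,c\}$, only $\{x,c\}$ is invalid. Its distance to $\constr{C}_B$ is at most $1$, since changing $c$ to $b$ gives the valid $\{x,b\}$. Therefore
\[
\expect*{d\bigl(\{X_1,X_2\},\constr{C}_B\bigr)}\le\eta^2=\varepsilon .
\]

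\emph{Step 3 (discrepancy lower bound).} Let $L_1,L_2$ be nonempty with $L_1\odot L_2\subseteq\constr{C}_B$, and suppose $\discr(L_1,L_2)<\eta$.
\begin{enumerate}
  \item If $a\notin L_1$, then $\pr{X_1\notin L_1}\ge 1-\eta>\eta$, a contradiction. If $x\notin L_1$, then $\pr{X_1\notin L_1}\ge\eta$, again a contradiction. Hence $\{a,x\}\subseteq L_1$.
  \item By the same argument, $\{b,c\}\subseteq L_2$.
  \item Then $\{x,c\}\in L_1\odot L_2\subseteq\constr{C}_B$, contradicting Step 1.
\end{enumerate}
Hence $\discr(L_1,L_2)\ge\eta=\sqrt{\varepsilon}$.

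The main obstacle is Step 1: turning the non-equivalence hypothesis, which allows replacing \emph{any number} of occurrences, into the concrete witness $x$. One has to handle the degenerate cases $x\in\{b,c\}$ and the case where only the double replacement $\{b,b\}\to\{c,c\}$ breaks validity, which is exactly why case (ii) uses $x:=c$. One must also check that the two ports can be treated as independent, unordered positions.
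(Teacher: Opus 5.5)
Your proof is correct and uses the same construction as the paper, with the two ports swapped: a shared label, two rare labels whose only invalid combination is $\{x,c\}$ (the paper's $\{y,w\}$), expected error $\varepsilon$, and the observation that $L_1 \odot L_2$ cannot contain both rare labels, so one of them is missed with probability at least $\sqrt{\varepsilon}$. Your Step 1 spells out what the paper compresses into a ``w.l.o.g.''\ appeal to non-equivalence, and your case (ii) is simply the instance $x=c$ of case (i); Step 3 can be shortened to the rare labels alone, which also makes the condition $\eta<1/2$ unnecessary.
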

\begin{proof}
    Since \(\constr{C}_B\) is not 2-separated, there exist two configurations \(\{x,z\},\{y,z\} \in \constr{C}_B\) such that \(x \neq y\).
    Also, since \(\constr{C}_B\) does not contain distinct equivalent labels, there exists a label \(w \in \Sigma\) such that, w.l.o.g., \(\{x,w\} \in \constr{C}_B\) and \(\{y,w\} \notin \constr{C}_B\).
    Define
    \begin{align*}
        X_1  & = 
        \begin{cases}
            x &  \quad \text{with probability } 1-\sqrt{\varepsilon}; \\
            y & \quad \text{with probability } \sqrt{\varepsilon};
        \end{cases}
        \\
        X_2  & = 
        \begin{cases}
            z &  \quad \text{with probability } 1-\sqrt{\varepsilon}; \\
            w & \quad \text{with probability } \sqrt{\varepsilon}.
        \end{cases}
    \end{align*}
    The configuration \((X_1,X_2)\) is valid if and only if \((X_1,X_2) \neq (y,w)\).
    
    Consider any pair of sets \(L_1, L_2 \subseteq \Sigma\) such that \(L_1 \odot L_2 \subseteq \constr{C}_B\).
    We cannot have that \(y \in L_1\) and \(w \in L_2\), because otherwise \(\{y,w\} \in \constr{C}_B\).
    Hence, either \(y \notin L_1\), or \(w \notin L_2\).
    If \(y \notin L_1\), then
    \[
        \pr{X_1 \notin L_1} \ge \pr{X_1 = y} = \sqrt{\varepsilon}.
    \]
    If, instead, \(w \notin L_2\), then 
    \[
        \pr{X_2 \notin L_2} \ge \pr{X_2 = w} = \sqrt{\varepsilon}.
    \]
    Hence, 
    \[
    \pr{X_1 \notin L_1} + \pr{X_2 \notin L_2} \ge \sqrt{\varepsilon},
    \]
    proving the claim.
\end{proof}

If we were able to prove a lower bound of $\Omega\left(\min\left\{\log_\Delta n, \log \frac{\varepsilon_0}{\varepsilon} \right\}\right)$ by exploiting only $2$-separation, we would be able to obtain significantly more lower bounds for local problems. Unfortunately, $2$-separation by itself is not sufficient.
\limittwo*
\begin{proof}
    Consider the problem of computing an odd orientation, that is, an orientation of the edges such that all nodes have an odd number of outgoing edges, on $\Delta$-regular graphs, where $\Delta \ge 4$ is even.
    It can be encoded in the node-edge formalism by allowing the edge configuration $\{\mathrm{I},\mathrm{O}\}$, that is, each edge is marked \emph{incoming} by one endpoint and \emph{outgoing} by the other, and by allowing all node configurations of size $\Delta$ containing an odd number of $\mathrm{O}$. Observe that the problem is $2$-separated and $(\Delta/2)$-dominant, but any valid dominance parameter is at least $\Delta/2-1$; hence it is
    not $o(\Delta)$-dominant.

    Observe that, since $\Delta$ is even, each node must have at least one incoming and at least one outgoing edge.
    If both endpoints of an edge label the edge $\mathrm{O}$, or they both label the edge $\mathrm{I}$, this produces an error on the edge. Any $0$-round algorithm must label edges without any coordination with the neighbors. This implies that each edge has an error with probability at least $1/2$. Hence, the expected number of errors $\varepsilon_0$ on the edges must be $\Omega(n)$.

    We give an algorithm that computes, on graphs of girth $\Omega(\log_\Delta n)$, a solution in $O\left(1+ \log_\Delta \frac{\varepsilon_0}{\varepsilon}\right)$ rounds, where $\varepsilon$ is the desired error.

    The algorithm proceeds as follows. Let $T$ be a value to be fixed later.
    Each node samples itself with probability $1 / (\Delta-1)^T$.
    Let $S$ be the set of sampled nodes. We call these nodes \emph{centers}. Each node spends $2T$ rounds to find the nearest node in $S$, if it exists.
    Consider the clustering obtained by putting each node in the cluster of its nearest center (breaking ties by assigning each center an independent uniform random priority from $[0,1]$ and choosing the center with the smallest priority), and by forming a singleton cluster if no center was found (and each singleton acts as the center of its own cluster).
    Orient inter-cluster edges arbitrarily. Then, in each cluster, brute force a solution where there is at most one error on one edge incident to the center (and no other error). Such a solution always exists, because we can process the nodes inwards, and make a node happy by picking the correct orientation for the edge towards the center. After this process, there can be at most one error on the center, which can fix itself by producing one edge error. This takes $O(T)$ rounds, since each cluster has radius at most $2T$.
    
    Note that, for $T$ small-enough compared with the girth of the graph, each node sees a tree. Hence, within $2T$ rounds, each node sees at least $(\Delta-1)^{2T}$ nodes.
    The probability that a node did not find a center within $2T$ rounds is at most:
    \[
        \left(1-\frac{1}{(\Delta-1)^T}\right)^{(\Delta-1)^{2T}} \le e^{-(\Delta-1)^T} \le \frac{1}{(\Delta-1)^T}.
    \]
    Hence, the expected error is at most the expected number of centers, plus the expected number of nodes that did not find a center within $2T$ rounds, which is at most $n/\Delta^{\Omega(T)}$, 
    and by picking $T \in \Theta\left(1+\log_\Delta \frac{\varepsilon_0}{\varepsilon}\right)$ we obtain that the error is at most $\varepsilon$.

\end{proof}

\section{Open Questions}\label{sec:open}
We now discuss possible future directions.
In the following, we use $p$ to denote $\varepsilon / \varepsilon_0$, and to simplify the explanation, we set aside possible differences between $p$ and the standard failure probability.

\paragraph{Alternatives to domination.}
In \Cref{sec:applications}, we proved lower bounds for ``local'' problems $\Pi^L$, by finding suitable $2$-separated $O(1)$-dominant problems $\Pi^G$ that we can solve with small error, given a solution for $\Pi^L$.
The reason why we required $O(1)$-domination is that, when performing the mapping from $\Pi^L$ to $\Pi^G$ we typically get an error in the ballpark of $n / \Delta$ (ignoring logarithmic factors). If we only had a lower bound of $\Omega(\log_\Delta 1/p)$ for $\Pi^G$, we would not get any superconstant lower bound (i.e., $\Omega(\log_\Delta \Delta)$). Hence, for this approach to work, it is crucial to have a lower bound of $\Omega(\log 1/p)$ for $\Pi^G$ (or at least a base of the logarithm that is subpolynomial in $\Delta$).
As shown in \Cref{thm:limit2}, $2$-separation alone is not sufficient to improve the $\Omega(\log_\Delta 1/p)$ lower bound to $\Omega(\log 1/p)$. However, we do not know if $O(1)$-domination is \emph{necessary}.
We would like to understand if there are other types of restriction that we can use to obtain the improvement.
\begin{openquestion}
Which properties can be combined with $2$-separation to obtain lower bounds that are strictly better than $\Omega(\log_\Delta 1/p)$?
\end{openquestion}

\paragraph{A concrete example.}
In order to make the question more concrete, we now describe a simple problem, that we call \emph{exact splitting}, for which we do not know the exact complexity. Consider the problem of coloring the edges red and blue, such that each node has exactly $\Delta/2$ edges of each color, in $\Delta$-regular graphs where $\Delta$ is even.
This problem can be defined in the node-edge formalism as follows. The node constraint allows only $C = \{\mathrm{R},\ldots,\mathrm{R},\mathrm{B},\ldots,\mathrm{B}\}$, where $m_C(\mathrm{R}) = m_C(\mathrm{B}) = \Delta/2$. The edge constraint allows $\{\mathrm{R},\mathrm{R}\}$ and $\{\mathrm{B},\mathrm{B}\}$. This problem is clearly $2$-separated, but it is very far from being $O(1)$-dominant.
However, this looks like a hard problem, for which we do not expect an $O(\log_\Delta 1/p)$ algorithm, but we do not know how to prove an $\omega(\log_\Delta 1/p)$ lower bound with our current techniques.
\begin{openquestion}
    What is the complexity of the exact splitting problem, as a function of $p$?
\end{openquestion}

\paragraph{Alternatives to separation.}
As shown in \Cref{thm:limit1}, $2$-separation is \emph{necessary} for our techniques to work.
Nevertheless, we would like to understand whether we can obtain similar results for problems that are not $2$-separated.
For example, we know that there are problems for which an $\Omega(\log_\Delta n)$ lower bound holds, but that are not $2$-separated. An example of such a problem is the problem of orienting all edges such that all nodes have at least $\Delta-1$ outgoing edges. Such a problem does not admit a solution in high-girth regular graphs, and this is known to imply an $\Omega(\log_\Delta n)$ lower bound for $\Delta \ge 3$, even on trees in which every node has degree either $1$ or $\Delta$, with leaves unconstrained. However, we do not know how to obtain such a lower bound directly via round elimination.
\begin{openquestion}
    How can we obtain $\Omega(\log_\Delta 1/p)$ lower bounds for problems that are not $2$-separated? 
\end{openquestion}

\paragraph{MIS on regular high-girth graphs.}
A natural problem for which we cannot obtain a lower bound via our techniques is Maximal Independent Set on $\Delta$-regular graphs of girth $\Omega(\log_\Delta n)$. We know that obtaining a lower bound of $\Omega\left(\min\left\{\log_\Delta n,\log \Delta\right\}\right)$ is not possible, because there exists a better algorithm \cite{KhourySchild-MIS-26}. However, we would like to understand if a lower bound of $\Omega\left(\min\left\{\log_\Delta n,\log \Delta / \log \log \Delta\right\}\right)$ holds (which is the lower bound known in the non-regular case, via KMW).
We failed to find a problem $\Pi^G$ that we can map to that is $2$-separated and $O(\log \Delta)$-dominant (which would suffice to obtain the lower bound). We would like to understand if such a reduction exists, or if we can find alternatives to separation or to domination that allow us to prove the required lower bound.
\begin{openquestion}
    What is the complexity of MIS in regular high-girth graphs?
\end{openquestion}

\paragraph{Hard problems with large failure probability.}
The approach that we used in \Cref{sec:applications} suggests the following question: what is the complexity of ``hard'' problems, if we allow large failure probability?
As shown, we can use hard problems that we cannot solve in $o(\log 1/p)$ to prove lower bounds for ``local'' problems.
Typically, for problems studied in the LOCAL model, once we see that they are ``global'', we do not investigate them much further. In our work, we show that it can still be very useful to determine their complexity as a function of the required failure probability. Partial results on this topic have been obtained: a recent work studied what is the best local failure probability one can achieve for $2$-coloring cycles, if only $1$ round of communication is allowed \cite{flin-raevskaya-etal-2026-brief-announcement-2-coloring}.
We suggest a systematic study of ``hard'' problems as a function of the required failure probability.
\begin{openquestion}
    Let $\Pi$ be a problem for which we know a deterministic $\Omega(\log_\Delta n)$ lower bound. What is its complexity as a function of the required failure probability?
\end{openquestion}

\paragraph{Different notions of error.}
In order to achieve our results, we needed to introduce a new notion of error, that does not just look at whether a node fails or not, but it counts how many labels are wrong.
There could be other notions of error that allow us to obtain even stronger lower bounds. For example, there could be other notions of error that give $\Omega(\log 1/p)$ lower bounds without requiring $2$-separation or $O(1)$-domination.
In fact, we point out that, even assuming $2$-separation and $O(1)$-domination, by using the standard notion of failure probability, we would have \emph{not} been able to prove $\Omega(\log 1/p)$ lower bounds.
\begin{openquestion}
    Are there other measures of error that we can exploit in order to get automatic $\Omega(\log 1/p)$ lower bounds?
\end{openquestion}

\paragraph{Direct round elimination lower bounds.}
While it is interesting that we can use the hardness of global problems to prove lower bounds for local problems, it would still be interesting to obtain direct lower bounds via round elimination. In more detail, consider the maximal matching problem. We know how to obtain a randomized lower bound of $\Omega\left(\min\left\{\Delta,\log_\Delta \log n\right\}\right)$ via round elimination \cite{balliu-brandt-etal-2021-lower-bounds-for-maximal}, and we know how to obtain a lower bound of  $\Omega\left(\min\left\{\log \Delta,\log_\Delta n\right\}\right)$ if we first map a solution of maximal matching to perfect matching, and then we apply round elimination to perfect matching. However, we still \emph{do not} know how to get the $\Omega\left(\min\left\{\log \Delta,\log_\Delta n\right\}\right)$  lower bound by applying round elimination directly to maximal matching.
For this purpose, we would need to perform a better failure probability analysis for problems that are not $2$-separated  and, more fundamentally, are not even fixed points. Or, maybe, we would need to introduce a new form of round elimination that explicitly tracks the probability that a configuration is used.
\begin{openquestion}
    Can we get an $\Omega\left(\min\left\{\log \Delta,\log_\Delta n\right\}\right)$  randomized lower bound for maximal matching, via a direct application of round elimination?
\end{openquestion}

%% file: figures/perfect-matching.tex
\begin{figure}[tbp]
  \centering
  \begin{subfigure}[t]{.48\textwidth}
    \centering
    \begin{tikzpicture}[x=1.15cm,y=1.15cm]
      \foreach \v/\x/\y in {a/-1.8/1.5,b/1.8/1.5,c/1.8/-1.5,d/-1.8/-1.5,
                            e/-.75/.62,f/.75/.62,g/.75/-.62,h/-.75/-.62}
        \coordinate (\v) at (\x,\y);
      \foreach \u/\v in {a/b,b/c,c/d,d/a,e/f,f/g,g/h,h/e}
        \draw[formalism edge] (\u)--node[formalism label] {$\mathrm U$} (\v);
      \foreach \u/\v in {a/e,b/f,c/g,d/h}
        \draw[formalism matched] (\u)--node[formalism label,text=figureblue] {$\mathrm M$} (\v);
      \foreach \v in {a,c,f,h} \node[formalism white] at (\v) {};
      \foreach \v in {b,d,e,g} \node[formalism black] at (\v) {};
    \end{tikzpicture}
    \[
      \begin{aligned}
        \constr{C}_W&=\bigl\{\{\mathrm M,\mathrm U,\mathrm U\}\bigr\},\\
        \constr{C}_B&=\bigl\{\{\mathrm M,\mathrm U,\mathrm U\}\bigr\}.
      \end{aligned}
    \]
    \caption{Black-white formalism.}
  \end{subfigure}\hfill
  \begin{subfigure}[t]{.48\textwidth}
    \centering
    \begin{tikzpicture}[x=1.15cm,y=1.15cm]
      \foreach \v/\x/\y in {a/-1.8/1.5,b/1.8/1.5,c/1.8/-1.5,d/-1.8/-1.5,
                            e/-.75/.62,f/.75/.62,g/.75/-.62,h/-.75/-.62}
        \coordinate (\v) at (\x,\y);
      \foreach \u/\v in {a/b,b/c,c/d,d/a,e/f,f/g,g/h,h/e}
        \draw[formalism edge] (\u)--
          node[formalism label,pos=.23] {$\mathrm U$}
          node[formalism label,pos=.77] {$\mathrm U$} (\v);
      \foreach \u/\v in {a/e,b/f,c/g,d/h}
        \draw[formalism matched] (\u)--
          node[formalism label,pos=.23,text=figureblue] {$\mathrm M$}
          node[formalism label,pos=.77,text=figureblue] {$\mathrm M$} (\v);
      \foreach \v in {a,b,c,d,e,f,g,h} \node[formalism white] at (\v) {};
    \end{tikzpicture}
    \[
      \begin{aligned}
        \constr{C}_W&=\bigl\{\{\mathrm M,\mathrm U,\mathrm U\}\bigr\},\\
        \constr{C}_B&=\bigl\{\{\mathrm M,\mathrm M\},\{\mathrm U,\mathrm U\}\bigr\}.
      \end{aligned}
    \]
    \caption{Node-edge formalism.}
  \end{subfigure}
  \caption{Perfect matching in the two formalisms, illustrated on the same $3$-regular graph.
    Thick blue edges are matched; all other edges are unmatched.
    In (a), each edge has one label, and every white and black node sees exactly one $\mathrm M$.
    In (b), each node-edge pair has a label; every node sees exactly one $\mathrm M$,
    and the two labels on each edge agree.
    Subdividing each edge in (b) by a black node gives its black-white representation,
    with white degree $3$ and black degree $2$.}
  \label{fig:perfect-matching-formalisms}
\end{figure}

%% file: figures/perfect-h-packing.tex
\begin{figure}[tbp]
  \centering
  \begin{subfigure}{\textwidth}
    \centering
    \begin{tikzpicture}[x=1cm,y=1cm]
      % The marked root is part of the rooted, color-preserving type.
      \foreach \x/\lab in {0/1,3/2,6/3,9/4}
        \node[font=\small] at (\x,.55) {$\sigma(t)=\lab$};
      \node[formalism white] at (0,0) {};
      \draw[formalism root] (0,0) circle (5pt);
      \draw[formalism edge] (3,0)--(3,-.5);
      \node[formalism black] at (3,0) {};
      \node[formalism white] at (3,-.5) {};
      \draw[formalism root] (3,0) circle (5pt);
      \foreach \x in {-.5,.5} {
        \draw[formalism edge] (6,0)--(6+\x,-.5)--(6+\x,-1);
        \node[formalism black] at (6+\x,-.5) {};
        \node[formalism white] at (6+\x,-1) {};
      }
      \node[formalism white] at (6,0) {};
      \draw[formalism root] (6,0) circle (5pt);
      \draw[formalism edge] (9,0)--(9,-.5);
      \foreach \x in {-.5,.5} {
        \draw[formalism edge] (9,-.5)--(9+\x,-1)--(9+\x,-1.5);
        \node[formalism black] at (9+\x,-1) {};
        \node[formalism white] at (9+\x,-1.5) {};
      }
      \node[formalism black] at (9,0) {};
      \node[formalism white] at (9,-.5) {};
      \draw[formalism root] (9,0) circle (5pt);
    \end{tikzpicture}
    \caption{The four rooted subtree types; a blue ring marks each root.}
  \end{subfigure}

  \medskip
  \begin{subfigure}[t]{.48\textwidth}
    \centering
    \begin{tikzpicture}[x=1.05cm,y=1.05cm]
      \foreach \v/\x/\y in {c/0/0,a/-1.5/-1.6,b/1.5/-1.6,s/0/-1.9,
                            d/-2.25/-3.2,e/-.75/-3.2,f/.75/-3.2,g/2.25/-3.2}
        \coordinate (\v) at (\x,\y);
      \foreach \u/\v/\p/\q in {c/a/4/3,c/b/4/3,c/s/2/1,
                                a/d/2/1,a/e/2/1,b/f/2/1,b/g/2/1} {
        \coordinate (m\v) at ($(\u)!.5!(\v)$);
        \draw[formalism edge] (\u)--node[formalism label] {$\p$} (m\v)
          --node[formalism label] {$\q$} (\v);
        \node[formalism black] at (m\v) {};
      }
      \foreach \v in {c,a,b,s,d,e,f,g} \node[formalism white] at (\v) {};
      \node[above,font=\small] at (c) {$c$};
    \end{tikzpicture}
    \caption{The incidence graph $H'$, rooted at its center $c$.}
  \end{subfigure}\hfill
  \begin{subfigure}[t]{.48\textwidth}
    \centering
    \begin{tikzpicture}[x=1.05cm,y=1.05cm]
      \foreach \v/\x/\y in {c/0/0,a/-1.5/-1.6,b/1.5/-1.6,s/0/-1.9,
                            d/-2.25/-3.2,e/-.75/-3.2,f/.75/-3.2,g/2.25/-3.2}
        \coordinate (\v) at (\x,\y);
      \foreach \u/\v/\p/\q in {c/a/4/3,c/b/4/3,c/s/2/1,
                                a/d/2/1,a/e/2/1,b/f/2/1,b/g/2/1}
        \draw[formalism edge] (\u)--
          node[formalism label,pos=.23] {$\p$}
          node[formalism label,pos=.77] {$\q$} (\v);
      \foreach \v in {c,a,b,s,d,e,f,g} \node[formalism white] at (\v) {};
      \node[above,font=\small] at (c) {$c$};
    \end{tikzpicture}
    \caption{The induced labels on the node-edge pairs of $H$.}
  \end{subfigure}

  \smallskip
  \begin{equation*}
    \begin{aligned}
      \Sigma&=\{1,2,3,4,\bot\},\\
      \constr{C}_W&=\bigl\{\{4,4,2,\bot\},\{3,2,2,\bot\},
                                      \{1,\bot,\bot,\bot\}\bigr\},\\
      \constr{C}_B&=\bigl\{\{4,3\},\{2,1\},\{\bot,\bot\}\bigr\}.
    \end{aligned}
  \end{equation*}
  \caption{Encoding perfect $H$-packing for an eight-node tree $H$ that contains the same subtree multiple times.
    In (b), each edge is labeled by the type of the subtree below it.
    Types $3$ and $4$ each occur twice, while types $1$ and $2$ each occur five times,
    including along the short middle branch.
    The constraints below the diagrams are for perfect $H$-packing in a $4$-regular input graph.
    Edges outside the copies of $H$ receive $\bot$ at both endpoints.}
  \label{fig:perfect-h-packing-encoding}
\end{figure}